\newtheorem{lemma}{Lemma}
\newtheorem{nc}{Necessary Condition}
\newtheorem{nc'}{Necessary Condition}
\newtheorem{corollary}{Corollary}
\newtheorem{theorem}{Theorem}
\begin{document}
\centerline{\LARGE{\bf Some Results On Point Visibility Graphs  \footnote{An extended abstract of this paper
appeared in the proceedings of the Eighth International Workshop on Algorithms and Computation, pp.163-175, 2014 \cite{recogpvg-2014}.}}}

\vskip 0.7 in
\begin{center}
\mbox{\begin{minipage} [b] {2.5in}
\centerline{Subir Kumar Ghosh
}
\centerline{School of Technology \& Computer Science}
\centerline{Tata Institute of Fundamental Research}
\centerline{Mumbai 400005, India}
\centerline{ghosh@tifr.res.in}
\end{minipage}}\hspace{0.8in}
\mbox{\begin{minipage} [b] {2.5in}
\centerline{Bodhayan Roy}
\centerline{ School of Technology \& Computer Science}
\centerline{ Tata Institute of Fundamental Research}
\centerline{Mumbai 400005, India}
\centerline{bodhayan@tifr.res.in}
\end{minipage}}

\end{center}

\vskip 0.3 in


\begin{abstract}
In this paper, we present three necessary conditions for recognizing   point visibility graphs.
We show that this recognition problem lies in PSPACE.
We state new properties of point visibility graphs along with some known properties that are important
in understanding point visibility graphs.
For planar point visibility graphs, we present a 
complete characterization which leads to a linear time recognition and reconstruction algorithm. 
\end{abstract}

\section{Introduction}
\noindent The visibility graph is a fundamental structure studied in the field of computational geometry  
and geometric graph theory \cite{bcko-cgaa-08, g-vap-07}. 
 Some of the early applications of visibility graphs included computing 
Euclidean shortest paths in the presence of obstacles \cite{lw-apcf-79} and decomposing 
two-dimensional shapes into clusters \cite{sh-ddsg-79}. 
Here, we consider problems from visibility graph theory.
$\\ \\$
Let $P=\{ p_1, p_2, ..., p_n \}$ be a set of points in the plane (see Fig. \ref{visgr}). 
We say that two points $p_i$ and $p_j$ of $P$ 
are \emph{mutually visible} if the line 
segment $p_ip_j$ does not contain or pass through any other point of $P$. In other words,
$p_i$  and $p_j$ are visible if $P \cap  {p_ip_j}=\{p_i, p_j\}$. If two vertices are not visible, 
they are called an {\it invisible pair}. For example, 
in Fig. \ref{visgr}(c), $p_1$ and $p_5$ form a visible pair whereas $p_1$ and $p_3$ form an invisible pair.
If a point $p_k \in P$ lies on the segment $p_ip_j$ connecting two points $p_i$ and $p_j$ in $P$, 
we say that $p_k$ blocks the visibility between $p_i$ and $p_j$, and
$p_k$ is called a {\it blocker} in $P$.  For example in Fig. \ref{visgr}(c), 
$p_5$ blocks the visibility between $p_1$ and $p_3$ as $p_5$ lies on the segment $p_1p_3$.
The {\it visibility graph} 
(also called the {\it point visibility graph} (PVG))
$G$ of $P$ is defined 
by associating a vertex $v_i$ with each point $p_i$ of $P$ 
such that $(v_i, v_j)$ is an undirected edge of $G$ if and only if $p_i$ and $p_j$
are mutually visible (see Fig. \ref{visgr}(a)). Observe that if no three points of 
$P$ are collinear, then $G$ is a complete graph as
each pair of points in $P$ is visible since there is no blocker in $P$. 
Sometimes the visibility graph is drawn directly on the point set, 
as shown in Figs. \ref{visgr}(b) and \ref{visgr}(c), which is referred to as a {\it visibility embedding} of $G$.
$\\ \\$
Given a point set $P$, the visibility  graph $G$ of $P$ can be computed as follows.
 For each point
$p_i$ of $P$, the points of $P$ are sorted in angular order around $p_i$.  If two points $p_j$ and $p_k$ 
are consecutive in the sorted order, check whether $p_i$, $p_j$ and $p_k$ are collinear points.
By traversing the sorted order, all points of $P$, that are not visible from $p_i$, can be identified 
in $O(n \log n)$ time. Hence, $G$ can be computed from $P$ in $O(n ^2 \log n)$ time.
Using the result of Chazelle et al. \cite{cgl-pgd-85} or Edelsbrunner et al. 
\cite{Edelsbrunner:1986:CAL}, the  time complexity of the algorithm can be improved to $O(n^2)$  
by computing sorted angular orders for all points together 
in $O(n^2)$ time.
$\\ \\$
 Consider the opposite problem of determining if there is a set of points $P$ 
whose visibility graph is the given graph $G$. This problem is called the visibility graph 
{\it recognition} problem.  Identifying the set of properties satisfied by all visibility 
graphs is called the visibility graph {\it characterization} problem. The problem of 
actually drawing one such set of points $P$ whose visibility graph is the given graph $G$, 
is called the visibility graph {\it reconstruction} problem.
$\\ \\$
Here we consider the recognition
problem: Given a graph $G$ in adjacency matrix form, determine whether $G$ is the visibility graph of a set
of points $P$ in the plane \cite{prob-ghosh}. 
In Sect. 2, we present three necessary conditions for this recognition problem.
In the same section, we establish new properties of point visibility graphs, and in addition,
we state some known properties with proofs that are important in understanding point visibility graphs.
 Though the first necessary condition can be tested in $O(n^3)$ time,
it is not clear whether the second necessary and third conditions can be tested in polynomial time.
On the other hand, we show in Sect. 3 that the recognition problem lies in PSPACE. 
$\\ \\$
If a given graph $G$ is  planar,  there can be three cases: (i) $G$   has a planar visibility embedding (Fig. \ref{plpvg1}), 
(ii) $G$ admits a visibility embedding, but no visibility embedding of $G$ is planar (Fig. \ref{plpvg2}), and
(iii) $G$  does not have any visibility embedding (Fig. \ref{plpvg3}). 
Case (i) has been characterized by Eppstein 
\cite{epp-plpvg} by presenting four infinite families of $G$ and one particular graph. 
In order to characterize graphs in Case (i) and Case (ii), we show that two infinite families and five particular graphs are 
required in addition to graphs for Case (i).
 Using this characterization, we present an $O(n)$ algorithm for recognizing and reconstructing $G$ in Sect. 4.
Note that this algorithm does not require any prior embedding of $G$. Finally,
we conclude the paper with a few remarks.
\begin{figure} [h]
\begin{center}
\mbox{\begin{minipage} [b] {90mm}
\centerline{\hbox{\psfig{figure=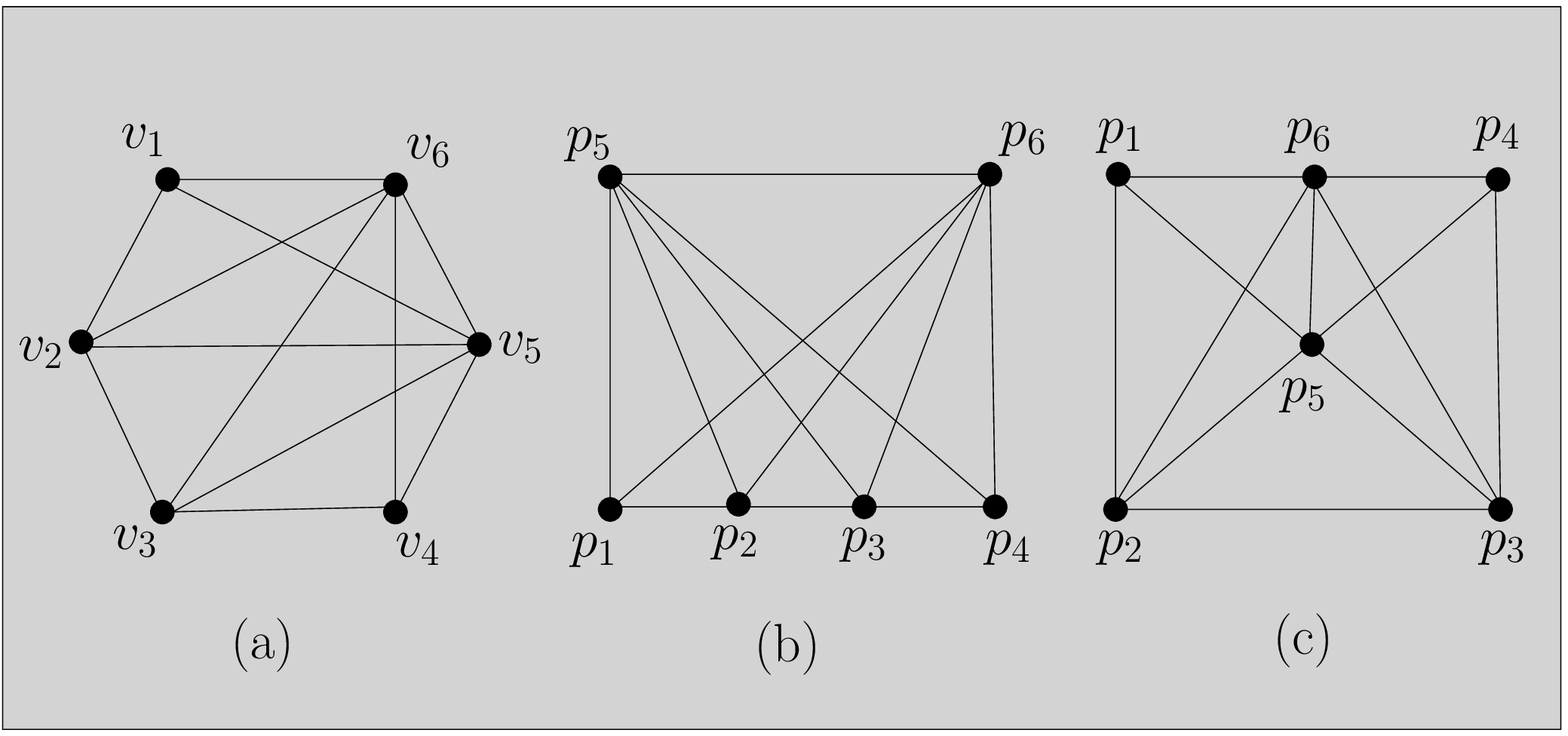,width=0.99\hsize}}}
\caption{ (a) A point visibility graph with $(v_1,v_2,v_3,v_4)$ as a CSP. (b) A visibility embedding of the point visibility graph
 where  $(p_1,p_2,p_3,p_4)$ is a GSP. 
                 (c) A visibility embedding of the point visibility graph where  $(p_1,p_2,p_3,p_4)$ is not a GSP. }
\label{visgr}
\end{minipage}}\hspace{2mm}
\mbox{\begin{minipage} [b] {48mm}
\centerline{\hbox{\psfig{figure=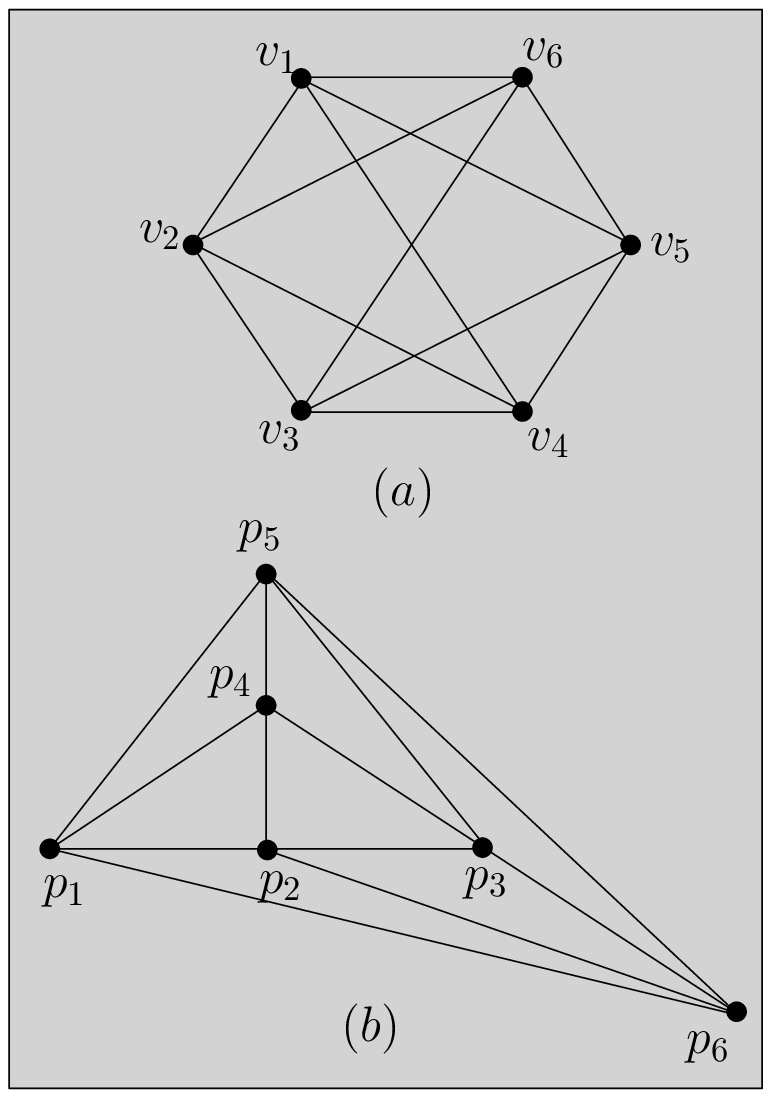,width=0.99\hsize}}}
\caption{ (a) A planar graph $G$.
  (b) A planar visibility embedding of $G$.}
\label{plpvg1}
\end{minipage}}
\end{center}
\end{figure}

\begin{figure} [h]
\begin{center}
\mbox{\begin{minipage} [b] {90mm}
\centerline{\hbox{\psfig{figure=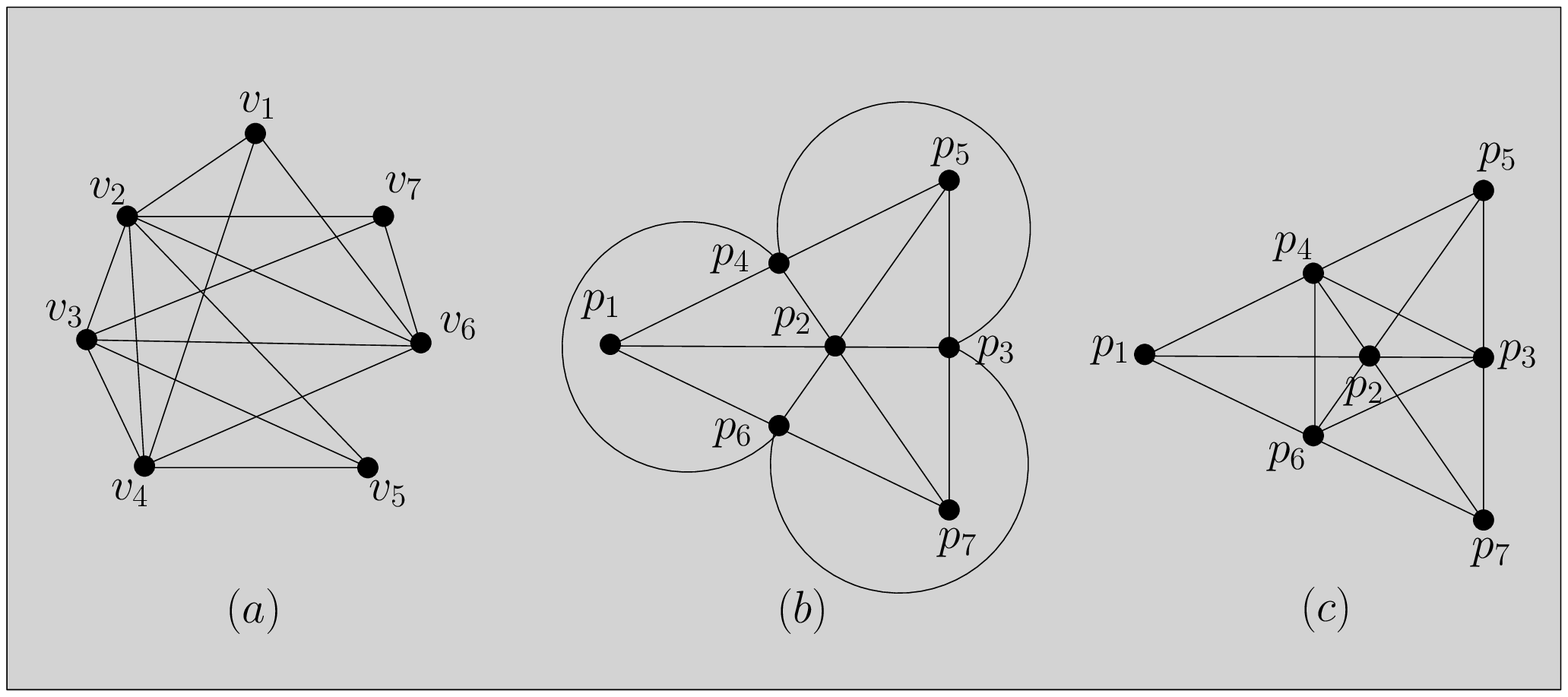,width=0.95\hsize}}}
\caption{ (a) A planar graph $G$.
  (b) A planar embedding of $G$.  (c) A non-planar visibility embedding of $G$}
 \label{plpvg2}
\end{minipage}}\hspace{02mm}
\mbox{\begin{minipage} [b] {48mm}
\centerline{\hbox{\psfig{figure=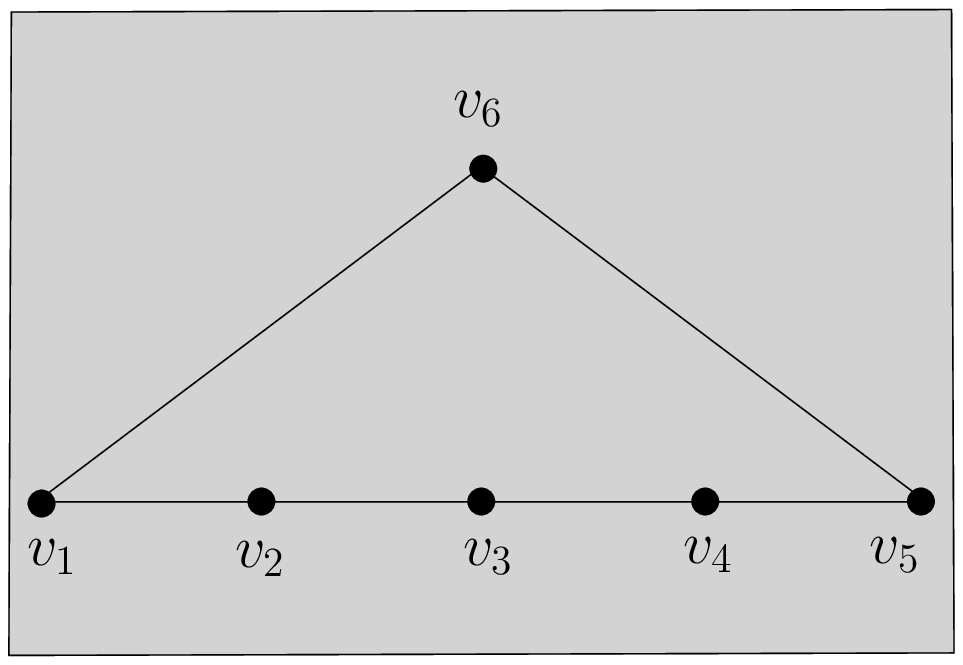,width=0.95\hsize}}}
\caption{   A planar graph $G$ that does not admit a visibility embedding.
    }
 \label{plpvg3}
\end{minipage}}
\end{center}
\end{figure}

\section{Properties of point visibility graphs }
Consider a subset $S$ of vertices of $G$ such that their corresponding points C in a visibility embedding $\xi$ of $G$ are 
collinear.  The path formed by the points of C is called a {\it geometric
straight path} (GSP). 
For example,  the path  $(p_1,p_2,p_3,p_4)$ in Fig. \ref{visgr}(b) is a GSP as the points $p_1$,
$p_2$, $p_3$ and $p_4$ are collinear. Note that there may 
be another visibility embedding $\xi$
of $G$ as shown in Fig. \ref{visgr}(c), where points $p_1$, $p_2$, $p_3$ and $p_4$ are not collinear.
So, the points forming a GSP in $\xi$ may not form a GSP in every visibility embedding of $G$. 
If a GSP is a maximal set of collinear points, it is called a {\it maximal geometric
straight path} (max GSP).
A GSP of $k$ collinear points is denoted as \emph{k-GSP}.
 In the following, we state some properties of PVGs and present three necessary conditions for recognizing $G$.
\begin{lemma} \label{adj} If $G$ is a PVG but not a path, 
then for any GSP in any visibility embedding of $G$, there is a point
visible from all the points of the GSP\cite{kpw-ocnv-2005}.
 \end{lemma}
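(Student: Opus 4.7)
The plan is to exhibit the required witness as a point of $P$ off the line of the GSP that is closest to that line, and to show via a short minimality argument that such a point sees every collinear point on the line. Fix a visibility embedding $\xi$ of $G$, and let $\ell$ be the line carrying the GSP $(p_1,\dots,p_k)$.

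First I would verify that some point of $P$ lies off $\ell$. Otherwise every point of $P$ is on $\ell$; after sorting these points along $\ell$, each one is blocked from every point beyond its immediate neighbour by the intermediate collinear points, so the visibility graph is exactly a path on $|P|$ vertices. This contradicts the hypothesis that $G$ is not a path, so the set $P \setminus \ell$ is non-empty.

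Next I would choose $q \in P \setminus \ell$ to minimize the Euclidean distance $d(q,\ell)$, and claim $q$ is visible from every point of $P$ that lies on $\ell$; in particular, from every $p_i$ in the GSP. Suppose for contradiction that $q$ fails to see some $p \in P \cap \ell$. Then there is a blocker $b \in P$ strictly between $q$ and $p$ on the segment $qp$. Since $q \notin \ell$ and $p \in \ell$, the segment $qp$ meets $\ell$ only at $p$, so $b \notin \ell$. Writing $b = (1-t)q + tp$ with $t \in (0,1)$ and using that $d(\cdot,\ell)$ is an affine function on any line, one gets
\[
 d(b,\ell) \;=\; (1-t)\,d(q,\ell) + t\cdot 0 \;<\; d(q,\ell),
\]
contradicting the minimality of $q$. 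Hence $q$ sees every point of $P \cap \ell$, which contains the GSP.

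I do not anticipate a real obstacle: the argument is essentially a one-line extremal principle. The only sub-step that needs explicit justification is the claim that if every point of $P$ lies on $\ell$ then $G$ is a path, which is just the observation that along a line, each point's visibility is killed by its immediate neighbour. Everything else follows from the elementary fact that distance to $\ell$ is affine along segments, which forces any blocker of a segment from $q$ to $\ell$ to be strictly closer to $\ell$ than $q$.
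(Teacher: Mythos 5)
Your proof is correct and takes essentially the same approach as the paper: both select the point off the line of the GSP at minimum perpendicular distance and observe that any blocker of a segment from it to the line would be strictly closer, a contradiction. You merely add the (worthwhile) explicit check that $P \setminus \ell$ is nonempty, which is where the ``not a path'' hypothesis is used.
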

\begin{proof} For every GSP, there exists a point $p_i$ whose perpendicular distance to the line containing the 
GSP is the smallest. So, all points of the GSP are visible from $p_i$.   \end{proof}
\begin{lemma} \label{deg}
If $G$ admits a visibility embedding $\xi$ having a $k$-GSP, then the number of edges in $G$ is 
at least $(k-1) + k(n-k)$.
\end{lemma}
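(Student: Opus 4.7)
My plan is to split $|E(G)|$ into three parts: the edges among the $k$ collinear points of the GSP, the edges $A$ between GSP points and the $n-k$ ``off-line'' points, and the edges $B$ among the off-line points themselves. Consecutive points along the line $\ell$ carrying the GSP are mutually visible (nothing on $\ell$ lies between them, and no off-line point can either), so the GSP contributes exactly $k-1$ edges. Writing $m_q$ for the number of GSP points not seen by an off-line point $q$, I have $A + \sum_q m_q = k(n-k)$, so the lemma reduces to proving $\sum_q m_q \le B$.

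The tool is a charging scheme. Whenever $q$ fails to see $p_i$, any blocker on the open segment $qp_i$ must itself be off-line, since $qp_i$ meets $\ell$ only at $p_i$. Let $\sigma(q,p_i)$ be the off-line blocker on $qp_i$ closest to $q$; then $q$ sees $\sigma(q,p_i)$ (there is nothing between them), so $\{q,\sigma(q,p_i)\}$ is an edge of $G$, counted by $B$. For each fixed $q$, the map $p_i \mapsto \sigma(q,p_i)$ is injective, because a common image $q'$ lying on both $qp_i$ and $qp_j$ with $p_i \ne p_j$ would force $q,q',p_i,p_j$ to be collinear, putting $q$ on $\ell$.

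The main obstacle, and the step that distinguishes this argument from a naive degree count, is to rule out double-charging: the undirected edge $\{q,q'\}$ is never produced as $\sigma(q,p_i)$ and also as $\sigma(q',p_j)$. Indeed, that would require $q'$ strictly between $q$ and $p_i$ and simultaneously $q$ strictly between $q'$ and $p_j$; if $p_i = p_j$ the two betweenness relations contradict each other, and if $p_i \ne p_j$ the collinearities force both $q$ and $q'$ onto the line $p_ip_j = \ell$. Granting this, the directed ``charges'' $(q,\sigma(q,p_i))$ match injectively to distinct undirected off-line/off-line edges, giving $\sum_q m_q \le B$ and hence $A + B \ge k(n-k)$ and $|E(G)| \ge (k-1) + k(n-k)$. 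Without this one-directional observation, a purely degree-based count only yields $A + 2B \ge k(n-k)$, weaker by an additive $B$; the collinearity argument above is exactly what closes the gap.
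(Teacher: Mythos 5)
Your proof is correct and follows essentially the same route as the paper's: charge each non-adjacent pair consisting of a GSP point $p_i$ and an off-line point $q$ to the edge from $q$ toward $p_i$, i.e.\ to the blocker on $qp_i$ nearest to $q$. The only real difference is that you explicitly verify that the $k(n-k)$ charged edges are pairwise distinct (injectivity of $p_i \mapsto \sigma(q,p_i)$ for fixed $q$, plus ruling out charging an off-line edge from both of its endpoints), a point the paper's proof asserts with ``$(n-k)k$ such pairs correspond to $(n-k)k$ edges'' but does not argue.
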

%
%
%
%
\begin{proof} Let $p_i$ and $p_j$ be two points of $\xi$ such that $p_i$ is a point of the $k$-GSP and $p_j$ is not. 
Consider the segment $p_ip_j$. If $p_i$ and $p_j$ are mutually visible, then $(v_i,v_j)$ is an edge in $G$.
Otherwise, there exists a blocker $p_k$  on $p_ip_j$ such that $(v_j,v_k)$ is an edge in $G$.
So, $p_j$ has an edge in the direction towards $p_i$. Therefore, for every such pair $p_i$ and $p_j$, there is
an edge in $G$. So, $(n-k)k$ such pairs in $\xi$ correspond to $(n-k)k$ edges in $G$.
Moreover, there are $(k-1)$ edges in $G$ corresponding to the $k$-GSP.
Hence, $G$ has at least  $(k-1) + k(n-k)$ edges.  \end{proof}
\begin{corollary}
If a point $p_i$ in a visibility embedding $\xi$ of $G$ does not belong to a $k$-GSP in $\xi$,
 then its corresponding vertex $v_i$ in $G$ has degree at least $k$.
\end{corollary}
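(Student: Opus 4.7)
The plan is essentially to redo the edge-counting argument of Lemma \ref{deg} at the level of a single vertex, and show that the $k$ edges contributed by $p_i$ in that proof are in fact $k$ \emph{distinct} edges incident to $v_i$.

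Concretely, let $q_1,\dots,q_k$ denote the points of the $k$-GSP in $\xi$, and let $L$ be the line containing them. By hypothesis $p_i$ does not lie on $L$. For each $\ell\in\{1,\dots,k\}$ I would look at the segment $p_iq_\ell$. If $p_i$ and $q_\ell$ are mutually visible, then $(v_i,v_{q_\ell})$ is an edge of $G$; otherwise, take the blocker $r_\ell$ on $p_iq_\ell$ closest to $p_i$, which is then visible from $p_i$, and so $(v_i,v_{r_\ell})$ is an edge of $G$. In either case, $p_i$ has a neighbour $w_\ell$ lying on the open ray from $p_i$ through $q_\ell$.

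The key geometric observation — and the only step beyond what is already in the proof of Lemma \ref{deg} — is that the rays from $p_i$ through $q_1,\dots,q_k$ are pairwise distinct. This is immediate because $q_1,\dots,q_k$ are collinear on $L$ while $p_i \notin L$, so distinct $q_\ell$'s subtend distinct angles at $p_i$. Consequently the neighbours $w_1,\dots,w_k$ produced above lie on $k$ different rays emanating from $p_i$, and hence are $k$ distinct vertices of $G$ adjacent to $v_i$.

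I do not anticipate any real obstacle; the only subtlety is the disjointness of the $k$ rays, which is why the condition $p_i \notin L$ is used. Combining the two observations yields $\deg(v_i)\ge k$, as claimed.
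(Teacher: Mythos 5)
Your argument is correct and is essentially the paper's intended one: the corollary is stated without a separate proof precisely because it follows from the edge-counting step in the proof of Lemma \ref{deg}, applied to the single point $p_i$, and your only added observation --- that the $k$ rays from $p_i$ through the collinear GSP points are pairwise distinct, so the resulting neighbours are distinct --- is exactly the detail needed to make that implicit step explicit. Nothing further is required.
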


$\\ $
Let $H$ be a path in $G$ such that no edges exist between any two non-consecutive vertices in $H$.
We call $H$ a 
{\it combinatorial straight path} $(CSP)$. Observe that in a visibility embedding of $G$, $H$
may not always correspond to a GSP. In Fig. \ref{visgr}(a),  $H$ =  $(v_1,v_2,v_3,v_4)$ 
is a CSP which corresponds to a GSP in
Fig.  \ref{visgr}(b) but not in Fig.  \ref{visgr}(c). Note that a CSP always refers to a path in $G$, whereas a GSP refers to
a path in a visibility embedding of $G$. A CSP that is 
a maximal path, is called a {\it maximal combinatorial
straight path} $(max \ CSP)$.
A CSP of $k$-vertices is denoted as \emph{k-CSP}.
\begin{lemma}\label{bip} 
$G$ is a PVG and bipartite if and only if the entire $G$ is a CSP.
\end{lemma}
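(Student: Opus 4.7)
The plan is to split the biconditional and handle each direction separately.

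For the \emph{if} direction, if $G$ is itself a CSP then $G$ is an induced path $v_1v_2\cdots v_n$. Bipartiteness follows by 2-coloring by index parity, and realizability as a PVG is obtained by placing $n$ collinear points $p_1,\ldots,p_n$ in this order on a single line: $p_i$ and $p_j$ are then mutually visible iff $|i-j|=1$, reproducing $G$ exactly.

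For the \emph{only if} direction, suppose $G$ is a bipartite PVG. The cases $n \leq 2$ are immediate, so I would assume $n \geq 3$ and argue by contradiction that $G$ is not a CSP. Since a CSP on all vertices coincides with the induced path $P_n$, the hypothesis forces $G \neq P_n$, i.e.\ $G$ is not a path, so Lemma \ref{adj} applies. Fix any visibility embedding $\xi$. If every point of $\xi$ were collinear, $G$ would be an induced path, contradiction; so not all points of $\xi$ are collinear. Take any edge $(v_i,v_j)$ of $G$ (convex-hull edges supply such edges), let $L$ be the line through $p_i,p_j$, and let $M$ be the \emph{maximal} GSP of $\xi$ lying on $L$. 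Lemma \ref{adj} then produces a point $p_l$ visible from every vertex of $M$; by maximality of $M$ together with the perpendicular-distance construction used in its proof, $p_l \notin L$. Hence $p_i,p_j,p_l$ are pairwise visible and non-collinear, forming a triangle in $G$ and contradicting bipartiteness.

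The delicate step I expect to be the main obstacle is establishing $p_l \notin L$: any point of $P$ lying on $L$ and visible from the extreme elements of $M$ would either already belong to $M$ or extend it, violating maximality, so $p_l$ must lie off $L$. Once that is secured, the triangle on $v_i,v_j,v_l$ supplies the desired contradiction and the argument is complete.
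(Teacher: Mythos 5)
Your proof is correct and follows essentially the same route as the paper's: the easy direction is the collinear embedding of an induced path, and the converse uses Lemma \ref{adj} on a maximal GSP to obtain a point off the line adjacent to both endpoints of an edge of the GSP, yielding a triangle that contradicts bipartiteness. Your treatment is in fact slightly more careful than the paper's, since you explicitly justify that the witness point of Lemma \ref{adj} lies off the line $L$.
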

\begin{proof} If the entire $G$ can be embedded as a GSP, then alternating points in the GSP form the bipartition and the lemma holds. Otherwise,
there exists at least one max GSP which does not contain all the points.
By Lemma \ref{adj}, there exists one  point $p_i$ adjacent to all points of the GSP. So, $p_i$ must belong to one partition and all   
points of the GSP (having edges) belong to the other partition. Hence, $G$ cannot be a bipartite graph, a contradiction. 
The other direction of the proof is trivial.  \end{proof}
\begin{corollary}
 $G$ is a PVG and triangle-free if and only if the entire $G$ is a CSP.
\end{corollary}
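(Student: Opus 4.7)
The plan is to reuse the argument of Lemma \ref{bip} almost verbatim, since the contradiction produced there is actually a triangle rather than merely an odd cycle, so triangle-freeness already suffices as a hypothesis. The easy direction is routine: if the entire $G$ is a CSP then $G$ is isomorphic to a path graph $P_n$, which is trivially triangle-free, and placing its vertices at the integer points $1,2,\dots,n$ on a line realizes it as a PVG (since each point is then blocked from all but its immediate neighbours).

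For the converse, I would assume $G$ is a triangle-free PVG and fix an arbitrary visibility embedding $\xi$. If $G$ is itself a path graph, the previous observation gives that it is a CSP and we are finished; so assume it is not, which makes Lemma \ref{adj} available. Pick any max GSP $\Gamma$ in $\xi$. If $\Gamma$ contains every point of $\xi$, then all points are collinear and only consecutive points along the line are mutually visible, so $G$ is a path graph and hence a CSP, contradicting the assumption (or, seen the other way, handing us the CSP directly). Otherwise $\Gamma$ misses some point, and Lemma \ref{adj} yields a point $p_i\notin\Gamma$ visible from every point of $\Gamma$. Since $\Gamma$ has at least two consecutive, mutually visible points $p_a,p_b$, the triple $\{p_i,p_a,p_b\}$ spans a triangle in $G$, contradicting triangle-freeness.

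The hard part is essentially just bookkeeping: confirming that a max GSP in the non-collinear case necessarily contains two consecutive mutually visible points. This is immediate from the definition, since any edge of $G$ already constitutes a $2$-GSP and a max GSP extending it cannot shrink below two points. Once this is noted, the proof is structurally identical to that of Lemma \ref{bip}, with ``triangle'' replacing ``odd cycle'' throughout.
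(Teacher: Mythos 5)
Your proof is correct and takes essentially the route the paper intends: the corollary is stated without its own proof as a consequence of Lemma \ref{bip}, and the contradiction manufactured there (a point, supplied by Lemma \ref{adj}, adjacent to all points of a max GSP that already contains an edge) is precisely a triangle, exactly as you observe. Your easy direction, realizing the CSP as consecutive integer points on a line, likewise matches the paper's (tacit) argument.
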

\begin{lemma} \label{maxcl}
If $G$ is a PVG, then the size of the maximum clique in $G$ is bounded by twice the minimum degree of $G$, and the bound is tight.
\end{lemma}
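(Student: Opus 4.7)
The plan is to show that every vertex of $G$ has degree at least $\omega/2$, where $\omega = \omega(G)$ is the size of a maximum clique; this immediately yields $\omega \leq 2\delta$. Fix a visibility embedding $\xi$ of $G$, a maximum clique $K$ with corresponding point set $Q \subseteq P$ of size $\omega$, and an arbitrary vertex $v$ with point $p$. If $p \in Q$, then $v$ is adjacent to the remaining $\omega - 1$ clique vertices and the bound is immediate, so I focus on the case $p \notin Q$.

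The central geometric observation is that no line through $p$ contains more than two points of $Q$. If two clique points $q, q'$ lay on opposite rays from $p$, then $p$ would lie strictly between them on the segment $qq'$ and would itself block the edge $qq'$, contradicting that $K$ is a clique; and if three clique points lay on the same ray from $p$, the middle one would block visibility between the outer two, again contradicting the clique property. Let $\ell_1, \ldots, \ell_m$ be the distinct lines through $p$ that meet $Q$; then $\omega \leq 2m$, i.e.\ $m \geq \omega/2$. On each $\ell_j$ the (one or two) clique points on $\ell_j$ lie on a single ray from $p$, and the point of $P$ closest to $p$ on that ray is visible from $p$ because nothing lies strictly between. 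This produces one neighbor of $v$ per line, and these neighbors are pairwise distinct because two distinct lines through $p$ share only the point $p$ itself. Hence $\deg(v) \geq m \geq \omega/2$, and since this holds for every $v$, we conclude $\omega \leq 2\delta$.

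For the tightness claim I would exhibit the following family. For any $k \geq 1$, place a distinguished point $p$ at the origin and choose $k$ rays from $p$ in generic directions so that no three of the resulting outer points are collinear; on each ray put points $A_i$ and $B_i$ at distances $1$ and $2$. Genericity ensures $\{A_1, B_1, \ldots, A_k, B_k\}$ is a clique of size $2k$, while $p$ sees only the $A_i$'s since each $B_i$ is blocked by $A_i$, giving $\deg(v) = k$. A quick degree check shows every other vertex has degree at least $2k - 1$, so $v$ achieves the minimum and $\omega = 2\delta = 2k$. The main obstacle in the upper bound is establishing the geometric dichotomy (at most two points of $Q$ on any line through $p$) cleanly from the clique property; once that is in hand, distinctness of the closest-point neighbors across lines and the counting step $\deg(v) \geq m \geq \omega/2$ are essentially immediate.
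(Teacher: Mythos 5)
Your upper-bound argument is correct and is essentially the paper's proof viewed from the other side: the paper fixes a minimum-degree point and notes that each of its $\delta$ rays carries at most two clique points, while you fix an arbitrary point $p$ and show that each of the at least $\omega/2$ lines through $p$ meeting the clique yields a distinct visible neighbour --- the same two-points-per-line observation either way. You additionally supply a correct tightness construction ($k$ generic rays through a central point with two points on each), which the paper's statement claims but its proof never exhibits.
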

\begin{proof} 
In a visibility embedding of $G$, draw rays from a point $p_i$ of minimum degree through every visible point of $p_i$.
Observe that any ray may contain several points not visible from $p_i$. 
Since any clique can have at most two points from the same ray, 
the size of the clique is at most twice the number of rays, which gives twice the minimum degree of $G$.    \end{proof}
\begin{lemma} \label{diam}If $G$ is a PVG and it has more than one max CSP, 
then the diameter of $G$ is 2 \cite{kpw-ocnv-2005}. \end{lemma}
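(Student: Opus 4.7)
The plan is to reduce the diameter bound to the statement that every pair of non-adjacent vertices in $G$ has a common neighbor, and then extract such a neighbor using Lemma~\ref{adj}. The only preliminary step is to confirm that the hypothesis ``more than one max CSP'' prevents $G$ from being a path, so that Lemma~\ref{adj} is applicable: if $G$ were a path $v_1v_2\cdots v_n$, every proper subpath would be extendable, leaving the whole graph as the unique max CSP. Hence the hypothesis of the lemma forces $G$ to not be a path.

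With this in hand, I fix an arbitrary visibility embedding $\xi$ of $G$ and an arbitrary pair of non-adjacent vertices $u,v$. Since $u$ and $v$ are not mutually visible in $\xi$, the open segment joining their images in the plane contains at least one blocker of $P$, so the collinear points of $P$ lying on the line through $u$ and $v$ form a GSP of at least three points. I extend this GSP to a max GSP $L$ on the same line; by construction $u,v\in L$. By Lemma~\ref{adj}, there is a point $p$ of $P$ visible from every point of $L$, and in particular visible from both $u$ and $v$. Thus $p$ is a common neighbor of $u$ and $v$ in $G$, giving $d_G(u,v)\le 2$. Since the pair was arbitrary, the diameter of $G$ is at most $2$.

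The main (and essentially only) obstacle is the ``not a path'' check that activates Lemma~\ref{adj}; the rest is geometric bookkeeping. I note that the multiplicity of max CSPs is used only to rule out the path case, so the same argument actually proves the slightly stronger statement that every PVG other than a path has diameter at most $2$.
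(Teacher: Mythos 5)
Your proof is correct and follows essentially the same route as the paper: given a non-adjacent pair, the blocker(s) on the segment joining them yield a collinear set (a GSP) containing both points, and Lemma~\ref{adj} supplies a common neighbor, giving a path of length~2. You are in fact slightly more careful than the paper, which invokes Lemma~\ref{adj} on a CSP rather than on the GSP in an embedding and does not explicitly verify the ``not a path'' hypothesis; your observation that the multiplicity of max CSPs is used only to exclude the path case is also accurate.
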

\begin{proof} If two vertices $v_i$ and  $v_j$ are not adjacent in $G$,
 then they belong to a CSP L of length at least two. By Lemma \ref{adj},
  there must be some vertex $v_k$ that is adjacent  to every vertex in L. $(v_i, v_k, v_j)$ is the required path
of length 2. Therefore, the diameter of $G$ cannot be more than two.   \end{proof} 
\begin{corollary} \label{bfs}
 If $G$ is a PVG but not a path, then the BFS tree of $G$ rooted at any  vertex $v_i$ of
G has at most three levels consisting of $v_i$ in the first level, the neighbours of $v_i$ in $G$ 
in the second level, and the rest of the vertices of $G$ in the third level. 
\end{corollary}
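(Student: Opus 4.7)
The plan is to reduce the claim about the BFS tree to a statement about the diameter of $G$ and then invoke Lemma~\ref{diam}. The BFS tree rooted at $v_i$ has at most three levels precisely when every other vertex is at distance at most two from $v_i$; since $v_i$ is arbitrary, it suffices to prove that $\mathrm{diam}(G)\le 2$. Once this is established, the three levels are forced to be exactly $\{v_i\}$, the neighbours of $v_i$ in $G$, and the remaining vertices, as the statement asserts.

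To apply Lemma~\ref{diam}, I must verify its precondition, namely that $G$ contains more than one max CSP. I would argue the contrapositive: if $G$ has a unique max CSP $H=(u_1,\ldots,u_k)$, then $G$ is a path. Two cases arise. If $V(H)=V(G)$, then the CSP condition forbids edges between non-consecutive $u_i,u_j$, so the edge set of $G$ coincides with that of $H$ and $G$ is the path $H$. Otherwise, there is a vertex $w\notin V(H)$; since a PVG on $n\ge 2$ vertices has no isolated vertex (the point in the embedding nearest to $w$ is visible from it), $w$ has a neighbour $v$ in $G$, and the single edge $(w,v)$ is itself a CSP, which extends to a max CSP $H'\ni w$ distinct from $H$, contradicting uniqueness. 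Hence the only way for $G$ to have a unique max CSP is to be a path, which is ruled out by hypothesis.

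The step that will need the most care is the claim that a vertex $w$ outside $V(H)$ has a neighbour in $G$, which rests on the elementary but essential observation that a PVG on at least two points has no isolated vertex; this uses the fact that in any embedding, the nearest point to $w$ provides a visibility edge. Everything else is a short case analysis on CSPs followed by a direct appeal to Lemma~\ref{diam} and the equivalence between BFS depth and graph distance.
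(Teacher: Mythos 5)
Your proposal is correct and takes essentially the same route as the paper, which states this as an immediate consequence of Lemma~\ref{diam} (diameter at most $2$). You additionally supply the verification, left implicit in the paper, that a PVG which is not a path has more than one max CSP so that Lemma~\ref{diam} actually applies; that bridging argument (unique max CSP forces $G$ to be a path, via the no-isolated-vertex observation) is sound.
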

\begin{lemma} \label{conn}
If $G$ is a PVG but not a path, then the subgraph induced by the neighbours of any vertex $v_i$, excluding $v_i$, is connected.\end{lemma}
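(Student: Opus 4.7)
The plan is to fix a visibility embedding $\xi$ of $G$, focus on $p_i$, and sort its neighbours $N = \{q_1, \ldots, q_d\}$ by angle around $p_i$, choosing the starting direction so that the ``wrap-around'' gap between $q_d$ and $q_1$ is the largest of all $d$ cyclic angular gaps. This is always possible, since the $d$ cyclic gaps sum to $2\pi$ and hence at most one of them can exceed $\pi$; with this choice, every linearly consecutive gap from $q_t$ to $q_{t+1}$ for $t = 1, \ldots, d-1$ has size at most $\pi$. If $d \leq 1$ the lemma is trivial, so assume $d \geq 2$. I will show that each such linearly consecutive pair $(q_t, q_{t+1})$ is either directly adjacent in $G$ or joined in $N$ by a length-two path; this gives a walk $q_1, q_2, \ldots, q_d$ inside the induced subgraph on $N$, establishing connectivity.

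For a pair whose angular gap is strictly less than $\pi$ (equivalently, $p_i$ is not on the segment $q_t q_{t+1}$), I would argue by contradiction that they must be mutually visible. If not, some blocker $p_\ell \in P$ lies in the open segment $q_t q_{t+1}$. Take the point $p_m \in P$ on the ray from $p_i$ through $p_\ell$ that is closest to $p_i$; this $p_m$ is visible from $p_i$, hence lies in $N$, and its angular direction from $p_i$ agrees with that of $p_\ell$, which sits strictly inside the short angular arc from $q_t$ to $q_{t+1}$, an arc that by construction contains no other neighbour of $p_i$. So $p_m$ is a neighbour angularly strictly between $q_t$ and $q_{t+1}$, contradicting consecutiveness. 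Hence $q_t q_{t+1}$ is an edge of the induced subgraph on $N$.

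The only remaining subcase, and the main obstacle, is the antipodal one: the gap equals $\pi$, i.e., $p_i$ itself lies on the open segment $q_t q_{t+1}$ and blocks them in $G$. I would dispose of this using Lemma \ref{adj}: the collinear subset of $P$ on the line through $p_i, q_t, q_{t+1}$ is a GSP containing at least the three points $q_t, p_i, q_{t+1}$, and since $G$ is not a path, Lemma \ref{adj} supplies a point $p_m \in P$ visible from every point of that GSP. In particular $p_m$ sees $p_i$, so $p_m \in N$, and $p_m$ sees both $q_t$ and $q_{t+1}$, yielding the length-two path $q_t, p_m, q_{t+1}$ inside the induced subgraph on $N$. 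Combining the two subcases proves connectivity; the hypothesis that $G$ is not a path is used precisely to invoke Lemma \ref{adj} in the antipodal subcase.
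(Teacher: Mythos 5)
Your proof is correct and follows essentially the same route as the paper's: order the neighbours of $p_i$ angularly around it and show that consecutive ones are joined inside the induced neighbourhood. The paper disposes of the problematic antipodal pair by splitting on whether $p_i$ is a convex hull point (so that the wrap-around pair is simply dropped from the chain), whereas you patch that single degenerate case with Lemma \ref{adj}; your version also spells out the blocker argument for gaps below $\pi$ that the paper leaves implicit.
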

\begin{proof} Consider a visibility embedding of $G$ where $G$ is not a path. Let $(u_1,u_2,...,u_k,u_1)$ be the visible points
of $p_i$ in clockwise angular order. If $p_i$ is not a convex hull point, then $(u_1,u_2),(u_2,u_3),
...,(u_{k-1},u_k),$ $(u_k,u_1)$ are visible pairs (Fig. \ref{convh}(a)). 
If $p_i$, $u_1$ and $u_k$ are convex hull points, then $(u_1,u_2),(u_2,u_3),
...,$ $(u_{k-1},u_k)$ are visible pairs (Fig. \ref{convh}(b)). 
Since there exists a path between every pair of points in $(u_1,u_2,...,u_k,u_1)$,
 the subgraph induced by the neighbours of $v_i$ is connected.   \end{proof}
%
\begin{figure} 
\begin{center}
\centerline{\hbox{\psfig{figure=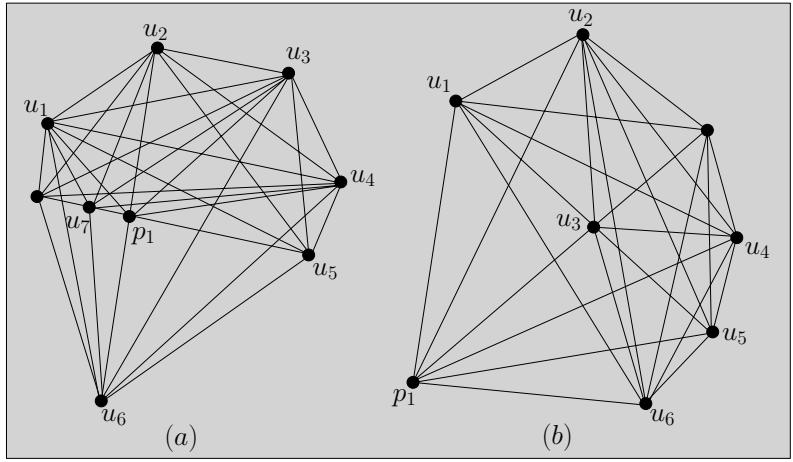,width=0.65\hsize}}}
\caption{ (a) The points $(u_1, u_2, ..., u_7$, $u_1)$ are visible from an internal point $p_1$.
  (b) The points $(u_1,u_2,...,u_6)$ are visible from a convex hull point $p_1$.  }
\label{convh}
\end{center}
\end{figure}
%
%
\begin{nc} \label {nc1}
 If $G$ is not a CSP, then the BFS tree of $G$ rooted at any vertex can have at most three levels,
and the induced subgraph formed by the vertices in the second level  must be connected.
\end{nc} 
\begin{proof} Follows from Corollary \ref{bfs} and Lemma \ref{conn}.   \end{proof}
$\\ \\$
As defined for point sets, if two vertices $v_i$ and $v_j$ of $G$ are adjacent (or, not adjacent) in $G$, $(v_i,v_j)$ is referred to as
a {\it visible pair} (respectively, {\it invisible pair}) of $G$.
Let $(v_1,v_2,...,v_k)$ be a path in $G$ such that no two non-consecutive vertices are connected by an edge in $G$ 
(Fig.  \ref{ncdiag}(a)). 
For any vertex $v_j$, $2 \leq j \leq k-1$, $v_j$ is called a \emph{vertex-blocker} of $(v_{j-1},v_{j+1})$ as 
 $(v_{j-1},v_{j+1})$ is not an edge in $G$ and both  $(v_{j-1},v_{j})$ and  $(v_{j},v_{j+1})$ are 
edges in $G$. In the same way, consecutive vertex-blockers on such a path are also called \emph{vertex-blockers}. For example,
$v_m*v_{m+1}$ is a vertex-blocker of $(v_{m-1},v_{m+2})$ for $2 \leq m \leq k-2$.
Note that $*$ represents concatenation of consecutive vertex-blockers.
$\\ \\$
Consider the graph in Fig. \ref{ncdiag}(b). Though $G$ satisfies Necessary Condition 1, it is not a PVG because it does not admit a 
visibility embedding. 
It can be seen that this graph without the edge $(v_2,v_4)$ admits a visibility embedding (see Fig. \ref{ncdiag}(a)),
 where $(v_1,v_2,v_3,v_4,v_5)$ forms a GSP.
However, $(v_2,v_4)$ demands visibility between two non-consecutive collinear blockers which cannot be realized in any visibility embedding.
\begin{figure} 
\begin{center}
\centerline{\hbox{\psfig{figure=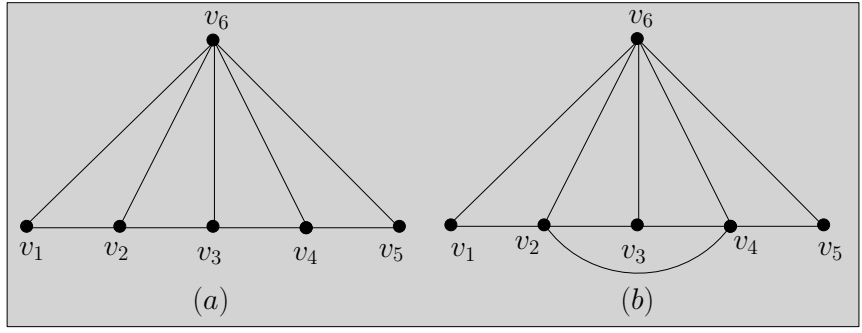,width=0.705\hsize}}}
\caption{ (a) Vertices $v_2$, $v_3$, $v_4$ are vertex-blockers of $(v_1,v_3)$, $(v_3,v_4)$ $(v_3,v_5)$ respectively.
Also, $v_2*v_3*v_4$ is the vertex-blocker of $(v_1,v_5)$.  (b) The graph satisfies Necessary
Condition 1 but is not a PVG because of the edge $(v_2,v_4)$.}
\label{ncdiag}
\end{center}
\end{figure}

 \begin{nc}\label {nc2}
  There exists an assignment of vertex-blockers to invisible pairs in $G$ such that:
\begin{enumerate}
 \item Every invisible pair is assigned one vertex-blocker.
 \item If two invisible pairs  in $G$ sharing a vertex $v_i$ $($say, $(v_i,v_j)$ and $(v_i,v_k)$ $)$, and 
their assigned vertex-blockers are not disjoint, then all vertices in the two assigned vertex-blockers  
along with vertices $v_i$, $v_j$ and $v_k$ must be a CSP in $G$.
\item If two invisible pairs in $G$ are sharing a vertex $v_i$ (say, $(v_i, v_j)$ and $(v_i, v_k)$),
and $v_k$ is assigned as a vertex blocker to $(v_i, v_j)$, then $v_j$ is not assigned as a vertex blocker to $(v_i, v_k)$.
\end{enumerate}
\end{nc}
\begin{proof} In a visibility embedding of $G$, every segment connecting two points, that are not mutually visible,
must pass through another point or a set of collinear points, and they correspond to vertex-blockers in $G$. 
$\\ \\$
Since $(v_i,v_j)$ and $(v_i,v_k)$ are invisible pairs, the segments $(p_i,p_j)$ and  $(p_i,p_k)$  must contain points. 
 If there exists a point $p_m$
on both $p_ip_j$ and $p_ip_k$, then points $p_i$, $p_m$, $p_j$, $p_k$ must be collinear. So, $v_i$, $v_m$, $v_j$ and $v_k$ 
must belong to a CSP.  
$\\ \\$
Since $(v_i, v_j)$ and $(v_i, v_k)$ are invisible pairs, the segments $(p_i, p_j)$ and $(p_i, p_k)$ must contain
points. If the point $p_k$ lies on $p_ip_j$, then $p_j$ cannot lie on $p_ip_k$, because it contradicts 
the order of points on a line.   \end{proof}
$\\ \\$
Consider the graph $G$ in Fig. \ref{figncprob}(a).
From its visibility embedding, it is clear that $G$ is a PVG and therefore, 
satisfies both Necessary Conditions \ref{nc1} and \ref{nc2}.
Let us construct a new graph $G'$ from $G$ by replacing 
edges $v_9v_{10}$ and $v_{11}v_{12}$
of $G$ by 
$v_9v_{11}$ and $v_{10}v_{12}$ (see Fig. \ref{figncprob}(b)). 
We have the following lemmas on $G'$.
\begin{lemma}
The graph $G'$ satisfies Necessary Conditions \ref{nc1} and \ref{nc2}.
\end{lemma}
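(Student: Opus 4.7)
The plan is to verify both conditions directly on $G'$, exploiting the fact that $G'$ differs from $G$ in only four edges, and that $G$ is a PVG (hence automatically satisfies both conditions by Corollary~\ref{bfs}, Lemma~\ref{conn}, and Necessary Condition~\ref{nc2}).

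First I would handle Necessary Condition~\ref{nc1} by examining the BFS tree of $G'$ rooted at each vertex. For any vertex outside $\{v_9, v_{10}, v_{11}, v_{12}\}$, the only way the BFS structure could change is through the four altered edges, which can move vertices between the second and third levels but cannot push any vertex to a fourth level, because the replacement edges $v_9v_{11}$ and $v_{10}v_{12}$ preserve short alternative paths between any two of the four affected vertices. For each of the four vertices themselves, a direct check of its neighbourhood in $G'$ shows that the induced subgraph on its neighbours remains connected: the swap rewires two pairs among $\{v_9,v_{10},v_{11},v_{12}\}$ but leaves each of these four vertices attached, through its unchanged edges, to a common neighbour that witnesses connectivity.

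Next I would turn to Necessary Condition~\ref{nc2}. The set of invisible pairs in $G'$ differs from that of $G$ by removing $(v_9,v_{11})$ and $(v_{10},v_{12})$ (now visible) and by adding $(v_9,v_{10})$ and $(v_{11},v_{12})$ (now invisible). I would start from the blocker assignment $\mathcal{A}$ induced by the visibility embedding of $G$ in Fig.~\ref{figncprob}(a), discard the assignments for the two pairs that are no longer invisible, and choose vertex-blockers for the two new invisible pairs from the CSP containing $v_9, v_{10}, v_{11}, v_{12}$. Clause~(1) of NC\ref{nc2} is then immediate.

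The main obstacle I expect is clause~(2) of NC\ref{nc2}, because it requires a global check: for every pair of invisible pairs sharing a vertex whose assigned blockers overlap, the union of all relevant vertices must lie on a single CSP. I would argue that the only new shared-vertex interactions created by the edge swap involve one of the two new pairs, and in each such case both pairs lie within the CSP through $v_9,v_{10},v_{11},v_{12}$, so clause~(2) reduces to a finite case analysis around these four vertices. Finally, clause~(3) is secured by picking the two new blocker assignments consistently with a fixed linear order along this CSP, so that no pair of assignments is mutually inverted.
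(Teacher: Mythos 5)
Your proposal is correct and follows essentially the same route as the paper: the paper verifies Necessary Condition~\ref{nc1} by observing that $G'$ still has diameter two and connected vertex neighbourhoods, and verifies Necessary Condition~\ref{nc2} by writing out explicitly the modified blocker assignment, which is exactly your perturbation of the assignment for $G$ with the CSP $(v_9,v_{10},v_{11},v_{12})$ reordered to $(v_9,v_{11},v_{10},v_{12})$. The only detail you leave implicit is that the blocker of the surviving invisible pair $(v_9,v_{12})$ must also be rewritten as the concatenation $v_{11}*v_{10}$ along the new CSP, but since its underlying vertex set is unchanged and $v_{11}$, $v_{10}$ remain consecutive in $G'$, this is immediate.
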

\begin{proof}
Observe that the neighbours of any vertex in $G'$ induce a connected subgraph. Also,
the diameter of $G'$ is still two. Therefore, $G'$ satisfies Necessary Condition \ref{nc1}.
$\\ \\$
For showing that $G'$ also satisfies Necessary Condition \ref{nc2},
we consider the assignment of blockers to the mutually invisible pairs of vertices in $G'$ as follows: 
{$(v_0,v_5) \longrightarrow v_1$},
  {$(v_0,v_9) \longrightarrow v_1 \ast v_5$},
  {$(v_1,v_9) \longrightarrow v_5$},
  {$(v_0,v_6) \longrightarrow v_2$},
  {$(v_0,v_{10}) \longrightarrow v_2 \ast v_6$},
  {$(v_2,v_{10}) \longrightarrow v_6$}, 
  {$(v_0,v_7) \longrightarrow v_3$}, 
  {$(v_0,v_{11}) \longrightarrow v_3 \ast v_7$}, 
 {$(v_3,v_{11}) \longrightarrow v_7$},  
 {$(v_0,v_8) \longrightarrow v_4$},  
  {$(v_0,v_{12}) \longrightarrow v_4 \ast v_8$}, 
  {$(v_4,v_{12}) \longrightarrow v_8$}, 
   {$(v_1,v_3) \longrightarrow v_2$},
   {$(v_1,v_4) \longrightarrow v_2 \ast v_3$},
   {$(v_2,v_4) \longrightarrow v_3$},
   {$(v_5,v_7) \longrightarrow v_6$},
   {$(v_5,v_8) \longrightarrow v_6 \ast v_7$},
   {$(v_6,v_8) \longrightarrow v_7$},
   {$(v_9,v_{10}) \longrightarrow v_{11}$},
  {$(v_9,v_{12}) \longrightarrow v_{11} \ast v_{10}$}, 
  {$(v_{11},v_{12}) \longrightarrow v_{10}$}. 
  Observe that since the invisible pairs $(v_9,v_{11})$ and $(v_{10},v_{12})$ in $G$
  are replaced by $(v_9,v_{10})$ and $(v_{11},v_{12})$ in $G'$, the vertex-blocker  
 assignments have changed accordingly.
  It can be seen that the above assignment of vertex blockers satisfies Necessary Condition \ref{nc2}.
  
\end{proof}

\begin{lemma}
 The graph $G'$ is not a PVG.
\end{lemma}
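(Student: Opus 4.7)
The plan is to suppose $G'$ admits a visibility embedding $\xi$ and derive a contradiction by exhibiting a non-edge in $G'$ that has no geometric blocker in $\xi$.

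First I would establish a rigid radial structure around $v_0$. Since $v_0$ is adjacent in $G'$ only to $v_1,v_2,v_3,v_4$, every one of $v_5,\dots,v_{12}$ must be blocked from $v_0$ by a neighbour of $v_0$, so each of these eight vertices lies on one of the four rays $\ell_1,\ell_2,\ell_3,\ell_4$ emanating from $v_0$ through $v_1,v_2,v_3,v_4$, respectively. Using the radial edges $(v_i,v_{i+4})$ and $(v_{i+4},v_{i+8})$, the non-edges $(v_0,v_{i+4}),(v_0,v_{i+8}),(v_i,v_{i+8})$, and a case analysis of admissible blockers (invoking Lemma~\ref{adj} and Lemma~\ref{conn} where useful), I would show that after a suitable relabelling $v_i,v_{i+4},v_{i+8}$ lie on $\ell_i$ in order of increasing distance from $v_0$, for each $i\in\{1,2,3,4\}$.

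Next I would pin down the angular order of the four rays around $v_0$. The inner-ring path $v_1v_2v_3v_4$ together with its three non-edges demands that each inner-ring non-edge be crossed by some ray $\ell_j$ carrying its blocker, while each inner-ring edge is crossed by no other ray. Combined with the fan arrangement of rays forced in the previous step, this pins down the angular order of $\ell_1,\ell_2,\ell_3,\ell_4$ around $v_0$ as listed, up to reversal.

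Finally I would derive the contradiction from the outer-ring non-edge $(v_9,v_{10})$. Some vertex of $\xi$ must lie on the open segment $v_9v_{10}$. But $v_9\in\ell_1$ and $v_{10}\in\ell_2$ lie on the far side of $v_0$ from $v_1,v_2$, so the segment avoids $v_0$; any vertex on $\ell_1$ or $\ell_2$ other than $v_9,v_{10}$ meets the line $v_9v_{10}$ only at one of its endpoints; and by the angular order just established, neither $\ell_3$ nor $\ell_4$ lies strictly between $\ell_1$ and $\ell_2$, so no vertex on $\ell_3$ or $\ell_4$ meets the segment $v_9v_{10}$ either. Hence no blocker for $(v_9,v_{10})$ exists in $\xi$, contradicting the fact that $(v_9,v_{10})\notin E(G')$. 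The principal obstacle in this plan is the first step: ruling out ``crossed'' blocker configurations---for example the possibility that the closest blocker on $v_0v_9$ is $v_2$ rather than $v_1$, or that $v_5$ and $v_9$ lie on different rays from $v_0$---requires a careful bookkeeping of the visibility relations among the three rings; once the radial collinearity is pinned down, the remaining incidence-geometry argument is essentially mechanical.
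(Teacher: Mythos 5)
Your plan is essentially the paper's own argument: the paper likewise forces $p_5,\dots,p_{12}$ onto the four rays $\overrightarrow{p_0p_1},\dots,\overrightarrow{p_0p_4}$, fixes the radial order (three points per ray) and the angular order of the rays, and then obtains the identical contradiction from the non-edge $(v_9,v_{10})$ whose endpoints sit on two angularly consecutive rays. The only structural difference is that the paper isolates the degenerate configuration in which $p_0$ itself blocks $(p_1,p_4)$ as a separate case, refuting it directly via the blockers of $(p_1,p_3)$ and $(p_2,p_4)$ --- a case your ``pin down the angular order up to reversal'' step would need to absorb.
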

\begin{proof}
   Let us assume on the contrary that $G$
  has a visibility embedding (say, $\xi$).
  Let $p_0, p_1, \ldots, p_{12}$ be the points of $\xi$ corresponding to the vertices $v_0, v_1, \ldots, v_{12}$ 
  respectively.
  Consider the rays $\overrightarrow{p_0p_1}$, $\overrightarrow{p_0p_2}$,
   $\overrightarrow{p_0p_3}$ and $\overrightarrow{p_0p_4}$. Since $v_0$ is not adjacent 
   to any of $v_5, v_6, \ldots, v_{12}$ in $G'$, $p_5, p_6, \ldots, p_{12}$ must lie on these four rays.
   $\\ \\$
   Consider the case where $p_0$ is not a blocker of $(p_1,p_4)$. 
   So, the angle at $p_0$ between $\overrightarrow{p_0p_1}$ and $\overrightarrow{p_0p_4}$ is not $180^{\circ}$.
   Let $w_{1,4}$ denote the wedge formed by $\overrightarrow{p_0p_1}$ and $\overrightarrow{p_0p_4}$
   such that the internal angle of $w_{1,4}$ is convex. Since a blocker of $(p_1,p_4)$ must lie on  
   $\overrightarrow{p_0p_2}$ or $\overrightarrow{p_0p_3}$ (say, $\overrightarrow{p_0p_2}$), 
   $\overrightarrow{p_0p_2}$ divides $w_{1,4}$ into wedges $w_{1,2}$ and $w_{2,4}$
%
%
   By a similar argument for $(p_2,p_4)$, $\overrightarrow{p_0,p_3}$
   passes through $w_{2,4}$.
   So, the ordering of the rays around $p_0$ in $w_{1,4}$ is 
   $(\overrightarrow{p_0p_1},\overrightarrow{p_0p_2} ,
    \overrightarrow{p_0p_3},\overrightarrow{p_0p_4})$.
$\\ \\$
   Let us locate the positions of $p_5$, $p_6$, $p_7$ and $p_8$ on $\overrightarrow{p_0p_1}$, $\overrightarrow{p_0p_2}$,
   $\overrightarrow{p_0p_3}$ and $\overrightarrow{p_0p_4}$.
   Observe that since each of the vertices $v_5$, $v_6$, $v_7$ and $v_8$ are adjacent to all of the 
   vertices $v_1$, $v_2$, $v_3$ and $v_4$, the points $p_5$, $p_6$, $p_7$ and $p_8$ must be the next points on 
   $\overrightarrow{p_0p_1}$, $\overrightarrow{p_0p_2}$,
   $\overrightarrow{p_0p_3}$ and $\overrightarrow{p_0p_4}$.
   In fact, the only two possibilities are 
   $( \overrightarrow{p_0p_1p_5}, \overrightarrow{p_0p_2p_6},
   \overrightarrow{p_0p_3p_7}, \overrightarrow{p_0p_4p_8})$
   and 
     $(\overrightarrow{p_0p_1p_8}, \overrightarrow{p_0p_2p_7},
   \overrightarrow{p_0p_3p_6}, \overrightarrow{p_0p_4p_5})$
   that can satisfy the blocking requirements among $p_5$, $p_6$, $p_7$ and $p_8$.
   $\\ \\$
   Let us locate the positions of $p_9$, $p_{10}$, $p_{11}$ and $p_{12}$ on $\overrightarrow{p_0p_1}$, $\overrightarrow{p_0p_2}$,
   $\overrightarrow{p_0p_3}$ and $\overrightarrow{p_0p_4}$.
   Since 
   $v_9$ is adjacent to $v_2$, $v_3$ and $v_4$ but not to $v_1$ in $G'$, 
   $p_9$ must lie on $\overrightarrow{p_0p_1}$.
   Similarly, $p_{10}$ must lie on $\overrightarrow{p_0p_2}$. 
   Since $p_9$ and $p_{10}$ lie on consecutive rays around $p_0$, the points $p_9$ and $p_{10}$ must see each other, which is a contradiction.
  $\\ \\$
   Consider the other case where $p_0$ is the blocker of $(p_1,p_4)$. 
   Observe that a point $p_i$ on $\overrightarrow{p_0p_3}$ is required on $p_2p_4$ to block the visibility between
   $p_2$ and $p_4$.
   Similarly another point $p_j$ on $\overrightarrow{p_0p_2}$ is required on $p_1p_3$ to block the visibility between
   $p_1$ and $p_3$. 
   Moreover, $p_i$ and $p_j$ cannot be visible from $p_0$ unless they are $p_2$ and $p_3$.
   It can be seen that no pair of points $p_i$ and $p_j$ can satisfy these conditions, which is a contradiction.
\end{proof}
$\\$
The above lemmas show that Necessary Conditions \ref{nc1} and \ref{nc2} are not sufficient for recognizing a PVG,
which leads to Necessary Condition \ref{nc3}.
An assignment of vertex-blockers in $G$ is said to be a \emph{valid assignment} if it satisfies Necessary Conditions \ref{nc1} and \ref{nc2}.
Let $(v_i, v_{i,1}), (v_i, v_{i,2}), \ldots, (v_i, v_{i,d})$ be all visible pairs of $v_i$ in $G$.
For a valid assignment, let $S_{i,j}$ denote the set of vertices of $G$ such that for every vertex $u \in S_{i,j}$, $v_{i,j}$ is
a blocker assigned to the invisible pair $(v_i,u)$ in this assignment.

\begin{nc} \label{nc3}
 If $G$ is not a CSP, then there exists a valid assignment for $G$ such that 
for every vertex $v_i \in G$, there is an ordering of visible pairs $(v_i, v_{i,1}), (v_i, v_{i,1}), \ldots, (v_i, v_{i,d})$
around $v_i$ such that if $(v_i, v_{i,j})$ is adjacent to $(v_i, v_{i,k})$ in the ordering, then 
every vertex of $\{ v_{i,j} \} \cup S_{i,j}$ is adjacent to every vertex of $\{ v_{i,k} \} \cup S_{i,k}$ in $G$.
\end{nc} 
\begin{figure}[h] 
\begin{center}
\centerline{\hbox{\psfig{figure=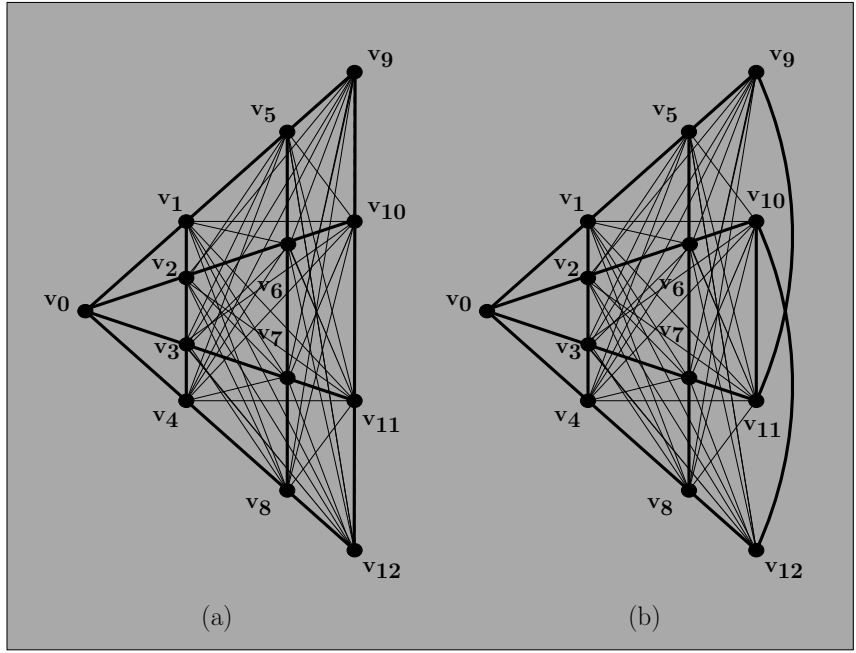,width=0.70\hsize}}}
\caption{ (a) This graph is a PVG drawn in the form of a visibility embedding. (b) This graph is not a PVG but satisfies
both Necessary Conditions \ref{nc1} and \ref{nc2}.}
 \label{figncprob}
\end{center}
\end{figure}
\begin{proof}
 Consider any valid assignment corresponding to a visibility embedding $\xi$ of $G$. 
 Let $\overrightarrow{p_ip_{i,j}}$ 
 denote the ray drawn from $p_i$ through $p_{i,j}$ in $\xi$. 
 Consider a clockwise ordering $A$ of $\overrightarrow{p_ip_{i,1}}, \ldots, \overrightarrow{p_ip_{i,d}}$ around $p_i$ in $\xi$
 such that the clockwise angle between any two rays in $A$ is convex, except possibly the last and first rays in $A$.
 So, every point on a ray in $A$ is visible from every point on its adjacent ray. 
 It can be seen that if any two rays $\overrightarrow{p_ip_{i,j}}$ and $\overrightarrow{p_ip_{i,k}}$ are adjacent in $A$, then every vertex
 of $\{ v_{i,j} \} \cup S_{i,j}$ is connected by an edge to every vertex of $\{ v_{i,k} \} \cup S_{i,k}$ in $G$.
 Hence, $G$ satisfies Necessary Condition \ref{nc3}.
%
%
\end{proof}
%
%
%
\begin{lemma} \label{mindeg}
If the size of the longest GSP in some visibility embedding of a graph $G$ with n vertices is k, then the degree 
of each vertex in $G$ is at least $\lceil \frac{n-1}{k-1} \rceil$ \cite{viscon-wood-2011, viscon-wood-2012, recogpvg-ghosh-roy-2011}.
\end{lemma}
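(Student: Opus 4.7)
The plan is to fix a visibility embedding $\xi$ of $G$ realizing a longest GSP of size $k$, pick an arbitrary vertex $v_i$, and bound its degree from below by counting the distinct lines through $p_i$ in $\xi$ that contain other points of $P$.

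First I would partition the $n-1$ points of $P \setminus \{p_i\}$ according to which line through $p_i$ they lie on. Let $\ell_1, \ldots, \ell_m$ be the distinct lines through $p_i$ that contain at least one other point of $P$, and let $k_j$ be the total number of points of $P$ lying on $\ell_j$ (including $p_i$). Since every point other than $p_i$ lies on exactly one such line, we have
\[
n - 1 \;=\; \sum_{j=1}^{m} (k_j - 1).
\]
The points on $\ell_j$ are collinear in $\xi$, so they form a GSP of size $k_j$; by hypothesis the longest GSP has size $k$, hence $k_j \le k$ and therefore $k_j - 1 \le k - 1$ for every $j$. Substituting gives $n - 1 \le m(k-1)$, so $m \ge \lceil (n-1)/(k-1) \rceil$.

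Next I would argue that the degree of $v_i$ is at least $m$. On each line $\ell_j$, the line splits into two rays emanating from $p_i$; the points of $P$ on $\ell_j \setminus \{p_i\}$ are distributed between these two rays, and on any ray from $p_i$ the closest point is visible from $p_i$ (any point farther along the same ray is blocked by it, but the nearest one is not blocked by anything on $\ell_j$). Hence each line $\ell_j$ contributes at least one point visible from $p_i$, giving $\deg(v_i) \ge m \ge \lceil (n-1)/(k-1) \rceil$. Since $v_i$ was arbitrary, the bound holds for every vertex.

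I do not anticipate a serious obstacle here: the argument is essentially a pigeonhole on lines through $p_i$, combined with the elementary observation that the closest point on any ray from $p_i$ is visible. The only small point that requires care is ensuring that the $k_j - 1$ points on each line indeed correspond to a GSP of size $k_j$ in the given embedding, so that the hypothesis on the longest GSP can be applied uniformly across all lines through $p_i$.
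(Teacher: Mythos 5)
Your proof is correct and follows essentially the same pigeonhole argument as the paper: the paper groups the other $n-1$ points by rays from $p_i$ through its visible points (each ray holding at most $k-1$ points besides $p_i$, and each ray corresponding to exactly one visible point, namely the nearest), whereas you group by lines through $p_i$ and then note each line yields at least one visible point; the two bookkeeping choices give the identical bound.
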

\begin{proof}
For any point $p_i$ in a visibility
embedding of $G$, the degree of $p_i$ is the number of points visible from $p_i$ which are in angular 
order around $p_i$. Since the longest GSP is of size k, a ray from $p_i$ through any visible point of $p_i$ can contain
at most $k-1$ points excluding $p_i$. So there must be at least 
 $\lceil \frac{n-1}{k-1} \rceil$ such rays, which gives the degree of $p_i$.
  \end{proof}
\begin{theorem} \label{hamcyc}
 If $G$ is a PVG but not a path, then $G$ has a Hamiltonian cycle.
\end{theorem}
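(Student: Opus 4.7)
The plan is to work inside a visibility embedding of $G$ and construct the Hamiltonian cycle directly by an angular sweep around a convex hull vertex.

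First, I would observe that in any visibility embedding $\xi$ of $G$ on a point set $P$, if $P$ were collinear then every vertex of the PVG would see only its two immediate neighbours along the line, so $G$ would be a path; by hypothesis this is not the case, hence $P$ is non-collinear and its convex hull $\mathrm{CH}(P)$ has at least three vertices. Fix any hull vertex $q$. Sort the remaining $n-1$ points of $P$ by angle around $q$, grouping points collinear with $q$ together, to obtain rays $r_1, r_2, \ldots, r_m$ in clockwise angular order, with $r_1$ and $r_m$ lying along the two hull edges incident to $q$. On each ray $r_i$, write the points in order of increasing distance from $q$ as $p_{i,1}, p_{i,2}, \ldots, p_{i,k_i}$. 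Since $\mathrm{CH}(P)$ has at least three vertices, $m \geq 2$.

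Next, I would record three elementary visibility facts read off directly from the embedding: (a) $q$ is mutually visible to $p_{i,1}$ for every $i$, because $p_{i,1}$ is the nearest point of $P$ to $q$ on $r_i$; (b) consecutive points $p_{i,j}$ and $p_{i,j+1}$ on a ray are mutually visible; and (c) for every $1 \leq i \leq m-1$, every point of $r_i$ is mutually visible to every point of $r_{i+1}$, since the convex wedge between two consecutive rays is empty of points of $P$ in its interior. Every edge used by the cycle I build will be supplied by one of these three facts.

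Then I would form the cycle by a zig-zag sweep. Starting at $q$, go outward along $r_1$ through $p_{1,1}, \ldots, p_{1,k_1}$ using (a) and (b); jump to $p_{2,k_2}$ by (c); come back inward along $r_2$ to $p_{2,1}$; continue, reversing direction on each successive ray. When $m$ is even, this terminates at $p_{m,1}$, and fact (a) closes the cycle back to $q$. When $m$ is odd, the naive zig-zag would terminate at $p_{m,k_m}$, which is not visible to $q$; to handle this I would apply a twist and traverse $r_m$ inward as well, arriving at the near end $p_{m-1,1}$ of $r_{m-1}$, jumping directly to $p_{m,k_m}$ (still valid by (c), since $r_{m-1}$ and $r_m$ are adjacent), and descending through $p_{m,k_m-1}, \ldots, p_{m,1}$ before closing at $q$. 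In both cases the walk visits $q$ and each $p_{i,j}$ exactly once and uses only edges of $G$, giving the desired Hamiltonian cycle.

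The main obstacle is the parity of $m$: the basic zig-zag alone does not suffice when $m$ is odd, and the modification above is what makes the argument uniform. The reason the odd-case twist is safe is that fact (c) applies to every pair of consecutive rays, so reversing the traversal direction on the final ray creates no new visibility demand beyond what is already guaranteed by adjacency of $r_{m-1}$ and $r_m$. Minor details to verify are the degenerate case $m = 1$, which is ruled out because $q$ is an extreme point of a non-collinear $P$, and the cases $k_i = 1$, which merely collapse one pass of the zig-zag without affecting correctness.
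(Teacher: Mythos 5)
Your proof is correct, but it takes a genuinely different route from the paper. The paper peels the point set into convex layers $H_1,\dots,H_k$ and splices consecutive layers into one cycle by exchanging a hull edge of the outer layer for two tangent edges to the next layer, which forces a small case analysis (tangents meeting the inner layer at the same point, at the two endpoints of one edge, or at non-adjacent points) plus a separate treatment of a degenerate innermost layer. You instead fan all of $P\setminus\{q\}$ around a single extreme point $q$ and run a zig-zag over the rays; the only geometric inputs are your facts (a)--(c), and (c) is sound because $q$ is a hull \emph{vertex}, so the fan lies in an open halfplane, no two rays are opposite, each consecutive wedge is convex with empty interior, and a segment joining points on two adjacent rays meets each bounding ray only at its endpoint. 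The parity twist on the last ray is also fine, since (c) lets you enter $r_m$ at either end, and the degenerate cases ($m=1$ forcing collinearity, $k_i=1$) are correctly dispatched. What your approach buys is a shorter, essentially case-free argument that is also robust to collinear points on hull layers, a spot where the paper's tangent construction is delicate; the paper's layer-merging approach produces a cycle organized along the onion structure but at the cost of more bookkeeping. Both constructions run in linear time given the embedding.
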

\begin{proof} Let $H_1,H_2,...,H_k$ be the convex layers of points in a visibility embedding of $G$, where
$H_1$ and $H_k$ are the outermost and innermost layers respectively. Let $p_ip_j$ be an edge of $H_1$,
where $p_j$ is the next clockwise point of $p_i$ on $H_1$ (Fig. \ref{ham1}(a)). Draw the left tangent of $p_i$ to $H_2$ meeting 
$H_2$ at a point $p_l$ such that the entire $H_2$ is to the left of the ray starting from $p_i$
through $p_l$. Similarly, draw the left tangent from $p_j$ to $H_2$ meeting $H_2$ at a point 
$p_m$. If $p_l=p_m$ then take the next clockwise point of $p_l$ in $H_2$ and call it $p_t$.
Remove the edges $p_ip_j$ and $p_lp_t$, and add the edges $p_ip_l$ and $p_jp_t$ (Fig.  \ref{ham1}(a)). 
Consider the other situation where $p_l \neq p_m$. If $p_lp_m$ is an edge, then remove the edges 
$p_ip_j$ and $p_lp_m$, and add the edges $p_ip_l$ and $p_jp_m$ (Fig.  \ref{ham1}(b)).  
If $p_lp_m$ is not an edge of $H_2$, take the next counterclockwise point of $p_m$ on $H_2$
and call it $p_q$. Remove the edges $p_ip_j$ and $p_qp_m$, and add the edges $p_ip_q$ and $p_jp_m$ (Fig. \ref{ham2}(a)).

\begin{figure} 
\begin{center}
\centerline{\hbox{\psfig{figure=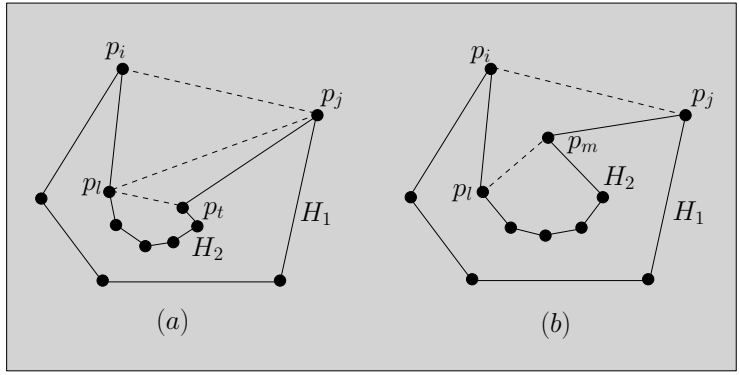,width=0.605\hsize}}}
\caption{ (a) The left tangents of $p_i$ and $p_j$ meet $H_2$ at the same point $p_l$. 
(b) The left tangents of $p_i$ and $p_j$ meet $H_2$ at points $p_l$ and $p_m$ of the same edge.}
\label{ham1}
\end{center}
\end{figure}

%
%
%
%
$\\ \\$
Thus, $H_1$ and $H_2$ are connected forming a cycle $C_{1,2}$. Without the loss of generality, we assume that 
$p_m \in H_2$ is the next counter-clockwise point of $p_j$ in $C_{1,2}$ (Fig. \ref{ham2}(b)). Starting from $p_m$, repeat the 
same construction to connect 
$C_{1,2}$ with $H_3$ forming $C_{1,3}$. Repeat till all layers are connected to form a Hamiltonian cycle $C_{1,k}$.
Note that if $H_k$ is just a path (Fig.  \ref{ham2}(b)), it can be connected trivially to form $C_{1,k}$.   \end{proof}
\begin{figure}  
\begin{center}
\centerline{\hbox{\psfig{figure=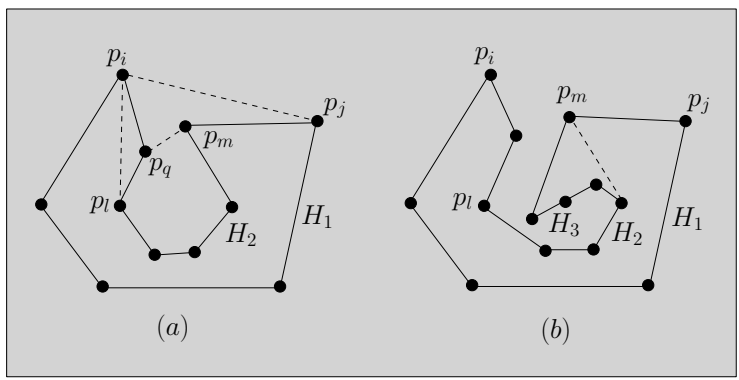,width=0.605\hsize}}}
\caption{ (a) The left tangents of $p_i$ and $p_j$ meet $H_2$ points $p_l$ and $p_m$ of different edges. 
 (b) The innermost convex layer is a path which is connected to $C_{1,2}$.}
\label{ham2}
\end{center}
\end{figure}
\begin{corollary} \label{hampoly}
 Given $G$ and a visibility embedding of $G$, a Hamiltonian cycle in $G$ can be constructed in linear time. 
\end{corollary}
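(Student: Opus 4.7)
The plan is to implement the constructive proof of Theorem \ref{hamcyc} and argue that each of its phases can be carried out in $O(n)$ time given the visibility embedding. The construction has two phases: (i) compute the convex layers $H_1,H_2,\ldots,H_k$ of the given point set, and (ii) iteratively merge each layer $H_i$ into the cycle $C_{1,i-1}$ already built from the preceding layers using the tangent-edge swap described in the theorem.

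For phase (i), I would exploit the fact that on every convex layer, consecutive vertices are mutually visible (there is no blocker between them), so each layer is a cycle in $G$. Starting from a leftmost point, we can trace the outer hull by repeatedly walking to the next vertex in angular order among the points of $G$ still under consideration; since the next hull vertex is always a neighbor in $G$ and the embedding gives coordinates, each hull vertex is identified in time proportional to its degree along the layer being traced. Iterating on the remaining points yields all layers, and since every point is discovered exactly once and contributes a constant amount of work per incidence to an adjacent layer edge, the total time for phase (i) is $O(n)$.

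For phase (ii), each merge step between $C_{1,i-1}$ and $H_i$ needs two left tangents from a chosen edge $p_ip_j$ of the current outer cycle to $H_i$. I would maintain $H_i$ as a doubly linked list of its vertices in clockwise order, so tangent points $p_l, p_m$ can be located by a local walk along $H_i$ starting from any reference vertex produced by the previous merge; each such walk advances the reference pointer and so can be charged to the vertices of $H_i$ being traversed. The edge swap itself is $O(1)$. Summing over all layers, the total work for all tangent searches and edge swaps is $O(\sum_i |H_i|) = O(n)$.

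The main obstacle is ensuring that the angular sweeps and tangent computations are truly linear rather than $O(n \log n)$: a naive reading of the proof invites a per-vertex angular sort. I would handle this by computing, once up front in $O(n)$ total, the cyclic order of vertices along each layer (obtained during phase (i) itself), so that all subsequent tangent tests reduce to walking along already-ordered cyclic lists rather than performing repeated angular comparisons. With this bookkeeping in place, both phases fit into $O(n)$ and the corollary follows.
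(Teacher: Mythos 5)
There is a genuine gap, and it starts with how you read the statement. The paper measures time against the \emph{input size}, and the input here is the graph $G$ itself (given in adjacency form) together with the embedding; since $G$ already lists all $m$ edges, an $O(n+m)$ or even $O(n^2)$ computation is ``linear.'' The paper's one-line proof is exactly this observation: the standard gift-wrapping computation of all convex layers runs in $O(n^2)$, which is linear in the size of the adjacency matrix, and the layer-merging of Theorem~\ref{hamcyc} is then cheap. You instead claim a bound of $O(n)$ in the number of vertices, which is a much stronger statement and one your argument does not support. Computing even the outermost convex hull of $n$ points has an $\Omega(n\log n)$ lower bound in general, so ``the total time for phase (i) is $O(n)$'' cannot be right as stated. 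The specific step that fails is the assertion that ``each hull vertex is identified in time proportional to its degree along the layer being traced'': to advance the gift-wrapping pointer from a hull point $p$ you must select the angularly extremal candidate among all of $p$'s neighbours in $G$ that are still unpeeled, which costs $\Theta(\deg_G(p))$, not $O(1)$; summed over all points this is $\Theta(m)$, and $m$ can be $\Theta(n^2)$ for a PVG. Likewise, the claim that you can produce ``once up front in $O(n)$ total, the cyclic order of vertices along each layer'' is unsupported --- that cyclic order is precisely the output of the convex-layers computation you are trying to bound.

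Your phase (ii) is essentially sound: once the layers are available as cyclic lists, the tangent searches can be charged to walks along each $H_i$ and the edge swaps are $O(1)$, giving $O(n)$ for the merging, which matches what the paper implicitly does. To repair the proof you need only replace the $O(n)$ claim for phase (i) by the observation the paper makes: the adjacency information of $G$ is part of the input, so a $\Theta(\sum_p \deg_G(p)) = \Theta(m)$ (or $O(n^2)$) layer computation is already linear in the input size, and the corollary follows.
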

\begin{proof} This is because the combinatorial representation of G contains all its edges, and hence the gift-wrapping 
algorithm for finding the convex layers of a point set becomes linear in the input size.   \end{proof}
\begin{lemma} \label{ineq} Consider a visibility embedding of $G$. Let A, B and C be three nonempty, disjoint sets of points in it
 such that $\forall p_i \in A $ and $\forall p_j \in C$, the GSP between $p_i$ and $p_j$ contains at least one point from B,
and no other point from A or C (Fig. \ref{ineqcons}(a)).
 Then $ |B| \geq |A| + |C| - 1$ \cite{viscon-wood-2011, viscon-wood-2012, recogpvg-ghosh-roy-2011}.
\end{lemma}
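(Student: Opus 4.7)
\noindent The plan is to proceed by induction on $|A|+|C|$, with the base case $|A|=|C|=1$ being immediate, since the segment between the unique $A$-point and the unique $C$-point must contain at least one point of $B$, and $|A|+|C|-1=1$.

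For the inductive step, I would aim to locate a pair $(a^*,c^*)$ with $a^*\in A$, $c^*\in C$, together with a \emph{private} blocker $b^*\in B$ lying on the segment $a^*c^*$, meaning that $b^*$ does not lie on any other segment between an $A$-point and a $C$-point. Given such a triple, removing $a^*$ from $A$ and $b^*$ from $B$ yields a configuration $(A\setminus\{a^*\},\, B\setminus\{b^*\},\, C)$ that still satisfies the hypothesis of the lemma: every remaining pair $(a,c)$ with $a\in A\setminus\{a^*\}$ and $c\in C$ retains its original blocker in $B\setminus\{b^*\}$, because $b^*$ only blocked $(a^*,c^*)$. Invoking the induction hypothesis then gives $|B|-1 \geq (|A|-1)+|C|-1$, so $|B|\geq|A|+|C|-1$.

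To produce $(a^*,c^*,b^*)$, I would examine the convex hull $H$ of $A\cup C$. If $H$ has an edge whose endpoints are $a^*\in A$ and $c^*\in C$, pick any $b^*\in B$ on segment $a^*c^*$. Then $b^*$ lies strictly between $a^*$ and $c^*$ on a hull edge, hence on the boundary of $H$. Any other segment $(a,c)$ with $a\in A$, $c\in C$ lies inside the closed hull $H$; if it passed through the boundary point $b^*$ as an interior point of the segment, convexity would force the whole segment to lie along the line $a^*c^*$. Since this line contains no $A$ point other than $a^*$ and no $C$ point other than $c^*$, we would obtain $a=a^*$ and $c=c^*$, so $b^*$ is indeed private.

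The main obstacle is the degenerate case where no hull edge of $A\cup C$ joins an $A$-point to a $C$-point, which forces all hull vertices to be of a single type --- say all in $A$ --- with $C$ strictly inside $H$. Here the direct hull-edge extraction fails, and a more delicate extremal construction is needed: for instance, iteratively peeling off hull vertices of $A$ until a $C$ point becomes exposed on the reduced hull, or selecting an extremal $C$ vertex $c^*$ together with an $A$ point on the opposite side of a supporting line at $c^*$ so that the blocker on $a^*c^*$ remains private. This is the step that requires the most care.
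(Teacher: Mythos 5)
Your induction is not complete: the entire burden of the argument rests on producing a pair $(a^*,c^*)$ with a \emph{private} blocker, and you only establish this when the convex hull of $A\cup C$ has a bichromatic edge. The monochromatic case (say all hull vertices of $A\cup C$ lie in $A$, with $C$ strictly inside) is not a degenerate corner --- it is the typical configuration, e.g.\ Fig.~\ref{ineqcons}(a) itself has all three $C$-points inside the hull of nothing bichromatic in general --- and neither of your suggested remedies works as stated. ``Peeling'' hull vertices of $A$ changes the count on the wrong side of the inequality: if you discard an $A$-point without also discarding a $B$-point, the inductive bound you recover is $|B|\ge(|A|-1)+|C|-1$, one short of what you need; and if you try to discard a $B$-point along with it, you are back to needing a private blocker for that vertex, which is the unsolved problem. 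The supporting-line idea also fails to guarantee privacy: a blocker on $a^*c^*$ can perfectly well be crossed transversally by another segment $ac$ whose endpoints straddle the line $a^*c^*$, and nothing in the extremality of $c^*$ within $C$ prevents $A$-points from lying on both sides. There are also two smaller issues you should fix even in the bichromatic case: when $|A|=1$ you cannot delete $a^*$ (delete $c^*$ instead, by symmetry), and your justification that $b^*$ is private appeals to the \emph{line} $a^*c^*$ containing no other $A$- or $C$-points, whereas the hypothesis only controls the \emph{segment}; the conclusion still follows, but via the hypothesis applied to the pair $(a,c)$ or $(a^*,c^*)$, not via the claim as written.

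For comparison, the paper avoids induction entirely with a direct counting argument: fix one point $p_i\in A$ and draw the rays from $p_i$ through each point of $C$. These rays are distinct (two $C$-points on one ray would violate the hypothesis), and each carries a blocker between $p_i$ and its $C$-point, giving $|C|$ blockers. The rays cut the plane into wedges containing the remaining $|A|-1$ points of $A$; within each wedge, the segments from a corner point $p_k\in C$ to the $A$-points inside it meet only at $p_k$ and each carries its own blocker, giving $|A|-1$ further blockers disjoint from the first batch. This yields $|B|\ge|A|+|C|-1$ in one step and sidesteps the private-blocker extraction that your approach cannot complete. If you want to salvage the inductive route, you must either prove the existence of a private blocker in the monochromatic-hull case or restructure the induction so that the deleted blocker is only required to block pairs involving the deleted point; as it stands, the proof has a genuine hole.
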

 
\begin{figure} 
\begin{center}
\centerline{\hbox{\psfig{figure=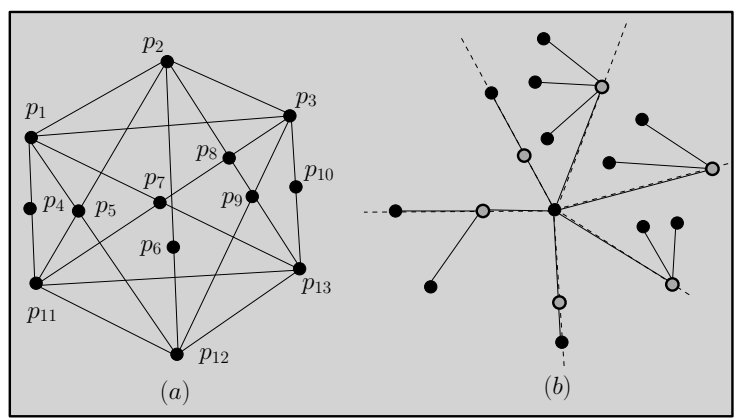,width=0.60\hsize}}}
\caption{  (a) A PVG with A = $\{p_1$, $p_2$, $p_3\}$, 
B=$\{p_4$, $p_5$, $p_6$, $p_7$, $p_8$, $p_9$, $p_{10}\}$ and 
C=$\{p_{11}$, $p_{12}$, p$_{13}\}$. (b) Points of A and C connected by edges representing blockers. }
\label{ineqcons}
\end{center}
\end{figure}

\begin{proof} Draw rays from a point $p_i \in A$ through every point of C (Fig. \ref{ineqcons}(b)). 
These rays partition the plane into $|C|$ wedges. Since points of C are not visible from $p_i$, there is at least 
one blocker lying on each ray   between $p_i$ and the point of C on the ray. So, there are at least
$|C|$ number of such blockers. Consider the remaining $|A-1|$ points of A lying in different wedges.
Consider a wedge bounded  by two rays drawn through $p_k\in C$ and $p_l\in C$.
Consider the segments from $p_k$ to all points of A in the wedge. Since these segments meet only at 
$p_k$, and $p_k$ is not visible from any point of A in the wedge, each of these segments must contain a distinct blocker.
So, there are at least $|A|-1$ blockers in all the wedges. Therefore the total number of points
in B is at least  $|A| + |C| - 1$.   \end{proof}  
\begin{lemma}
 
\label{genineq} Consider a visibility embedding of $G$. Let A and C be two nonempty and disjoint sets of points  
 such that 
no point of A is visible from any point of C.
  Let B be the set of points (or blockers) on the segment  $p_i p_j$, $\forall p_i \in A $ 
and $\forall p_j \in C$, and blockers in B are allowed to be points of A or C. Then $ |B| \geq |A| + |C| - 1$ \cite{recogpvg-ghosh-roy-2011}.
 
\end{lemma}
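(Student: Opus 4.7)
The plan is to adapt the ray-and-wedge argument used in the proof of Lemma \ref{ineq} to the more general setting where blockers are allowed to coincide with points of $A$ or $C$. The new combinatorial feature is that when several $C$-points are collinear with the chosen pivot, the nearer ones themselves act as blockers for the farther ones, and symmetrically for $A$-points.

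First I pick any $p_i \in A$ and draw rays from $p_i$ through every point of $C$. Unlike in Lemma \ref{ineq}, these rays need not be distinct; let $r_1,\ldots,r_m$ be the distinct rays, with $n_j$ points of $C$ lying on $r_j$, so that $\sum_j n_j = |C|$. On each ray $r_j$ I list the $C$-points in order of increasing distance from $p_i$ as $c_{j,1},\ldots,c_{j,n_j}$. Since $p_i c_{j,n_j}$ is an $A$-$C$ segment, every point of the underlying set lying in its open interior belongs to $B$; this already accounts for the $n_j-1$ blockers $c_{j,1},\ldots,c_{j,n_j-1}$, and the non-visibility of the pair $(p_i,c_{j,1})$ forces one further point of $B$ strictly between $p_i$ and $c_{j,1}$. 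Hence each ray contributes at least $n_j$ distinct elements of $B$, and because the rays are pairwise disjoint outside $p_i$, summing yields at least $|C|$ ray blockers.

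Next I handle the remaining $|A|-1$ points of $A\setminus\{p_i\}$ by exhibiting one \emph{wedge blocker} for each, lying strictly inside one of the wedges formed by the rays and therefore disjoint from every ray blocker counted so far. If $p_k$ lies in the open wedge bounded by $r_j$ and $r_{j+1}$, then the segment from $p_k$ to $c_{j,1}$ must contain a blocker, and that blocker lies strictly inside the wedge. If $p_k$ lies on a ray $r_j$, I instead use the segment from $p_k$ to a $C$-point on an adjacent ray, so the blocker again falls strictly inside the adjacent wedge. Exactly as in the proof of Lemma \ref{ineq}, two segments issued for distinct points $p_k$ meet only at their common $C$-endpoint, so the wedge blockers obtained are pairwise distinct.

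Adding the $|C|$ ray blockers to the $|A|-1$ wedge blockers yields $|B|\geq |A|+|C|-1$, as required. The main obstacle is the careful book-keeping needed when some $p_k\in A\setminus\{p_i\}$ happens itself to be a ray blocker; I avoid double-counting by always routing the wedge blocker for $p_k$ through a $C$-endpoint lying on a ray distinct from any ray containing $p_k$, so that the wedge blocker lands strictly in a wedge interior rather than on a ray. The minor degenerate case $m=1$, in which all of $C$ is collinear with $p_i$, is handled by noting that an $A$-point on the opposite ray forces a segment through $p_i$, so that $p_i$ itself lies in $B$ and supplies the missing element.
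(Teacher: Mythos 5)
Your per-ray count of $|C|$ blockers is sound, but the step that assigns a \emph{distinct} wedge blocker to each of the $|A|-1$ points of $A\setminus\{p_i\}$ has a genuine gap, and it sits exactly where this lemma differs from Lemma \ref{ineq}. First, by re-routing the $A$-points that lie \emph{on} a ray $r_j$ to a $C$-point on an adjacent ray, you place inside a single wedge two families of segments with \emph{different} $C$-endpoints: those anchored at $c_{j,1}$ on $r_j$ (for the interior $A$-points) and those anchored at a $C$-point on $r_{j+1}$ (for the on-ray $A$-points). Two such segments share no endpoint, so the assertion that they ``meet only at their common $C$-endpoint'' does not apply to them; they can cross at an interior point of the wedge, and that crossing point can be the unique blocker on both segments, in which case your assignment hands the same element of $B$ to two different $A$-points. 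Second, even for segments that do share the apex $c_{j,1}$, the hypotheses here (unlike in Lemma \ref{ineq}) permit several $A$-points of the wedge to be collinear with $c_{j,1}$; then the two segments overlap rather than meeting only at the apex, and distinctness again needs an argument, since the nearer $A$-point may be the only available blocker for the farther one. You flag this phenomenon in your opening paragraph, but the wedge step never actually implements it. There is also an unhandled degeneracy beyond $m=1$: if an $A$-point sits on a ray whose adjacent wedge is a straight or reflex angle (e.g.\ $m=2$ with opposite rays), the segment to the adjacent ray is not strictly interior to any wedge and its blocker can land back on a ray.

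The paper avoids all of this by never moving the on-ray $A$-points off their rays: on each ray it counts \emph{every} point strictly between $p_i$ and the farthest point of $A\cup C$ as a blocker, which absorbs the on-ray $A$-points into the per-ray tally (giving $a_j+n_j$ per ray rather than your $n_j$), and inside each wedge it anchors all segments at a single $C$-point and counts all points between that apex and the farthest collinear $A$-point in each direction. Keeping every wedge's segments concurrent at one apex is what makes distinctness immediate. Your argument can be repaired along those lines, but as written the pairwise distinctness of the wedge blockers is not established.
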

\begin{proof}  Draw rays from a point $p_i \in A$ through every point of C. 
These rays partition the plane into at most $|C|$ wedges.
Consider a wedge bounded  by two rays drawn through $p_k\in C$ and $p_l\in C$.
Since these rays may contain other points of A and C, all points  
 between $p_i$ and the farthest point from $p_i$ on a ray, are blockers in B. 
Observe that all these blockers except one may be from A or C.
 Thus,  excluding $p_i$, B has at least 
as many points as from A and C on the ray.
Consider the points of A inside the wedge.
Draw segments from $p_k$ to all points of A in the wedge.
Since these segments may contain multiple points from A, all points on 
a segment between $p_k$ and the farthest point from $p_k$ are blockers in B. All these points except one may be from A.
Thus, B has at least 
as many points as from A inside the  wedge. Therefore the total number of points
in B is at least  $|A| + |C| - 1$.    \end{proof}
\section{Computational complexity of the recognition problem}
 In this section we show that the recognition problem for a PVG lies in PSPACE.
Our technique in the proof follows a similar technique used by Everett \cite{ev-vgc-90} for
showing that the recognition problem for polygonal visibility is in PSPACE.
We start with the following theorem of Canny \cite{can-pspace}.  
\begin{theorem}\label{cannypspace}
Any sentence in the existential theory of the reals can be decided in PSPACE.\end{theorem}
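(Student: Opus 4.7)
The plan is to follow the classical critical-point argument from real algebraic geometry. First, I would normalize an arbitrary existential sentence $\exists x_1, \ldots, x_n\, \varphi(x_1, \ldots, x_n)$, where $\varphi$ is a Boolean combination of polynomial (in)equalities, into the question: ``does the semi-algebraic set $S \subseteq \mathbb{R}^n$ defined by a fixed system of polynomial equalities and strict inequalities of degree at most $d$ contain a point?'' Introducing slack variables and branching over sign conditions accomplishes this with only polynomial blow-up in the input size.

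Next, I would invoke the \emph{sample point method}: for a nonempty semi-algebraic set defined by polynomials of degree $d$ in $n$ variables, every connected component contains a point whose coordinates are algebraic numbers expressible as roots of a zero-dimensional polynomial system with degrees and bit-complexities bounded by $(nd)^{O(n)}$, i.e.\ singly exponential in the input size. Such representative points can be taken as critical points of a generic linear projection (or of the distance to a generic point), after a small symbolic perturbation forces zero-dimensionality. Deciding the sentence then reduces to guessing a consistent sign condition, guessing a candidate sample point certifying its realizability, and checking the sentence at that point.

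To stay within PSPACE, I would encode each candidate sample point symbolically---for example, via a univariate representation obtained from u-resultants or through a Thom encoding---and evaluate polynomial signs at it using incremental subresultant computations. Every intermediate object has singly exponential degree and bit-length, and arithmetic and sign determination on such algebraic numbers can be carried out in polynomial space by recursive recomputation rather than storage of large intermediate values. The remaining nondeterminism (choice of sign condition and sample point) is removed by Savitch's theorem, which yields NPSPACE $=$ PSPACE.

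The principal obstacle, and the core of Canny's technical contribution, is in the sign-evaluation step: computing the sign of a polynomial at a symbolically presented algebraic number without ever materializing the exponentially large algebraic expressions explicitly. This requires a quantitative gap theorem---a lower bound on the absolute value of a nonzero polynomial at a point representable by the chosen encoding---together with careful scheduling of working precision so that every symbolic manipulation, and the entire recursion tree of recomputations, fits into polynomial space.
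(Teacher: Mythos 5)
The paper does not actually prove this statement: it is imported verbatim as a theorem of Canny \cite{can-pspace} and used as a black box in the proof of Theorem~\ref{recogPSPACE}. There is therefore no in-paper argument to compare yours against; the relevant question is only whether your sketch would stand on its own.

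As an outline of the known proofs it is accurate: reduction of an existential sentence to emptiness of a semi-algebraic set, singly-exponential sample points from the critical-point method, symbolic representation of algebraic sample points (univariate representations or Thom encodings), sign determination via subresultants together with a gap theorem, and removal of the nondeterministic guesses via Savitch's theorem. But as a \emph{proof} it is a roadmap rather than an argument. Every quantitatively hard ingredient is invoked by name and not established: the $(nd)^{O(n)}$ degree and bit-size bounds on sample points, the lower bound on the magnitude of a nonzero polynomial evaluated at such a point, and --- most critically --- the claim that the whole symbolic computation, whose natural intermediate objects have exponentially many bits, can be rescheduled by recomputation so as to run in polynomial space. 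That last claim is essentially the entire technical content of Canny's result (and of the subsequent treatments by Renegar and by Basu, Pollack and Roy); asserting it is not materially different from citing the theorem itself. So your proposal correctly identifies what must be proved, and is consistent with how the theorem is in fact proved in the literature, but it does not close the gap between ``the algorithm exists'' and ``the algorithm runs in polynomial space.'' Given that the paper itself treats the statement purely as a citation, this level of detail is appropriate for the paper's purposes, but it should not be mistaken for a self-contained proof.
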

$\\ $
A sentence in the first order theory of the reals is a formula of the form :
$$ \exists x_1 \exists x_2 ... \exists x_n \textit{P} (x_1 , x_2 , ... , x_n ) $$  
where the $x_i's$ are variables ranging over the real numbers and where   $\textit{P} (x_1 , x_2 , ... , x_n ) $ is a 
predicate built up from $\neg$, $\wedge$,  $\vee$, =, $<$, $>$ , +, $\times$, 0, 1 and -1 in the usual way.

\begin{theorem} \label{recogPSPACE}
The recognition problem for point visibility graphs lies in PSPACE.\end{theorem}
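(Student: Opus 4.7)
The plan is to express ``$G$ is a PVG'' as a sentence in the existential theory of the reals and then invoke Theorem \ref{cannypspace} (Canny). Given an input graph $G$ on $n$ vertices $v_1,\ldots,v_n$, I would introduce $2n$ real variables $x_1,y_1,\ldots,x_n,y_n$, to be interpreted as Cartesian coordinates of candidate points $p_i=(x_i,y_i)$, and build a quantifier-free predicate $P(x_1,y_1,\ldots,x_n,y_n)$ that holds iff the $p_i$ realize $G$ as their visibility graph.

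The predicate $P$ would be a conjunction of three finite blocks of clauses. A \emph{distinctness} block asserts $x_i \neq x_j \vee y_i \neq y_j$ for every pair $i<j$. A \emph{visibility} block asserts, for every edge $(v_i,v_j)$ of $G$, that no other $p_k$ lies on the open segment $p_ip_j$; since $k$ ranges over the $n-2$ remaining indices this becomes a finite conjunction. An \emph{invisibility} block asserts, for every non-edge $(v_i,v_j)$ of $G$, that some other $p_k$ does lie on $p_ip_j$; this is a finite disjunction over the same indices.

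The only primitive I would need is ``$p_k$ is strictly between $p_i$ and $p_j$''. Collinearity is captured by the polynomial equation $(x_j-x_i)(y_k-y_i) = (y_j-y_i)(x_k-x_i)$, and strict betweenness by the disjunction $\bigl((x_i<x_k<x_j)\vee(x_j<x_k<x_i)\bigr) \vee \bigl((y_i<y_k<y_j)\vee(y_j<y_k<y_i)\bigr)$, which is correct whenever $p_i\neq p_j$, a condition already forced by the distinctness block. Everything is built from $=,<,+,\times$ and the constants $0,1,-1$, so $P$ is a legal quantifier-free formula in Canny's language and has size polynomial in $n$.

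Closing all $2n$ coordinate variables under existential quantifiers yields the sentence $\exists x_1\,\exists y_1\,\cdots\,\exists x_n\,\exists y_n\, P(x_1,y_1,\ldots,x_n,y_n)$, which is true iff $G$ admits a visibility embedding, i.e., iff $G$ is a PVG. Theorem \ref{cannypspace} then decides this sentence in PSPACE, giving the result. I do not foresee a deep obstacle; the only care required is in the betweenness encoding, which must simultaneously certify collinearity and exclude the endpoints of the segment while remaining purely within the first-order language over the reals.
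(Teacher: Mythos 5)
Your proposal is correct and takes essentially the same route as the paper: encode the existence of a visibility embedding as a polynomial-size sentence in the existential theory of the reals and decide it via Canny's theorem (Theorem \ref{cannypspace}). The only difference is cosmetic --- the paper parameterizes each potential blocker by an auxiliary variable $t_{i,j,k}$ along the segment, giving $O(n^3)$ extra variables, whereas you express collinearity and strict betweenness directly in the $2n$ coordinates; both yield a legal polynomial-size existential sentence.
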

\begin{proof} Given a graph $G(V,E)$, we construct a formula in the existential theory of the reals polynomial in size of $G$
which is true if and only if $G$ is a point visibility graph. 
$\\ \\$
Suppose $(v_i, v_j) \notin E $. This means that if $G$ admits a visibility embedding,
 then there must be a blocker (say, $p_k$) on the segment joining 
  $p_i$ and $p_j$.
  Let the 
coordinates of the points   $p_i$, $p_j$ and $p_k$ be $(x_i, y_i)$, $(x_j, y_j)$ and $(x_k, y_k)$
respectively. So we have :
$\\ \\$
$\exists t_ \in \mathbbm{R} \Big{(} \big{(}0 < t\big{)} \wedge \big{(}t
 < 1\big{)} \wedge \big{(}(x_k-x_i) = t \times 
(x_j-x_i)\big{)} \wedge \big{(} (y_k-y_i) = t \times (y_j-y_i)\big{)}\Big{)} $
$\\ \\$
Now suppose $(v_i, v_j) \in E $. This means that if $G$ admits a visibility embedding, 
 no point in $P$ lies on the segment connecting 
$p_i$ and $p_j$ to ensure visibility. So, (i) either $p_k$ forms
a triangle with $p_i$ and $p_j$ or (ii) $p_k$ lies on the line passing
through $p_i$ and $p_j$ but not  between $p_i$ and $p_j$.
Determinants of non-collinear points is non-zero. So we have :
$\\ \\$
$\exists t \in \mathbbm{R} \Big{(} \big{(}det(x_i,x_j,x_k,y_i,y_j,y_k) > 0 \big{)}\vee 
\big{(} det(x_i,x_j,x_k,y_i,y_j,y_k) < 0 \big{)} \Big{)} \bigvee  
 \Big{(} \big{(} t > 1 \big{)}\vee\big{(}  t < -1  \big{)}\wedge  \big{(} 
(x_k-x_i) = t \times  (x_j-x_i)\big{)} \wedge \big{(} (y_k-y_i) 
= t \times (y_j-y_i)\big{)}\Big{)} $ 
$\\ \\$
For each triple $(v_i,v_j,v_k)$ of vertices in $V$, we add a $t=t_{i,j,k}$ to the existential part of the formula and 
the corresponding portion to the predicate. So the formula becomes:
$\\ \\$
$ \exists x_1 \exists y_1 ... \exists x_n \exists y_n \exists t_{1,2,3} .... \exists t_{n-2,n-1,n}$ 
$\textit{P}(x_1,y_1,...,x_n,y_n, t_{1,2,3},...,t_{n-2,n-1,n})$
$\\ \\$
which is of size $O(n^3)$. This proves our theorem.   \end{proof}
 \begin{figure} 
\begin{center}
\mbox{\begin{minipage} [b] {70mm}
\centerline{\hbox{\psfig{figure=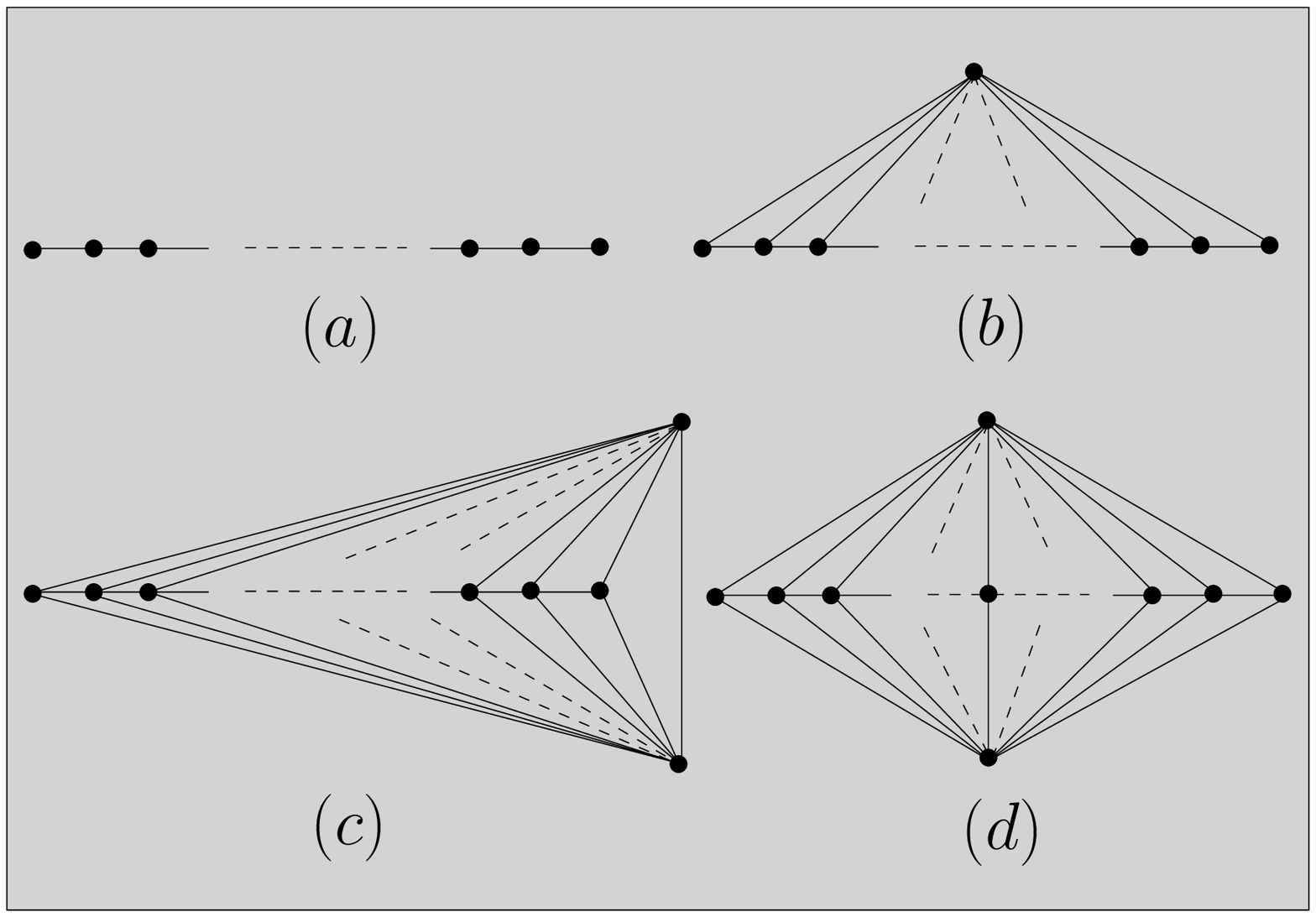,width=0.955\hsize}}}
\caption{ These four infinite families admit planar visibility embedding (Eppstein \cite{epp-plpvg}).}
\label{4inf}
 \end{minipage}}\hspace{2mm}
 \mbox{\begin{minipage} [b] {48mm}
\centerline{\hbox{\psfig{figure=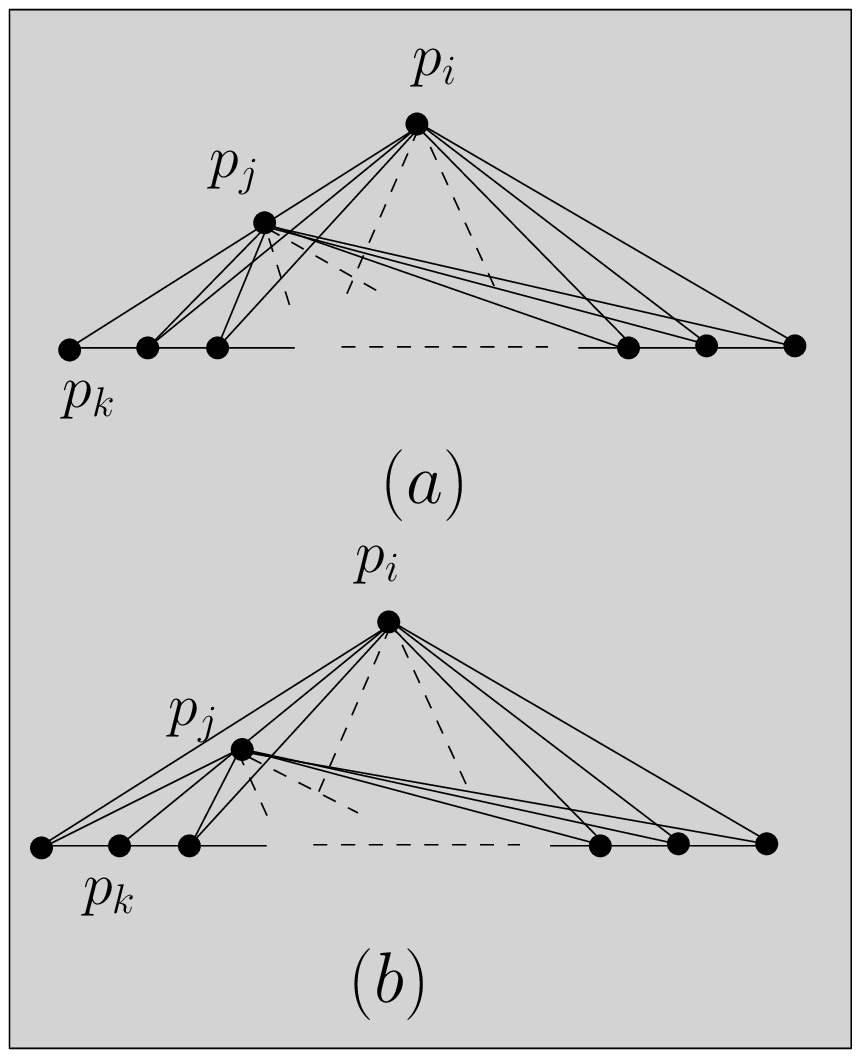,width=0.85\hsize}}}
\caption{ These two infinite families do not admit planar visibility embedding.}
\label{2inf}
\end{minipage}}
\end{center}
\end{figure}
\section{Planar point visibility graphs }
In this section, we present  a characterization, recognition and reconstruction of planar point visibility graphs.
Let $G$ be a given planar graph. 
We know that the planarity of $G$ can be tested
in linear time \cite{lin-plan-2004}.
If $G$ is planar, a straight line embedding of $G$ can also be constructed in
linear time. However, this embedding may not satisfy the required visibility constraints, and therefore, it cannot be a 
visibility embedding.
We know that collinear points play a crucial role in a visibility embedding of $G$.
It is, therefore, important to identify points belonging to a GSP of maximum length. Using this approach, 
  we construct a visibility embedding of a given planar graph $G$, if it exists.
We have the following lemmas on visibility embeddings of $G$.
\begin{lemma} \label{genbnd}
 Assume that $G$ admits a visibility embedding $\xi$. If $\xi$ has at least one $k$-GSP for $k \geq 4 $, then
the number of vertices in $G$ is at most $$ k + \Big{\lfloor} \frac{2k-5}{k-3} \Big{\rfloor}$$  
\end{lemma}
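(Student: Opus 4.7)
The plan is to combine the edge lower bound of Lemma \ref{deg} with the standard planar-graph inequality $|E(G)| \leq 3n-6$. The latter applies because all of Section 4 assumes $G$ is planar, so Euler's formula is available. The former, instantiated on the $k$-GSP that $\xi$ is assumed to contain, says $|E(G)| \geq (k-1) + k(n-k)$.

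Chaining the two bounds yields
\[ (k-1) + k(n-k) \leq 3n - 6, \]
which rearranges to $n(k-3) \leq k^2 - k - 5$. The hypothesis $k \geq 4$ makes $k-3$ positive, so division preserves the inequality and gives $n \leq (k^2 - k - 5)/(k-3)$. A one-line polynomial division, $k^2 - k - 5 = k(k-3) + (2k-5)$, rewrites this as $n \leq k + (2k-5)/(k-3)$. Since $n-k$ is an integer, it is in fact bounded by $\lfloor (2k-5)/(k-3) \rfloor$, which is exactly the claimed bound $n \leq k + \lfloor (2k-5)/(k-3) \rfloor$.

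I do not expect a genuine obstacle here; once the two ingredients are in hand, the argument is purely algebraic. The only points that need a moment of care are (i) checking that Lemma \ref{deg} is applied correctly (its hypothesis is simply the existence of a $k$-GSP in some visibility embedding, which matches ours), and (ii) observing that the assumption $k \geq 4$ is precisely what makes $k-3$ strictly positive, so that dividing does not flip the inequality and the floor on the right-hand side is meaningful. It is also worth noting in passing that the bound becomes trivial or fails for $k = 3$, which explains the hypothesis $k \geq 4$ in the statement.
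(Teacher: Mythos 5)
Your proposal is correct and follows essentially the same route as the paper: combine the edge lower bound $(k-1)+k(n-k)$ from Lemma \ref{deg} with the planar bound $3n-6$, rearrange to isolate $n-k$ (or $n$), and take the floor because $n-k$ is an integer. The only difference is cosmetic bookkeeping — the paper keeps $n-k$ grouped throughout while you expand and then perform the polynomial division — so there is nothing further to add.
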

\begin{proof} By Lemma \ref{deg}, $G$ can have at least $(k-1) + (n-k)k$ edges. By applying Euler's criterion for planar graphs,
we have the following inequality on the 
number of permissible edges of $G$. 
\begin{eqnarray}
 (k-1) + (n-k)k &\leq& 3(n) - 6 \nonumber \\
\Rightarrow (k-1) + (n-k)k &\leq& 3(k+n-k) - 6 \nonumber \\
  \Rightarrow  (k-1)+(n-k)k   &\leq& 3k + 3(n-k) - 6 \nonumber \\
  \Rightarrow  (n-k)(k-3)\ \ \ \ \  &\leq& 2k-5 \nonumber  \\
  \Rightarrow (n-k)\ \ \ \ \ \ \ \ \ \ \ \ \ \  &\leq& \frac{2k-5}{k-3} \nonumber \\ 
\end{eqnarray} 
 Since $(n-k)$ must be an integer, we have
\begin{eqnarray} 
 (n-k)      &\leq& \Big{\lfloor} \frac{2k-5}{k-3} \Big{\rfloor} \nonumber \\ 
  \Rightarrow n \ \ \ \  \ \ \ &\leq&  k + \Big{\lfloor} \frac{2k-5}{k-3} \Big{\rfloor}
 \end{eqnarray}     \end{proof}


\begin{figure} 
\begin{center}
\centerline{\hbox{\psfig{figure=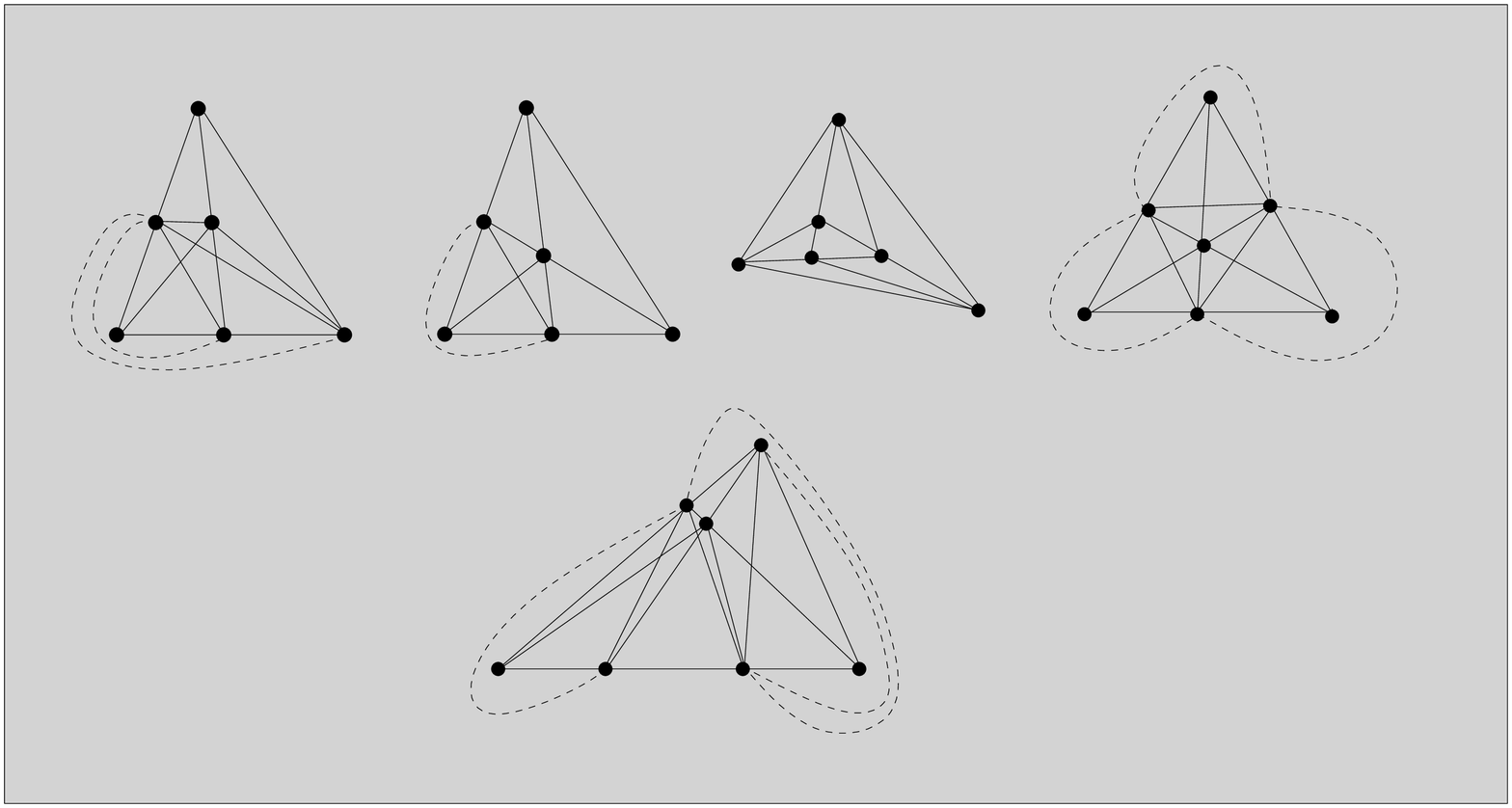,width=0.70\hsize}}}
\caption{ Five planar PVGs that do not belong to any of the six families.
Dotted lines show how the edge-crossings in the visibility embedding can be avoided in a planar embedding.}
\label{9ppvg}
\end{center}
\end{figure}
\begin{corollary} \label{sixfam}
  There are six infinite families of planar graphs $G$ that admit a visibility embedding $\xi$ with a
 $k$-GSP for $k \geq 5$ (Figs. \ref{4inf} and \ref{2inf}).
 \end{corollary}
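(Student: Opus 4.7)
The plan is to derive the structural bound $n \leq k+2$ from Lemma~\ref{genbnd}, and then enumerate the possible configurations of points off the GSP line. Observe first that $(2k-5)/(k-3) = 2 + 1/(k-3)$, which for every $k \geq 5$ lies in the interval $(2, 5/2]$; hence $\lfloor(2k-5)/(k-3)\rfloor = 2$, and Lemma~\ref{genbnd} yields $n \leq k+2$. Thus in any visibility embedding $\xi$ witnessing the $k$-GSP, at most two points lie off the line $L$ supporting the GSP.

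I would then split into cases according to $r = n-k \in \{0,1,2\}$. When $r = 0$, the graph is just the path $P_k$, contributing one family. When $r = 1$, Lemma~\ref{adj} forces the unique off-line point to be visible from every GSP point, producing a fan-like family. When $r = 2$, call the two off-line points $p$ and $q$; Lemma~\ref{adj} again forces at least one of them, say $p$, to see all $k$ collinear points, while $q$ is constrained by the requirement that any blocker between $q$ and a non-visible GSP point must itself lie on $L$ (or coincide with $p$). From here the subcases split by (i) whether $p$ and $q$ lie on the same or opposite sides of $L$, (ii) which subset of GSP points is visible from $q$, and (iii) whether the line through $p$ and $q$ crosses $L$ between two GSP points or exits past an endpoint.

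In each subcase I would read off the forced adjacencies, check against the tight planar edge bound $(k-1) + k(n-k) \leq 3n-6$ used in the proof of Lemma~\ref{genbnd}, and match the resulting graph with one of the six families: the four families of Fig.~\ref{4inf} from Eppstein's planar-embeddable characterization, and the two additional families of Fig.~\ref{2inf} whose visibility embeddings are forced to be non-planar. Configurations not fitting any of these would have to violate either planarity or one of Lemmas~\ref{adj}, \ref{deg}, \ref{conn}, giving the required completeness.

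The main obstacle is verifying, in the $r=2$ case, that no additional infinite family escapes the classification. Configurations in which $q$ sees only a proper subset of the GSP, and the blocker structure forces several vertex-blockers to accumulate on $L$, must be carefully ruled out by combining Lemma~\ref{deg} with the planar edge count; similarly, configurations with $p$ and $q$ on opposite sides of $L$ need to be shown to coincide, up to isomorphism, with one of the two families of Fig.~\ref{2inf}. This combinatorial bookkeeping, rather than the inequality itself, is the delicate step of the argument.
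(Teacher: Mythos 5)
Your proposal follows essentially the same route as the paper: both derive $n \leq k+2$ from Lemma~\ref{genbnd} (your explicit computation $\lfloor(2k-5)/(k-3)\rfloor = 2$ for $k \geq 5$ is the same bound the paper asserts) and then enumerate the configurations with zero, one, or two points off the GSP line, arriving at the six families of Figs.~\ref{4inf} and \ref{2inf}. Your $r=2$ subcase breakdown (sides of $L$, visible subsets, blocker placement) is a slightly more explicit version of the paper's own enumeration, which likewise distinguishes whether the two off-line points see each other and whether the blocked GSP point is an endpoint or an interior point of $l$.
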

\begin{proof} For $k\geq 5$, $n \leq k + 2$. There can be only six
infinite families of graphs having at most two points outside a maximum size GSP in $\xi$ (denoted as $l$)
as follows. 
\begin{enumerate}
 \item There is no point lying outside $l$ in $\xi$ (see Fig. \ref{4inf}(a)).
\item There is only one point lying outside $l$ in $\xi$ that is adjacent to all points in $l$ (see Fig. \ref{4inf}(b)).
\item There are two points lying outside $l$ in $\xi$ that are adjacent to all other points in $\xi$ (see Fig. \ref{4inf}(c)).
\item There are two points lying outside $l$ in $\xi$ 
that are not adjacent to each
other but adjacent to all points of $l$ in $\xi$ (see Fig. \ref{4inf}(d)).
\item There are two points $p_i$ and $p_j$ lying outside $l$ in $\xi$ 
such that $p_i$ and $p_j$ are adjacent to all other points in $\xi$ except an endpoint $p_k$ of $l$  
 as $p_j$ is a blocker on $p_ip_k$ (see Fig. \ref{2inf}(a)).
\item Same as the previous case, except $p_k$ is now an intermediate point of $l$ in $\xi$ (see Fig. \ref{2inf}(b)).
\end{enumerate}   \end{proof} 
Let us identify those graphs that do not belong to these six infinite families. We show in the following that such graphs
can have a maximum of eight vertices.
\begin{lemma} \label{4csp}
Assume that $G$ admits a visibility embedding $\xi$. If $\xi$ has at least one 4-GSP, then
the number of vertices in $G$ is at most seven.
\end{lemma}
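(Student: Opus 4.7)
The plan is to obtain Lemma \ref{4csp} as a direct specialization of Lemma \ref{genbnd} to the boundary case $k=4$. Since we are in Section 4, $G$ is assumed planar, and the hypothesis of Lemma \ref{genbnd} (existence of a $k$-GSP with $k\ge 4$ in $\xi$) is satisfied with $k=4$ by assumption. Plugging $k=4$ into the bound gives
\[
n \;\le\; 4 + \Big\lfloor \frac{2\cdot 4 - 5}{4-3}\Big\rfloor \;=\; 4 + \lfloor 3 \rfloor \;=\; 7,
\]
which is exactly the claim.

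The only verification needed is that the Euler-style derivation in the proof of Lemma \ref{genbnd} remains valid at the boundary value $k=4$. This amounts to checking that (i) the denominator $k-3 = 1$ is strictly positive, so dividing the inequality $(n-k)(k-3) \le 2k-5$ by $(k-3)$ does not flip the direction, and (ii) the planar inequality $|E| \le 3n - 6$ applies, which only requires $n \ge 3$; this holds trivially since the presence of a 4-GSP forces $n \ge 4$. Hence nothing goes wrong in the limiting case.

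Because the statement is a one-line corollary of Lemma \ref{genbnd}, there is no real obstacle; the main content is conceptual rather than technical. The point of isolating it as a separate lemma is structural: together with Corollary \ref{sixfam} (which handles $k \ge 5$), it completes the case analysis on the size of a maximum GSP in $\xi$, reducing the characterization of planar PVGs outside the six infinite families to a finite search over graphs on at most seven vertices (to be combined later with the $k\le 3$ cases).
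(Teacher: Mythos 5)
Your proof is correct and matches the paper's own argument, which likewise just substitutes $k=4$ into the bound of Lemma \ref{genbnd} to obtain $n \le 4 + \lfloor 3/1 \rfloor = 7$. Your additional sanity checks on the boundary case (positivity of $k-3$ and applicability of $|E|\le 3n-6$) are sound but not needed beyond what the paper states.
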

\begin{proof} Putting $k=4$ in the formula of Lemma \ref{genbnd}, we get $n \leq 7$.   \end{proof}
\begin{lemma} \label{3csp}
 Assume that $G$ admits a visibility embedding $\xi$. If $G$ has at least one 3-CSP but no 4-CSP,
then $G$ has at most eight vertices.
\end{lemma}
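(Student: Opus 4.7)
The plan is to translate the hypothesis on $G$ into a geometric one on $\xi$ and then combine the degree and edge inequalities from the preceding lemmas with the planarity of $G$ (the standing assumption of this section). The preliminary step is to show that the longest GSP in $\xi$ has exactly three points: four collinear points would realize a 4-CSP in $G$, while the two non-adjacent endpoints of a 3-CSP $(v_i,v_j,v_k)$ are invisible in $\xi$, so a blocker of $P$ on the segment $p_ip_k$ together with $p_i,p_k$ produces three collinear points, i.e.\ a 3-GSP.

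Having fixed the longest GSP of $\xi$ at $3$, I would apply Lemma \ref{mindeg} with $k=3$ to obtain $\delta(G)\ge\lceil (n-1)/2\rceil$, and combine $2|E|\ge n\,\delta(G)$ with the planar edge bound $|E|\le 3n-6$. The resulting quadratic inequality in $n$, after a routine split by the parity of $n$, forces $n\le 8$ in the even case and $n\le 9$ in the odd case. So the only residual case to eliminate is $n=9$.

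Ruling out $n=9$ will be the main obstacle. For this I would first sharpen the edge count via Lemma \ref{deg} with $k=3$, giving $|E|\ge 3n-7=20$, so that together with planarity $|E|\in\{20,21\}$ and the number of non-edges $L$ of $G$ satisfies $L\in\{15,16\}$. On the other hand, because no four points of $\xi$ are collinear, each invisible pair of $G$ is blocked by a unique intermediate point and hence sits on a unique 3-point line of $\xi$, while each 3-point line contributes exactly one such invisible pair (its two outer points). A double count of the pairs of $\xi$—each 3-point line carries $\binom{3}{2}=3$ of them, and each pair is on at most one such line—then caps the number of 3-point lines, and so also $L$, by $\binom{9}{2}/3=12$, contradicting $L\ge 15$. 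This leaves $n\le 8$, as claimed.
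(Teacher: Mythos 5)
Your proof is correct, but it follows a genuinely different route from the paper's. The paper argues incrementally: starting from a 3-GSP it adds the points of $\xi$ one at a time, observes that the $i$th point must contribute at least $\lceil (i-1)/2\rceil$ new edges (since every ray from it to an earlier point carries at most one blocker), and sums these contributions against the Euler bound $3n-6$ to get a contradiction at $n=9$. You instead combine Lemma \ref{mindeg} (minimum degree $\geq \lceil (n-1)/2\rceil$ once the longest GSP is pinned at $3$) with planarity to reduce to the single case $n=9$, and then kill that case by a double-count: invisible pairs are in bijection with $3$-point lines of $\xi$, distinct lines cover disjoint pairs, so there are at most $\binom{9}{2}/3=12$ of them, against the $\geq 15$ non-edges forced by $|E|\leq 21$. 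All steps check out (the parity split does give $n\leq 8$ even, $n\leq 9$ odd, and the injectivity of the map from invisible pairs to $3$-point lines holds because a $3$-point line carries exactly one invisible pair). Two remarks: your line-counting step is actually stronger than you use --- run for general $n$ it gives $\binom{n}{2}-3n+6\leq L\leq \binom{n}{2}/3$ and hence $n\leq 7$, so the degree-based reduction to $n=9$ is dispensable, and your bound is sharper than the lemma's stated $n\leq 8$ (consistently with the paper's appendix, which separately shows no such graph exists on eight vertices). The paper's incremental count, by contrast, is self-contained and feeds directly into the case enumeration of the appendix, whereas your argument isolates a clean extremal statement about point sets with no four collinear points.
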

\begin{proof} Since $G$ has no 4-CSP, and $G$ is not a clique, there is a 3-GSP in $\xi$.
Starting from the 3-GSP, points are added one at a time to construct $\xi$.
Since no subsequent point can be added on the line passing through points of the 3-GSP to prevent forming a 4-GSP,
adding the fourth and fifth points gives at least three edges each in $\xi$.
As $\xi$ does not have a 4-CSP, there can be at most one blocker between an invisible pair of points in $\xi$.
So, for the subsequent points, at least $\lceil \frac{i-1}{2} \rceil$ edges
are added for the $ith$ point.
Since $G$ is planar, by Euler's condition we must have:
$  8 + \displaystyle\sum^n_{i= 6}\Big\lceil \frac{i-1}{2} \Big\rceil \leq 3n-6$. This inequality is valid only up to $n=8$.  
\end{proof}


\begin{lemma} \label{9planar}
 There are five distinct planar graphs $G$ that admit visibility embeddings but do not belong to the six 
 infinite families (Fig. \ref{9ppvg}).
\end{lemma}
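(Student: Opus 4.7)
The plan is to combine the bounds from Lemmas \ref{4csp} and \ref{3csp} with Corollary \ref{sixfam} to reduce the claim to a finite enumeration, and then exhibit the five exceptional graphs.

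First I would argue that any planar PVG $G$ not belonging to the six infinite families must have every GSP of size at most $4$ in any visibility embedding $\xi$. Indeed, Corollary \ref{sixfam} says that if $\xi$ contains a $k$-GSP for some $k\geq 5$, then $G$ lies in one of the six families; so for our $G$ the longest GSP has size at most $4$. Combining this with Lemma \ref{4csp} (the case of a $4$-GSP forces $n\le 7$) and Lemma \ref{3csp} (the case of a $3$-CSP but no $4$-CSP forces $n\le 8$) shows that $G$ has at most $8$ vertices. The residual case, that $G$ has no $3$-CSP at all, means that every visibility embedding of $G$ is in general position, so $G$ is a clique; by planarity $G\in\{K_1,K_2,K_3,K_4\}$, and each of these already appears as a degenerate member of family (a) or (b) of Corollary \ref{sixfam}, so contributes nothing new.

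Next I would perform the finite case analysis on $n\in\{5,6,7,8\}$, splitting on whether the longest GSP has size $4$ or exactly $3$. For each such $n$, I would enumerate the candidate combinatorial structures by fixing the longest GSP as a CSP and placing the remaining vertices. Each remaining vertex is constrained by Lemma \ref{adj} (it must see at least one full max GSP), Lemma \ref{mindeg} (the degree bound $\lceil (n-1)/(k-1)\rceil$), Lemma \ref{conn} (neighbourhoods are connected), and by the edge-count inequality used in Lemma \ref{genbnd} together with Euler's formula $|E|\le 3n-6$. These restrictions drastically cut the search: the possible placements of the extra two-to-four points relative to the max GSP, and relative to each other, fall into only a handful of combinatorial types. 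For each surviving type I would either verify it lies in one of the six families (and discard it) or produce an explicit visibility embedding and check planarity, obtaining one of the five graphs shown in Fig.\ \ref{9ppvg}.

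Finally, to show that no further exceptional graph exists, I would verify for each candidate rejected as a non-PVG that the obstruction comes from Necessary Conditions \ref{nc1}, \ref{nc2} or \ref{nc3}, and for each rejected as non-planar that adding the forced edges produces a $K_5$- or $K_{3,3}$-minor. The main obstacle is precisely this last bookkeeping step: the enumeration is finite but delicate, since for $n=7$ and $n=8$ with a $3$-GSP many sub-configurations arise and one must rule each of them out either by family membership or by showing no visibility embedding exists; producing the concrete embeddings for the five surviving graphs (as drawn in Fig.\ \ref{9ppvg}) is the routine part, but certifying that the list is exhaustive is where all the work lies.
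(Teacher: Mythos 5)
Your opening reduction is sound and matches the paper: Corollary \ref{sixfam} confines any exception to embeddings whose longest GSP has size $3$ or $4$, Lemmas \ref{4csp} and \ref{3csp} then give $n\le 7$ and $n\le 8$ respectively, and the no-3-CSP case degenerates to small cliques already inside the families. But from that point on the proposal stops being a proof. The entire content of this lemma is the enumeration itself --- the claim that the exceptional set has exactly five members --- and you defer that wholesale, conceding that ``certifying that the list is exhaustive is where all the work lies.'' Nothing in what you wrote distinguishes the answer five from four or six: you never exhibit the five graphs, never show any candidate configuration is actually exhausted, and the assertion that your constraints leave ``only a handful of combinatorial types'' is exactly the statement that needs proof.

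You are also missing the specific devices that make the paper's enumeration finite and checkable. The paper does not prune with Lemmas \ref{adj}, \ref{mindeg} and \ref{conn}, nor with Necessary Conditions \ref{nc1}--\ref{nc3}; it counts \emph{forced collinearities}. Euler's bound $|E|\le 3n-6$ against $\binom{n}{2}$ forces a minimum number of invisible pairs, hence of $3$-GSPs: at least three for $n=6$, exactly six for $n=7$ (where ruling out seven uses the Sylvester--Gallai theorem, an ingredient absent from your plan), and at least ten edge-disjoint $3$-CSPs for $n=8$, which would require $20$ edges against the available $18$ and so kills $n=8$ outright. The enumeration is then organized geometrically --- two overlapping $3$-GSPs sharing a point admit only three five-point embeddings, and the next point must sit on a line through exactly two existing points --- with symmetric and isomorphic outcomes identified by hand. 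Without these counting arguments your case analysis for $n=7,8$ has no termination mechanism, so the proposal as written cannot be completed into a proof of the stated count.
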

\begin{theorem} \label{totalchar}
 Planar point visibility graphs can be characterized by six infinite families of graphs and five particular graphs.
\end{theorem}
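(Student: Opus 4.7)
The plan is to combine Corollary \ref{sixfam} (which handles graphs with a long GSP) with Lemmas \ref{4csp}, \ref{3csp}, and \ref{9planar} (which reduce the remaining cases to a bounded finite enumeration). More precisely, I would split the argument into the two usual directions of a characterization theorem, and inside the necessity direction do a case analysis on $k$, the maximum size of a GSP over all visibility embeddings $\xi$ of the given planar PVG $G$.

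For sufficiency, I would simply exhibit the geometry: each of the six infinite families in Figs.~\ref{4inf} and \ref{2inf} comes with an explicit visibility embedding (four of them planar, by Eppstein's result, and two non-planar but still planar as abstract graphs), and each of the five particular graphs of Fig.~\ref{9ppvg} is a planar graph whose displayed visibility embedding realizes it as a PVG, with the dotted lines in the figure witnessing a separate planar drawing. Thus every graph in the claimed list is indeed a planar PVG.

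For necessity, let $G$ be a planar PVG and fix a visibility embedding $\xi$ maximizing $k$. If $k \geq 5$, Corollary \ref{sixfam} places $G$ into one of the six infinite families and we are done. If $k = 4$, Lemma \ref{4csp} bounds $|V(G)| \leq 7$; if $G$ has a $3$-CSP but no $4$-CSP, Lemma \ref{3csp} bounds $|V(G)| \leq 8$; if $G$ has no $3$-CSP at all, then no three points are collinear in $\xi$, so $G$ is a clique, and planarity forces $G \in \{K_1, K_2, K_3, K_4\}$, each of which appears as a degenerate instance of the first family in Fig.~\ref{4inf}(a). The finitely many surviving graphs on $\le 8$ vertices are sorted against the six families, and the residue is exactly the five particular graphs of Lemma \ref{9planar}.

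The main obstacle is Lemma \ref{9planar} itself, i.e.\ the finite but delicate enumeration on at most eight vertices. Concretely, for every candidate planar graph on $n \leq 8$ vertices whose longest CSP has size $3$ or $4$, one must (i) guess the intended assignment of blockers to invisible pairs, (ii) decide whether the resulting collinearity constraints are geometrically realizable as an actual point set (as opposed to merely satisfying Necessary Conditions \ref{nc1}--\ref{nc3}), and (iii) check that the realizing point set has the prescribed adjacencies and no extra ones. Step (ii) is the geometric heart: it amounts to placing points on a small arrangement of lines so that prescribed triples are collinear, prescribed triples are not, and no unwanted coincidences occur. Guided by the explicit pictures in Fig.~\ref{9ppvg}, one verifies that, up to isomorphism, exactly five graphs survive this test outside the six families, which closes the proof.
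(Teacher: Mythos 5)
Your proposal is correct and follows essentially the same route as the paper: the paper's proof of Theorem \ref{totalchar} likewise combines Corollary \ref{sixfam} for the case $k \geq 5$ with the bounds of Lemmas \ref{4csp} and \ref{3csp} and then defers the finite enumeration on at most eight vertices to the appendix, which carries out exactly the case analysis (3-GSP but no 4-GSP, then 4-GSP but no 5-GSP) that you sketch. Your additional remarks on the sufficiency direction and the degenerate clique case are sound but are treated as implicit in the paper.
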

\begin{proof} Five particular graphs can be identified by enumerating all points of eight vertices as shown 
 in Fig. \ref{9ppvg}. For the details of the enumeration, see the appendix.
  \end{proof}
\begin{theorem} \label{polyrecog}
 Planar point visibility graphs can be recognized in linear time.
\end{theorem}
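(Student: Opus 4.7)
The plan is to convert the structural characterization in Theorem \ref{totalchar} into a testable algorithm. Since a planar graph $G$ is a PVG if and only if it is isomorphic to a graph in one of the six infinite families of Figs.~\ref{4inf}--\ref{2inf} or to one of the five particular graphs of Fig.~\ref{9ppvg}, recognition amounts to deciding which (if any) of these eleven templates fits $G$. No visibility embedding needs to be produced to decide membership, so I would work directly with the adjacency representation of $G$.

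First, I would run the linear-time planarity test of \cite{lin-plan-2004}; if it fails, reject. If $|V(G)|\le 8$ I would dispatch to a constant-time routine that checks isomorphism against the five particular graphs and the finitely many small instances of the six families; since only finitely many graphs are involved, this costs $O(1)$. Otherwise $n\ge 9$, and Lemmas~\ref{4csp} and \ref{3csp} force $G$ into one of the six infinite families. In each of these families the graph is built from a long induced path $L$ (the ``spine''), plus at most two extra vertices whose adjacencies to $L$ and to each other follow one of six rigid patterns; moreover, the extras have degree $\Theta(n)$ while every spine vertex has degree at most $4$. I would therefore pick out the (at most two) high-degree vertices as the putative extras, verify by a single scan of the adjacency lists that the remaining $n-1$ or $n-2$ vertices induce a simple path, and then compare the extras' adjacencies against each of the six family patterns. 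Each of these verifications runs in $O(n)$, since $|E|\le 3n-6$ by planarity.

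The main obstacle is that, in general, locating a longest induced path is NP-hard; what saves us here is the rigidity forced by $n\ge 9$, which pins down the spine uniquely as the complement of the at most two high-degree vertices. Once this structural identification is justified, the algorithm reduces to a sequence of local degree and adjacency checks, and correctness follows directly from Theorem~\ref{totalchar}. As a byproduct, once the matching family has been identified, the corresponding visibility embedding is known explicitly from Figs.~\ref{4inf}--\ref{9ppvg}, so the same procedure yields a linear-time reconstruction with no additional work.
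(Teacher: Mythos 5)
Your proposal is correct and follows essentially the same route as the paper: dispatch the finitely many graphs on at most $8$ vertices by direct isomorphism checks, then for larger $n$ identify the maximal CSP (your ``spine'') and verify the adjacency pattern of the remaining one or two vertices against the six infinite families, all in $O(n)$ time since $|E|\le 3n-6$. Your degree-based method for isolating the spine is a reasonable concrete implementation of the step the paper states only as ``the maximum CSP is identified.''
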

\begin{proof} Following Theorem \ref{totalchar}, $G$ is tested initially whether it is isomorphic to any of the six 
particular graphs for $n \leq 8$. Then, the maximum CSP is identified before its adjacency is tested
with the remaining one or two vertices of $G$. 
The entire testing can be carried out in linear time.   \end{proof}
\begin{corollary}
 Planar point visibility graphs can be reconstructed in linear time.
\end{corollary}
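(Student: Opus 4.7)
The plan is to piggyback on the recognition algorithm of Theorem~\ref{polyrecog}: once we know which family $G$ belongs to, the embedding is essentially canonical and we only need to emit explicit coordinates. First I would run the recognition procedure, which in linear time either reports that $G$ is isomorphic to one of the five particular graphs of Lemma~\ref{9planar} (so $n\le 8$), or identifies a maximum CSP $L=(v_1,\ldots,v_k)$ with $k\ge 5$ together with the at most two vertices outside $L$ and their adjacencies to $L$. The classification of Corollary~\ref{sixfam} tells me which of the six infinite families we are in based solely on these adjacencies, so this step already determines the structural template of the embedding.

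Next I would emit coordinates according to the template. For the five particular graphs I keep a precomputed lookup table of valid embeddings, which is constant time. For the six infinite families I place the points of $L$ as $p_i=(i,0)$ for $1\le i\le k$, turning $L$ into a GSP on the $x$-axis. I then place the (zero, one or two) extra vertices off the axis:
\begin{itemize}
\item Family (1): nothing to add.
\item Families (2)--(4): place the one or two extra points in sufficiently general position above (and, if needed, below) the $x$-axis so that they see every $p_i$ and, in family (3), see each other; in family (4) place them on opposite sides of the axis so that the $x$-axis blocks them from each other.
\item Families (5)--(6): place $p_i$ and $p_j$ on a common line through the required endpoint or intermediate point $p_k$ of $L$, with $p_j$ strictly between $p_i$ and $p_k$, and off all other lines determined by the points already placed.
\end{itemize}
Each of these placements can be described by a closed formula in $k$, so the coordinates are produced in $O(n)$ total time.

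The only step requiring more care is the genericity argument: I must certify that the off-axis points introduce no spurious collinearities, and that in families (5)--(6) the intended blocker relation on the line through $p_i,p_j,p_k$ does not accidentally create an extra GSP. The hard part will therefore be choosing the off-axis coordinates so that the visibility graph of the emitted point set is exactly $G$ and not a supergraph or subgraph of it. I would handle this by fixing the extra points at specific rational coordinates chosen as functions of $k$ (for instance, using slopes with large distinct prime denominators) and verifying once, symbolically, that no three of the resulting points are collinear except those forced by the template. Since the verification is per-family rather than per-instance, it contributes only a constant to the running time, so the total reconstruction cost remains $O(n)$.
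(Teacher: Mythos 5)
\begin{quote}
Your proposal is correct and follows essentially the same route as the paper: Theorem~\ref{polyrecog} already fixes the collinearity structure and relative positions, and one then outputs explicit small integer coordinates for the template in $O(n)$ time. The paper states this in two sentences; your version merely spells out the per-family coordinate choices and the (easy, since at most two points lie off the GSP) genericity check.
\end{quote}
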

\begin{proof} Theorem \ref{polyrecog} gives the relative positions and collinearity of points in the visibility embedding of $G$.
Since each point can be drawn with integer coordinates of size $O(logn)$ bits, $G$ can be reconstructed in linear time.  \end{proof}
\section{Concluding remarks}
We have presented three necessary conditions for recognizing point visibility graphs.
 Though the first necessary condition can be tested in $O(n^3)$ time, it is not clear how vertex-blockers
can be assigned to every invisible pair in $G$ in polynomial time 
satisfying the second necessary condition. Observe that these assignments in a visibility embedding
give the ordering of collinear points along any ray  starting from any point through its 
visible points. These rays together form an arrangement of rays in the plane. 
It is open whether such an arrangement can be constructed satisfying
assigned vertex-blockers in polynomial time.
The third necessary condition gives the ordering of these rays around each point. It is also not clear
whether the third necessary condition can be tested in polynomial time. Overall, we feel that the three
necessary conditions may be sufficient.
$\\ \\$
Let us consider the complexity issues of the problems of Vertex Cover, Independent Set and Maximum Clique in a point visibility graph.
Let $G$ be a graph of $n$ vertices, not necessarily a PVG.  We construct another graph $G'$ such that 
(i) $G$ is an induced subgraph of $G'$, and (ii) $G'$ is a PVG.
Let $C$ be a convex polygon drawn along with all  its diagonals,
where every vertex $v_i$ of $G$ corresponds to a vertex $p_i$ of $C$. 
For every edge $(v_i,v_j)  \notin  G$, introduce a blocker $p_t$ on
the edge $(p_i,p_j)$ such that $p_t$ is visible to all points of $C$ and 
all blockers added so far. 
Add edges from $p_t$ to all vertices of $C$ and blockers in $C$.
The graph corresponding to this embedding is called $G'$.
So, $G'$ and its embedding can be constructed in polynomial time. 
Let the sizes of the minimum vertex cover, maximum independent set and maximum clique in 
$G$ be $k_1$, $k_2$ and $k_3$ respectively. If $x$ is the number of blockers added to $C$, then the sizes of the 
minimum vertex cover, maximum independent set and maximum clique in 
$G'$ are $k_1 + x $, $k_2$ and $k_3 + x$ respectively. Hence, the problems remain NP-Hard.
\begin{theorem}
The problems of Vertex Cover, Independent Set and Maximum Clique remain NP-hard on point visibility graphs.
\end{theorem}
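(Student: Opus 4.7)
My plan is to flesh out the reduction sketched in the paragraph preceding the theorem and verify that it actually works. Formally, I would reduce from the classical NP-hard versions of Vertex Cover, Independent Set, and Maximum Clique on arbitrary graphs. Given an instance $G$ on $n$ vertices, I would construct a point set $P$ (and hence its PVG $G'$) as follows. First, place $n$ points $p_1, \ldots, p_n$ in convex position so that no three are collinear; these correspond to the vertices of $G$. Then process the non-edges $(v_i,v_j) \notin E(G)$ one at a time, and for each such non-edge introduce a blocker point $p_t$ lying on the open segment $p_ip_j$. Let $G'$ be the PVG of the resulting point set.

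The first key step is to show that each blocker $p_t$ can be inserted with the property that it is collinear with exactly one pair already placed, namely $\{p_i, p_j\}$. This is a genericity argument: at the time $p_t$ is introduced, only finitely many lines are determined by pairs of existing points other than $p_ip_j$, and only finitely many of these lines meet the segment $p_ip_j$. Hence almost every choice of $p_t$ on the open segment $p_ip_j$ avoids all these lines, so $p_t$ is mutually visible with every previously placed point. Therefore in $G'$ the vertex $p_t$ is adjacent to every previously placed point, and among the original convex vertices $p_i$ and $p_j$ are the unique pair whose visibility is blocked by $p_t$. Iterating, one sees that the adjacencies among $p_1,\ldots,p_n$ in $G'$ are exactly the edges of $G$, so $G$ is an induced subgraph of $G'$, and the construction runs in polynomial time.

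Next, let $x$ denote the total number of blockers added (so $x = \binom{n}{2} - |E(G)|$), and let $B$ be the set of blockers. Because every blocker is adjacent to every other vertex in $G'$, the set $B$ is a universal clique: every blocker lies in every maximum clique and every minimum vertex cover of $G'$, and no blocker lies in any independent set of size $\geq 2$. From this I would deduce the three identities stated in the excerpt, namely
\[
\omega(G') = \omega(G) + x, \qquad \tau(G') = \tau(G) + x, \qquad \alpha(G') = \alpha(G),
\]
where $\omega$, $\tau$, $\alpha$ denote the clique, vertex-cover, and independence numbers. For Maximum Clique: any clique of $G'$ consists of $B$ together with a clique of $G$, and vice versa. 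For Independent Set: any independent set avoids all of $B$ except for at most one vertex, and a single blocker gives an independent set of size $1$, which is never optimal for $n \geq 2$, so the maximum is attained inside $G$. For Vertex Cover: complementation of the independent set identity, using $|V(G')| = n + x$, gives the stated shift.

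The main obstacle is the geometric genericity claim in the second paragraph; everything else is bookkeeping. I would handle this by an explicit ordering argument, choosing the $n$ convex-position points first with algebraically independent coordinates, then choosing each subsequent blocker to lie in the complement of a finite union of lines within its prescribed open segment. Once this is in place, the three identities above are immediate, and NP-hardness of each of the three problems on PVGs follows from NP-hardness on general graphs.
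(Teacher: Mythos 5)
Your proposal is correct and follows essentially the same route as the paper: place the vertices of $G$ in convex position, insert one universally visible blocker on each non-edge, and observe that the clique and vertex-cover numbers shift by the number $x$ of blockers while the independence number is unchanged. The only additions are the explicit genericity argument for placing each blocker off all other point-pair lines and the bookkeeping for the three identities, both of which the paper leaves implicit.
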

$\\ \\$
\textbf{Acknowledgements}  
$\\ \\$
The preliminary version  
of a part of this work was submitted 
in May, 2011 as a Graduate School Project Report of Tata Institute of Fundamental Research 
\cite{recogpvg-ghosh-roy-2011}.
The authors would like to thank Sudebkumar Prasant Pal for his helpful comments during the
preparation of the first version of the manuscript \cite{recogpvg-2012}.


\bibliographystyle{plain}
\bibliography{vis}

\begin{thebibliography}{10}

\bibitem{syl-sur-90}
P.~Borwein and W.~O.~J. Moser.
\newblock A survey of sylvester's problem and its generalizations.
\newblock {\em Aequationes Mathematica}, 40(1):111–135, 1990.

\bibitem{lin-plan-2004}
J.~M. Boyer and W.~J. Myrvold.
\newblock {On the Cutting Edge: Simpliﬁed O(n) Planarity by Edge Addition}.
\newblock {\em Journal of Graph Algorithms and Applications}, 8(3):241--273,
  2004.

\bibitem{can-pspace}
J.~Canny.
\newblock Some algebraic and geometric computations in {PSPACE}.
\newblock {\em Proceedings of the 20th Annual ACM Symposium on Theory of
  Computing}, pages 460--467, 1988.

\bibitem{cgl-pgd-85}
B.~Chazelle, L.~J. Guibas, and D.T. Lee.
\newblock The power of geometric duality.
\newblock {\em BIT}, 25:76--90, 1985.

\bibitem{bcko-cgaa-08}
M.~de\ Berg, O.~Cheong, M.~Kreveld, and M.~Overmars.
\newblock {\em Computational Geometry, Algorithms and Applications}.
\newblock Springer-Verlag, 3rd edition, 2008.

\bibitem{epp-plpvg}
V.~Dujmovic, D.~Eppstein, M.~Suderman, and D.~R. Wood.
\newblock Drawings of planar graphs with few slopes and segments.
\newblock {\em Computational Geometry Theory and Applications}, pages 194--212,
  2007.

\bibitem{Edelsbrunner:1986:CAL}
H.~Edelsbrunner, J.~O'Rourke, and R.~Seidel.
\newblock Constructing arrangements of lines and hyperplanes with applications.
\newblock {\em SIAM Journal on Computing}, 15:341--363, 1986.

\bibitem{ev-vgc-90}
H.~Everett.
\newblock {\em {Visibility Graph Recognition}}.
\newblock Ph. D. Thesis, University of Toronto, Toronto, January 1990.

\bibitem{g-vap-07}
S.~K. Ghosh.
\newblock {\em Visibility Algorithms in the Plane}.
\newblock Cambridge University Press, 2007.

\bibitem{prob-ghosh}
S.~K. Ghosh and P.~P. Goswami.
\newblock Unsolved problems in visibility graphs of points, segments and
  polygons.
\newblock {\em ACM Computing Surveys}, 46(2):22:1--22:29, December, 2013.

\bibitem{recogpvg-2014}
S.~K. Ghosh and B.~Roy.
\newblock Some results on point visibility graphs.
\newblock In {\em Proceedings of the Eighth International Workshop on
  Algorithms and Computation}, volume 8344 of {\em Lecture Notes in Computer
  Science}, pages 163--175. Springer-Verlag, 2014.

\bibitem{recogpvg-2012}
S.~K. Ghosh and B.~Roy.
\newblock Some results on point visibility graphs.
\newblock {\em arXiv:1209.2308}, September, 2012.

\bibitem{kpw-ocnv-2005}
J.~K\'ara, A.~P{\'o}r, and D.~R. Wood.
\newblock {On the Chromatic Number of the Visibility Graph of a Set of Points
  in the Plane}.
\newblock {\em Discrete \& Computational Geometry}, 34(3):497--506, 2005.

\bibitem{lw-apcf-79}
T.~Lozano-Perez and M.~A. Wesley.
\newblock An algorithm for planning collision-free paths among polyhedral
  obstacles.
\newblock {\em Communications of ACM}, 22:560--570, 1979.

\bibitem{viscon-wood-2012}
M.~S. Payne, A.~P{\'o}r, P.~Valtr, and D.~R. Wood.
\newblock On the connectivity of visibility graphs.
\newblock {\em Discrete \& Computational Geometry}, 48(3):669--681, 2012.

\bibitem{viscon-wood-2011}
M.~S. Payne, A.~P{\'o}r, P.~Valtr, and D.~R. Wood.
\newblock On the connectivity of visibility graphs.
\newblock {\em arXiv:1106.3622v1}, June, 2011.

\bibitem{recogpvg-ghosh-roy-2011}
B.~Roy.
\newblock Recognizing point visibility graphs.
\newblock {\em Graduate School Project Report, Tata Institute of Fundamental
  Research}, May, 2011.

\bibitem{sh-ddsg-79}
L.G. Shapiro and R.M. Haralick.
\newblock Decomposition of two-dimensional shape by graph-theoretic clustering.
\newblock {\em IEEE Transactions on Pattern Analysis and Machine Intelligence},
  PAMI-1:10--19, 1979.

\end{thebibliography}

\section*{Appendix}
 
By enumeration,  we identify all five particular graphs (see Fig. \ref{9ppvg}) that do not belong to the six infinite families
(see Figs. \ref{4inf} and \ref{2inf}),
as stated in Theorem \ref{totalchar}. We know from Lemmas \ref{4csp} and \ref{3csp} that $n \leq 8$.
We have the following cases.
 
\medskip

\noindent \textbf{Case 1.}
{\it There is a 3-GSP but no 4-GSP in some visibility embedding $\xi$ of $G$.}


\medskip

\noindent If $n \leq 5$, $G$ belongs to one of the infinite families having at most two points outside the 3-GSP.
%
%
%
%
%
%
%
$\\ \\$
Consider $ n = 6$. Let $p_1$, $p_2$ and $p_3$ be collinear points representing a 3-GSP (denoted as $l$). If there is 
no other 3-GSP in $\xi$, then all edges except $(v_1,v_3)$ are present in $G$. So, $G$ is not planar as it has $K_5$
as a subgraph. If there is another 3-GSP (say, $l'$) in $\xi$, which is disjoint from $l$, then $G$ is not planar
as it has $K_{3,3}$ as a subgraph. So, we consider the situation when $l$ and $l'$ share a point in $\xi$. There can be three such 
distinct embeddings of five points as shown in Fig. \ref{3gspopp}. Before the sixth point $p_6$ is added in the embeddings, we
need the following lemma.
$\\ \\$
\begin{figure} 
\begin{center}
\centerline{\hbox{\psfig{figure=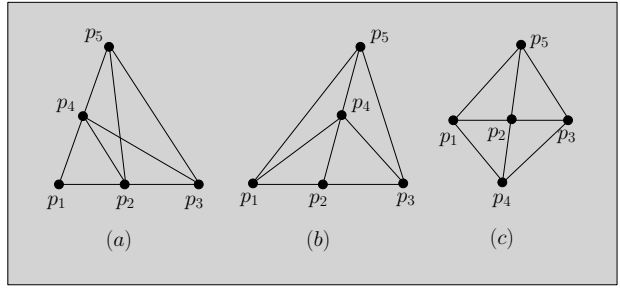,width=0.505\hsize}}}
\caption{Visibility embeddings of five points containing two overlapping 3-GSPs. }
\label{3gspopp}
\end{center}
\end{figure}
%

\begin{lemma}\label{app1}
 Any planar point visibility graph $H$ of six vertices, with no 4-GSP, has at least three 3-CSPs.
\end{lemma}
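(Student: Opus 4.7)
The plan is to exhibit an injection from invisible pairs of $H$ to 3-CSPs of $H$, and then to bound the number of invisible pairs from below using planarity. Fix any visibility embedding $\xi$ of $H$ that, by hypothesis, contains no 4-GSP.

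First, I would show that every invisible pair $(v_i,v_j)$ of $H$ admits exactly one blocker in $\xi$. Since $(v_i,v_j)$ is invisible, the open segment $p_ip_j$ contains at least one point of $\xi$; if it contained two or more such points, then together with $p_i$ and $p_j$ they would form a collinear set of size at least four, contradicting the no-4-GSP assumption.

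Next, for the unique blocker $p_k$ of $(v_i,v_j)$, the triple $(v_i,v_k,v_j)$ is a 3-CSP of $H$: the edges $v_iv_k$ and $v_kv_j$ are present because no further point of $\xi$ can lie on the sub-segments $p_ip_k$ or $p_kp_j$ (any such point would be collinear with $p_i,p_k,p_j$, again producing a 4-GSP), while $v_iv_j$ is a non-edge by the choice of the pair. The resulting map from invisible pairs to 3-CSPs is injective, since the two endpoints of such a 3-CSP recover the original invisible pair.

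Finally, since $H$ is a simple planar graph on six vertices, $|E(H)| \le 3 \cdot 6 - 6 = 12$, whereas $\binom{6}{2}=15$, so $H$ contains at least three invisible pairs. Feeding these through the injection above yields at least three distinct 3-CSPs. The only delicate step is the uniqueness of the blocker, which is exactly where the no-4-GSP hypothesis is used; the remaining counting argument is immediate and poses no real obstacle.
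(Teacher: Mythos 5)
Your proposal is correct and follows essentially the same route as the paper: count non-edges via the planar bound $|E|\le 3\cdot 6-6=12$ against $\binom{6}{2}=15$, and observe that each invisible pair yields a 3-CSP. You merely make explicit (via the uniqueness of the blocker under the no-4-GSP hypothesis, and the injectivity of the pair-to-path map) two points that the paper's proof takes for granted.
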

\begin{proof} We know that if $H$ does not have an edge between two vertices, then it corresponds to a 3-CSP.
Since $H$ has at most $12$ edges due to Euler's condition, and a complete graph on six vertices 
has $15$ edges, there are at least 3 edges not present in $H$. Therefore $H$ has at least three 3-CSPs.   \end{proof}

\begin{figure}   
\begin{center}
\centerline{\hbox{\psfig{figure=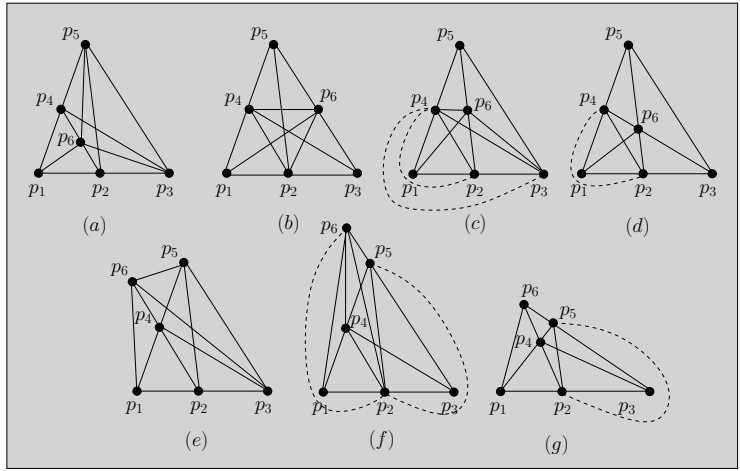,width=0.605\hsize}}}
\caption{Visibility embeddings of six points after $p_6$ is added to the embedding in Fig. \ref{3gspopp}(a).
Dotted lines show how the edge-crossings in the visibility embedding can be avoided in a planar embedding.}
\label{3gspsm}
\end{center}
\end{figure}
%


$\\ \\$
Let us add $p_6$ to the embedding shown in Fig. \ref{3gspopp}(a) in such a way that the new embeddings have three 3-GSPs
satisfying Lemma \ref{app1}. So, $p_6$ must lie on the lines passing through exactly two points, forming a new
3-GSP. Removing symmetric embeddings, we have the 
following choices of positioning $p_6$ in the new 3-GSP: $\overline {p_4 p_6 p_2}$ (Fig. \ref{3gspsm}(a)),
 $\overline {p_5 p_6 p_3}$ (Fig. \ref{3gspsm}(b)), $\overline {p_5 p_6 p_2}$ (Fig. \ref{3gspsm}(c)), 
$\overline {p_5 p_6 p_2}$ and $\overline {p_4 p_6 p_3}$ (Fig. \ref{3gspsm}(d)), 
$\overline {p_6 p_4 p_2}$ (Fig. \ref{3gspsm}(e)),
 $\overline {p_6 p_5 p_3}$ (Fig. \ref{3gspsm}(f)), $\overline {p_6 p_4 p_2}$  and $\overline {p_6 p_5 p_3}$ (Fig. \ref{3gspsm}(g)).
It can be seen that embeddings in Figs.  \ref{3gspsm}(a), \ref{3gspsm}(b) and \ref{3gspsm}(e) correspond to non-planar graphs, and 
embeddings in Figs.  \ref{3gspsm}(c),  \ref{3gspsm}(d),  
\ref{3gspsm}(f) and  \ref{3gspsm}(g) correspond to planar graphs. Graphs corresponding to embeddings in Figs. \ref{3gspsm}(c)
  and \ref{3gspsm}(d), are isomorphic to graphs corresponding to embeddings in Figs. \ref{3gspsm}(f) 
and  \ref{3gspsm}(g) respectively. Hence, only two non-isomorphic planar
graphs arise after adding $p_6$ to the visibility embedding in Fig. \ref{3gspopp}(a).
%
%
%
$\\ \\$
As before, let us add $p_6$ to the embedding shown in Fig. \ref{3gspopp}(b). Removing symmetric embeddings, we have the 
following choices of positioning $p_6$ in the new 3-GSP: $\overline {p_1 p_6 p_5}$ (Fig. \ref{new11}(a)),
$\overline {p_1 p_5 p_6}$ (Fig. \ref{new11}(b)), $\overline {p_6 p_1 p_5}$ (Fig. \ref{new11}(c)),
$\overline {p_1 p_6 p_5}$ and $\overline {p_3 p_4 p_6}$ (Fig. \ref{new11}(d)),
$\overline {p_1 p_6 p_4}$ (Fig. \ref{new11}(e)),
$\overline {p_6 p_1 p_4}$ (Fig. \ref{new11}(f)) and $\overline {p_1 p_4 p_6}$ (Fig. \ref{new11}(g))
The embeddings in all the figures except Figure \ref{new11}(f) have two 3-GSPs that overlap at their end-points,
which they are already considered in Fig. \ref{3gspsm}. Since the embedding in Fig. \ref{new11}(f) is planar,
this is the only new planar
graph that arises after adding $p_6$ to the visibility embedding in Fig. \ref{3gspopp}(b).
\begin{figure}   
\begin{center}
\centerline{\hbox{\psfig{figure=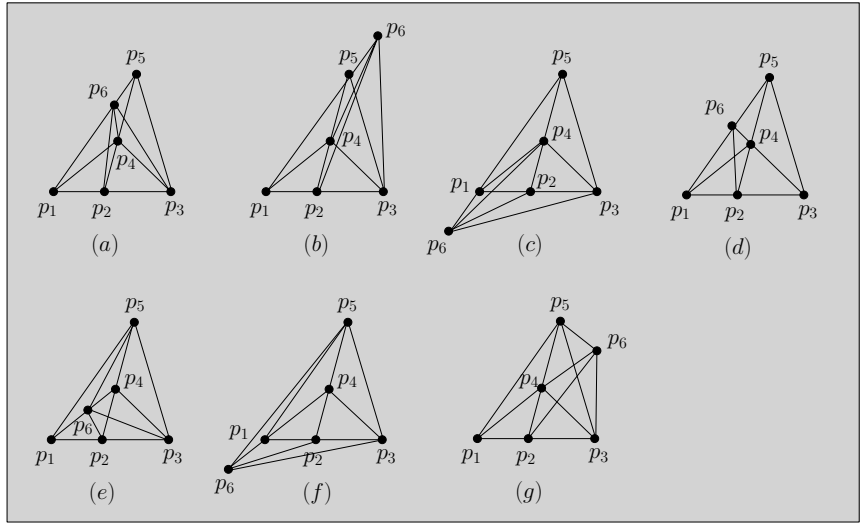,width=0.705\hsize}}}
\caption{Visibility embeddings of six points after $p_6$ is added to the embedding in Fig. \ref{3gspopp}(b).}
\label{new11}
\end{center}
\end{figure}
%
%
$\\ \\$
As before, let us add $p_6$ to the embedding shown in Fig. \ref{3gspopp}(c). Removing symmetric embeddings, we have the 
following choices of positioning $p_6$ in the new 3-GSP: $\overline {p_1 p_6 p_5}$ (Fig. \ref{new12}(a)) and
$\overline {p_1 p_5 p_6}$ (Fig. \ref{new12}(b)). But these two embeddings are already present in Fig. \ref{3gspsm}.
So, no new planar graphs arise after adding $p_6$ to the embedding visibility in Fig. \ref{3gspopp}(c). 
Thus,
three particular planar point-visibility graphs of six vertices are identified (see Figs. \ref{3gspsm}(c), 
\ref{3gspsm}(d) and \ref{new11}(f)).
 Consider $n = 7$. 
 In the following lemma, we show that there is exactly one particular graph of seven vertices that admits a planar embedding
(Fig. \ref{new16}).



\begin{lemma}
Let $H$ be a planar point visibility graph on seven vertices such that it has a 3-GSP but no 4-GSP in every
visibility embedding $\xi$ of $H$. Then $\xi$ has exactly six 3-GSPs.
\end{lemma}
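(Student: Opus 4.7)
The plan is to establish that the number of 3-GSPs in $\xi$ is simultaneously at least six and at most six, by first reducing the count to a combinatorial invariant of $H$. Because $\xi$ contains no 4-GSP, no four points of $\xi$ are collinear, so every invisible pair $(p_i,p_j)$ admits a unique blocker $p_k$, yielding a unique 3-GSP $(p_i,p_k,p_j)$; conversely, each 3-GSP has a unique middle point, so it contributes exactly one invisible pair (its endpoints) and two visible pairs (its edges). This sets up a bijection between the 3-GSPs of $\xi$ and the non-edges of $H$, making the count depend only on $H$ and not on the particular embedding.

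For the lower bound, I would invoke Euler's formula: since $H$ is planar on seven vertices, $|E(H)| \leq 3 \cdot 7 - 6 = 15$, so $H$ has at least $\binom{7}{2} - 15 = 6$ non-edges, which translates via the bijection to at least six 3-GSPs in $\xi$.

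For the upper bound, I would argue by contradiction, assuming $\xi$ contains at least seven 3-GSPs. Since no four points of $\xi$ are collinear, any two distinct 3-GSPs share at most one point, so each pair of points of $\xi$ lies on at most one 3-GSP. Double counting the incidences between 3-GSPs and the pairs they contain gives $3 \cdot (\text{number of 3-GSPs}) \leq \binom{7}{2} = 21$, so the count is at most seven. Equality would force every pair of points in $\xi$ to lie on some 3-GSP, i.e., every line through two points of $\xi$ would contain a third point of $\xi$; this is precisely the incidence structure of the Fano plane $S(2,3,7)$. By the Sylvester--Gallai theorem, any finite set of points in the real plane with no ordinary line (a line through exactly two of the points) must be entirely collinear, which would force all seven points of $\xi$ onto one line, producing a 7-GSP and in particular a 4-GSP, contradicting the hypothesis. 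Hence $\xi$ has at most six 3-GSPs, and together with the lower bound this gives exactly six.

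The principal obstacle in this plan is the upper bound, specifically ruling out the Fano-plane configuration of seven 3-GSPs; this step requires a genuinely geometric input (the non-realizability of the Fano plane via Sylvester--Gallai), whereas the lower bound and the bijection are almost immediate from planarity and the no-4-GSP hypothesis. An alternative route would be an enumerative case analysis that adds a seventh point to the six-vertex planar PVG configurations already identified in the appendix, but the Sylvester--Gallai argument is considerably more concise and transparent.
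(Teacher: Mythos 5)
Your proof is correct and follows essentially the same route as the paper: Euler's formula on seven vertices gives at least six non-edges (hence at least six 3-GSPs via the blocker bijection), and the seven-3-GSP case is excluded because it would force every line through two points to contain a third, contradicting Sylvester--Gallai. Your write-up is in fact more careful than the paper's, since you make explicit both the bijection between 3-GSPs and invisible pairs and the double-counting step that the paper only gestures at.
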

\begin{proof} 
Since $H$ has at most $15$ edges due to Euler's condition, and a complete graph on seven vertices 
has $21$ edges, there are at least six invisible pairs in $H$.
So, $H$ has at least six 3-GSPs in $\xi$,
On the other hand, if $\xi$ has seven 3-GSPs, then
there are seven invisible pairs in $H$. So, $H$ can have maximum of 14 edges. But then, every line in $\xi$
must pass through exactly three points, contradicting Sylvester-Gallai Theorem \cite{syl-sur-90}.  
  \end{proof}
\begin{corollary}
 If $p_7$ is added to the embeddings of particular graphs of six vertices in Figs. \ref{3gspsm}(c),
\ref{3gspsm}(d) and \ref{new11}(f), then only one embedding gives rise to a  planar embedding as
shown in Fig. \ref{new16}.
\end{corollary}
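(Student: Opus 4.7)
The plan is to extend the enumeration done for six vertices in the preceding lemmas to the three candidate 6-vertex embeddings (Figs.~\ref{3gspsm}(c), \ref{3gspsm}(d) and \ref{new11}(f)) and catalogue every legal placement of a seventh point $p_7$. By the preceding lemma, any visibility embedding on seven vertices (with a 3-GSP but no 4-GSP) must contain \emph{exactly} six 3-GSPs, so adding $p_7$ must raise the number of 3-GSPs to exactly six while avoiding the creation of any 4-GSP. I would first tabulate, for each of the three 6-vertex embeddings, the existing 3-GSPs and the existing lines determined by pairs of the six points. Then $p_7$ must be positioned either in general position (not on any existing line, which cannot contribute new 3-GSPs to reach six) or at an intersection of lines through existing pairs, in such a way that (i) it does not lie on any already-existing 3-GSP (which would create a 4-GSP), and (ii) the resulting number of 3-GSPs is exactly six.

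First I would dispose of Fig.~\ref{3gspsm}(c) and Fig.~\ref{3gspsm}(d): I would list the set of existing 3-GSPs in each and determine how many additional 3-GSPs $p_7$ must generate. For each candidate location of $p_7$ that produces the required count, I would check planarity against Euler's bound $3n-6=15$ for $n=7$, and also check that no 4-GSP is forced. The expected outcome is that every candidate either forces a 4-GSP (by lying on the extension of an existing 3-GSP), produces too many invisibility pairs so the graph exceeds 15 edges in the wrong direction, or produces a non-planar $K_5$ or $K_{3,3}$ subgraph (as happened repeatedly in the six-point enumeration).

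Next I would treat Fig.~\ref{new11}(f). Here I expect one legal placement to survive, giving precisely the embedding shown in Fig.~\ref{new16}. I would verify this surviving embedding directly: count its 3-GSPs to confirm the total is six, write down its edge set, and exhibit a planar embedding of the resulting visibility graph (the dotted-line rerouting technique already used in Figs.~\ref{3gspsm} and \ref{9ppvg} should work here as well). To complete the argument I would appeal to the symmetry group of each 6-vertex embedding (reflections that fix the underlying collinear triples) so that only inequivalent placements of $p_7$ need be checked, and I would verify that the three surviving candidates across the three cases are all isomorphic as point visibility graphs to the embedding in Fig.~\ref{new16}.

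The main obstacle is the bookkeeping: for each of the three 6-point configurations the number of lines through pairs, their intersection points, and the induced 3-GSP counts must be tracked carefully, and at every candidate site for $p_7$ one must simultaneously verify the ``no 4-GSP,'' ``exactly six 3-GSPs,'' and planarity conditions. The counting lemma (exactly six 3-GSPs, forced by Sylvester--Gallai together with Euler) is what makes the enumeration finite and manageable; without it, infinitely many general-position placements of $p_7$ would need to be considered. With that lemma in hand, the remaining work is a finite, case-by-case verification that can be presented as a table of candidate positions together with the reason each is ruled out, except for the unique surviving embedding drawn in Fig.~\ref{new16}.
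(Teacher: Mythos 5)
Your outlined enumeration --- invoking the exactly-six-3-GSPs lemma to force $p_7$ onto intersection points of lines through pairs of the six existing points, then discarding each candidate that creates a 4-GSP or yields a $K_5$/$K_{3,3}$/Euler violation, with symmetry used to prune equivalent placements --- is precisely the method the paper applies throughout its appendix, and indeed the paper supplies no explicit proof of this corollary beyond that implicit finite check. Your plan is sound and matches the paper's approach; the only caveats are that, like the paper, you stop short of actually tabulating and verifying the finitely many candidate positions, and your expectation that the unique survivor arises from the configuration of Fig.~\ref{new11}(f) is an unverified guess (immaterial, since you propose to examine all three six-point configurations anyway).
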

 \noindent Consider $ n = 8$. In the following lemma, we show that there is no particular graph on eight vertices.
\begin{lemma}
 There is no particular planar point visibility graph on eight vertices that has a 3-CSP but no 4-CSP.
\end{lemma}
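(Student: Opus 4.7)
The plan is to derive a contradiction by a double-counting argument on pairs of vertices. Suppose for contradiction that $G$ is a planar PVG on $n=8$ vertices with a 3-CSP but no 4-CSP, and fix a visibility embedding $\xi$ of $G$. Since the absence of a 4-CSP in $G$ rules out any 4-GSP in $\xi$, every line of the plane contains at most three points of the embedding.

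Let $t_3$ denote the number of distinct 3-GSPs appearing in $\xi$. I will first observe that two distinct 3-GSPs can share at most one point, because sharing two would force them to lie on a common line containing at least four collinear points, contradicting the no-4-GSP property. Hence the three point-pairs contained in each 3-GSP are pairwise disjoint across 3-GSPs, yielding
\[
3\,t_3 \;\le\; \binom{8}{2} = 28, \qquad \text{so } t_3 \le 9.
\]

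Next I will convert this into a lower bound on the edges of $G$. Because every invisible segment in $\xi$ has a unique blocker (again by the no-4-GSP property), invisible pairs are in bijection with 3-GSPs: each 3-GSP contributes exactly one invisible pair, namely its two endpoints. Therefore $G$ has at least $28 - t_3 \ge 28 - 9 = 19$ edges, which contradicts Euler's inequality $|E(G)| \le 3n-6 = 18$ for simple planar graphs on eight vertices.

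The step I expect to be the main point to pin down carefully is the bijection between invisible pairs and 3-GSPs in $\xi$, since it is precisely the no-4-GSP hypothesis that forces each invisible segment to carry a single blocker (and hence to correspond to exactly one 3-GSP). Once that correspondence is in place, the rest of the argument is a clean combination of the pair-disjointness count with Euler's bound, and the remaining case analysis used in the proofs of Lemmas~\ref{4csp} and \ref{3csp} completes the characterization.
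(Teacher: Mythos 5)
Your proof is correct and follows essentially the same counting strategy as the paper's: both exploit the fact that, absent a 4-GSP, distinct 3-GSPs cannot share a pair of points (so non-edges are in bijection with 3-GSPs and the 3-GSPs are disjoint in their pairs/edges), and then play this off against Euler's bound $|E| \le 3n-6 = 18$. The only cosmetic difference is the direction of the count: you bound the number of 3-GSPs above by $9$ via pair-disjointness to force at least $19$ edges, whereas the paper bounds the number of non-edges below by $10$ via Euler to force at least $20$ edges from ten edge-disjoint 3-CSPs; both yield the same contradiction.
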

 \begin{proof}
   We know that if $G$ does not have an edge between two vertices, then it corresponds to a 3-CSP.
Since $G$ has at most $18$ edges due to Euler's condition, and a complete graph on eight vertices 
has $28$ edges, there are at least ten edges not present in $G$. Therefore $G$ must have at least ten edge disjoint 3-CSPs.
But ten edge disjoint 3-CSPs require $20$ edges. Since $G$ can have at most $18$ edges, such a $G$ cannot exist.
 
 \end{proof}

\begin{figure}   
\begin{center}
\mbox{\begin{minipage} [b] {75mm}
\centerline{\hbox{\psfig{figure=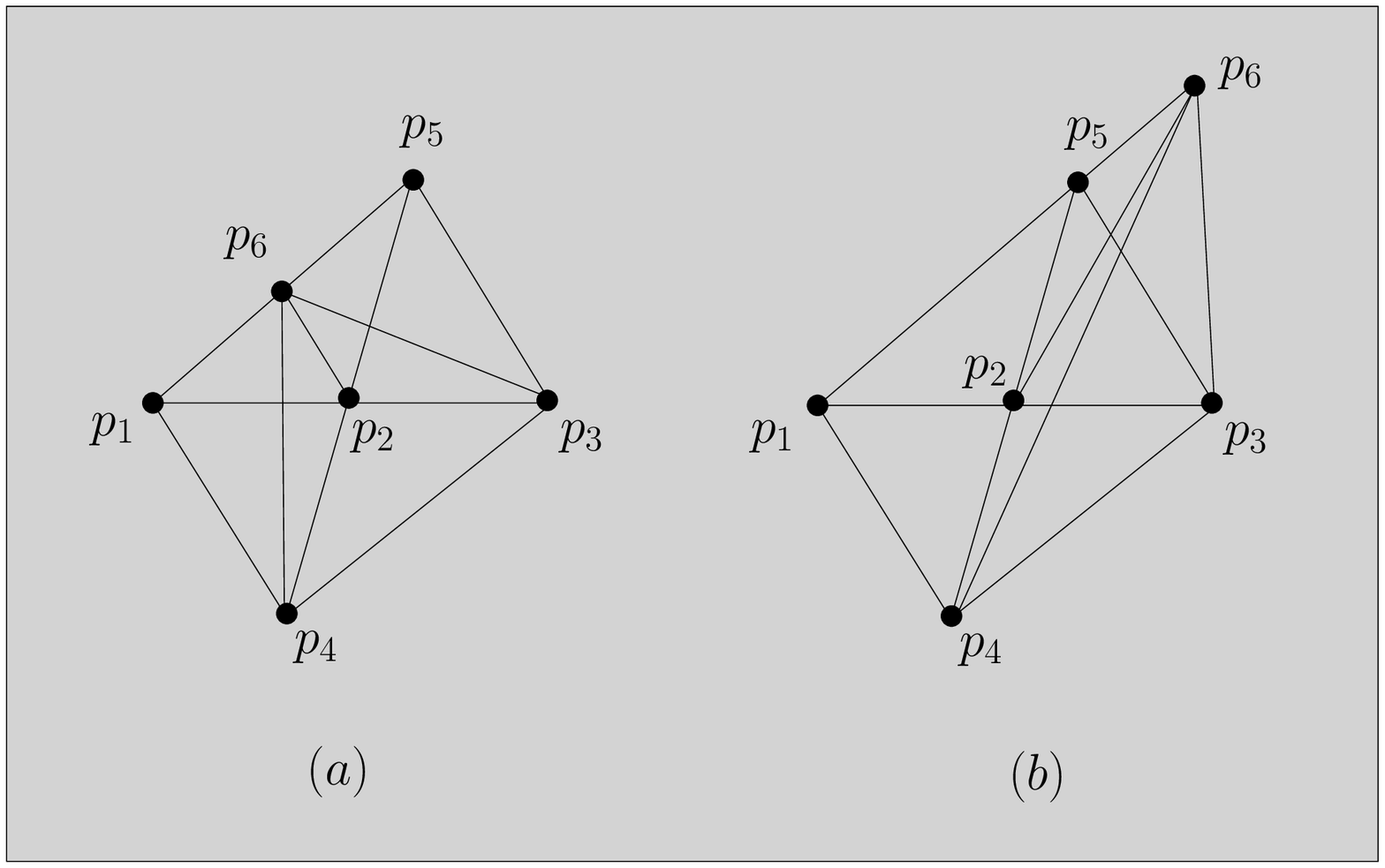,width=0.955\hsize}}}
\caption{Visibility embeddings of six points after $p_6$ is added to the embedding in Fig. \ref{3gspopp}(c).}
 \label{new12}
\end{minipage}}\hspace{2mm}
\mbox{\begin{minipage} [b] {75mm}
\centerline{\hbox{\psfig{figure=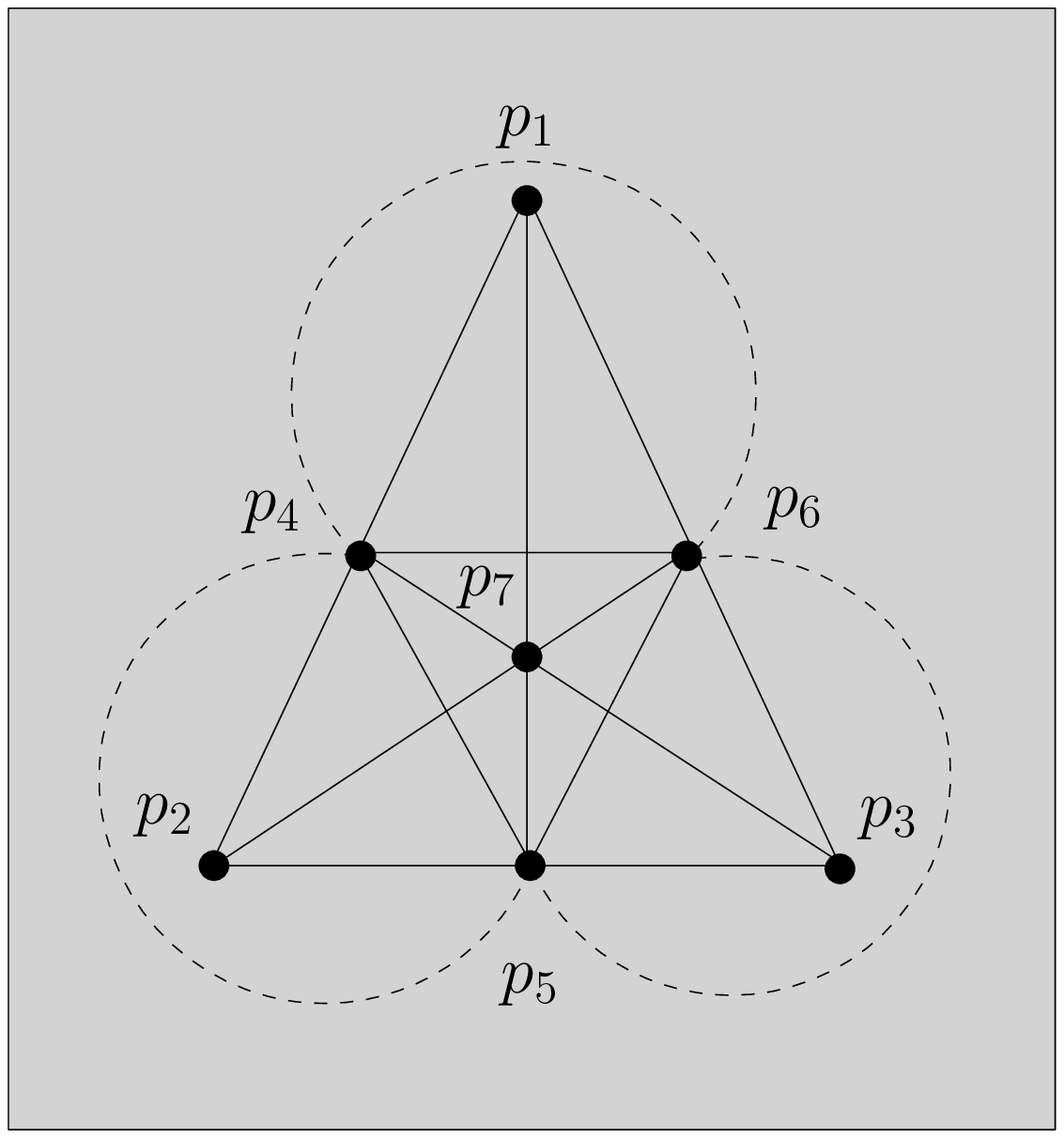,width=0.555\hsize}}}
\caption{Unique visibility embedding of planar point visibility graph on seven vertices, with a 3-GSP but no 4-GSP.
Dotted lines show how the edge-crossings in the visibility embedding can be avoided in a planar embedding.}
 \label{new16}
\end{minipage}}
\end{center}
\end{figure}
%

\noindent \textbf{Case 2.} {\it There is a 4-GSP but no 5-GSP in every visibility embedding of $G$.}
$\\ \\$
If $n \leq 6$, $G$ belongs to one of the infinite families having at most two points outside the 4-GSP.
 $\\ \\$
Since $G$ cannot have more than 7 vertices by Lemma \ref{4csp}, we consider only $n=7$.
%
%
$\\ \\$
 Consider any visibility embedding $\xi$ of $G$.
Let $p_1$, $p_2$, $p_3$ and $p_4$ be collinear points representing a 4-GSP (denoted as $l$). 
If the remaining three points $p_5$, $p_6$ and $p_7$ form a 3-GSP disjoint from $l$,
then $G$ is not planar as it has $K_{3,3}$ as a subgraph. 
If  $p_5$, $p_6$ and $p_7$ are mutually visible, and they also see all points of $l$, then $G$ 
is not planar as it has $K_{3,3}$ as a subgraph. 
If $p_5$, $p_6$ and $p_7$ are on opposite sides of $l$, then, again $G$ is not planar as it has $K_{3,3}$
as a subgraph.
So, in every embedding, all points $p_5$, $p_6$ and $p_7$ are on the same side of $l$.
Therefore, an endpoint of every 3-GSP in $\xi$ is a point of $l$. We have the following lemma.
%
%
%
\begin{lemma}\label{app2}
If every visibility embedding of a planar point visibility graph $H$ has a 4-GSP but no 5-GSP,
then every visibility embedding of $H$ has at least three 3-GSPs edge disjoint from the 4-GSP.
\end{lemma}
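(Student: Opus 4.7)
The plan is to count invisible pairs via Euler's formula and show that at least three of them produce $3$-GSPs on lines different from $l$, since any $3$-GSP on a line other than $l$ is automatically edge-disjoint from the $4$-GSP on $l$.

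First I would use planarity. Since $|V(H)|=7$, Euler gives $|E(H)|\le 15$, so $H$ has at least $\binom{7}{2}-15 = 6$ invisible pairs. Exactly three of these are internal to $l=(p_1,p_2,p_3,p_4)$, namely $(p_1,p_3)$, $(p_1,p_4)$ and $(p_2,p_4)$. The preceding discussion in Case 2 has already shown that $p_5,p_6,p_7$ all lie on the same side of $l$ (else a $K_{3,3}$ subgraph arises), and from this I would argue that no invisible pair lies entirely within $\{p_5,p_6,p_7\}$: the segment between two such points does not cross $l$, so any blocker on it would have to be the third point of $\{p_5,p_6,p_7\}$, making those three collinear and disjoint from $l$---again yielding $K_{3,3}$. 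Hence the remaining at least three invisible pairs are of mixed form $(p_i,p_j)$ with $p_i \in l$ and $p_j \in \{p_5,p_6,p_7\}$.

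Next, for any such mixed pair I would show that its blocker $p_k$ on segment $p_ip_j$ lies outside $l$: otherwise $p_i$ and $p_k$ are two distinct points of $l$ whose line is $l$, forcing $p_j \in l$, a contradiction. Therefore the $3$-GSP $(p_i,p_k,p_j)$ lies on a line different from $l$. Two distinct lines meet in at most one point, so this $3$-GSP shares at most one vertex (and hence no edge) with $l$; explicitly, every edge of the $4$-GSP has both endpoints on $l$, while every edge of this $3$-GSP has an endpoint outside $l$. Distinct mixed pairs have distinct endpoint pairs and hence give distinct $3$-GSPs, producing the required three $3$-GSPs edge-disjoint from the $4$-GSP.

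The main obstacle is really the bookkeeping needed to ensure that the ``extra'' non-edges are not consumed by invisible pairs inside $l$ or by $p_5,p_6,p_7$ being collinear on a line skew to $l$; both scenarios are already eliminated by the $K_{3,3}$ arguments given immediately before the lemma, after which the count $6-3=3$ together with the one-line geometric observation about blockers finishes the proof.
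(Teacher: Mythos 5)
Your proof is correct and follows the same route as the paper: Euler's bound gives at least six invisible pairs, three of which are internal to the $4$-GSP, and the remaining three yield $3$-GSPs edge-disjoint from it. You additionally supply the geometric justification (mixed pairs $(p_i,p_j)$ with $p_i\in l$, $p_j\notin l$ have their blockers off $l$, and no invisible pair lies entirely within $\{p_5,p_6,p_7\}$) that the paper's two-sentence proof leaves implicit, relying instead on the $K_{3,3}$ discussion preceding the lemma.
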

\begin{proof} 
Since $H$ has at most $15$ edges due to Euler's condition, and a complete graph on seven vertices 
has $21$ edges, there are at least six invisible pairs in $H$.
Three of these invisible pairs correspond to the 4-GSP. So, the remaining three invisible pairs must 
correspond to three 3-GSPs edge disjoint-from the 4-GSP.
%
%
  \end{proof}
%
\noindent Due to the above Lemma, we must ensure that three new 3-GSPs are formed in $\xi$, by adding $p_5$, $p_6$ and $p_7$.
We add $p_5$ and $p_6$ to construct the first new 3-GSP as shown in Fig. \ref{new13}, excluding symmetric cases. 
Then $p_7$ is added to these embeddings forming two more 3-GSPs.
This can be realized only by placing $p_7$ at intersection points of pairs of lines containing exactly two points on each line.
$\\ \\$
Let us add $p_7$ to the embedding shown in Fig. \ref{new13}(a).
Removing symmetric embeddings, we have the 
following choices of positioning $p_7$ in the two new 3-GSPs: $\overline {p_2 p_7 p_6}$ and $\overline {p_3 p_7 p_5}$
(Fig. \ref{new14}(a)),
 $\overline {p_2 p_7 p_6}$ and $\overline {p_4 p_7 p_5}$
 (Fig. \ref{new14}(b)), 
  $\overline {p_3 p_7 p_6}$ and $\overline {p_4 p_7 p_5}$ (Fig. \ref{new14}(c)), 
  $\overline {p_2 p_5 p_7}$ and $\overline {p_3 p_6 p_7}$  (Fig. \ref{new14}(d)), 
  $\overline {p_2 p_5 p_7}$ and $\overline {p_4 p_6 p_7}$   (Fig. \ref{new14}(e)),
  $\overline {p_3 p_5 p_7}$ and $\overline {p_4 p_6 p_7}$   (Fig. \ref{new14}(f)). 
It can be seen that embeddings in Figs.  \ref{new14}(a), \ref{new14}(c), \ref{new14}(d) and  \ref{new14}(e) 
correspond to non-planar graphs, and 
embeddings in Figs.  \ref{new14}(b) and \ref{new14}(f)  
correspond to planar graphs isomorphic to each other.
Hence, only one particular planar
graph arises after adding $p_7$ to the visibility embedding in Fig. \ref{new13}(a).
\begin{figure}   
\begin{center}
\mbox{\begin{minipage} [b] {55mm}
\centerline{\hbox{\psfig{figure=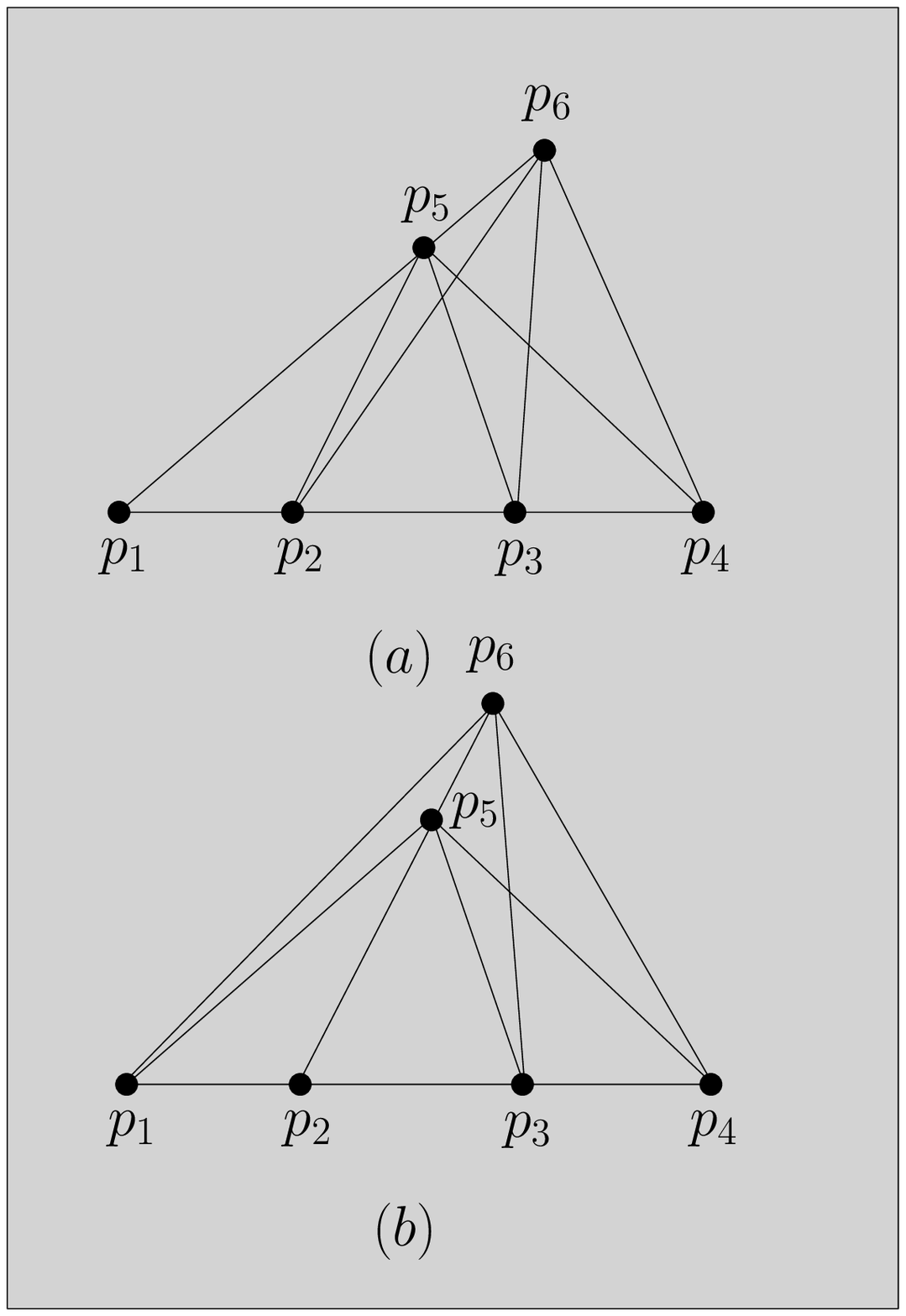,width=0.81\hsize}}}
\caption{Visibility embeddings of six points containing overlapping but edge disjoint 3-GSP and 4-GSP.}
 \label{new13}
\end{minipage}}\hspace{02mm}
\mbox{\begin{minipage} [b] {95mm}
\centerline{\hbox{\psfig{figure=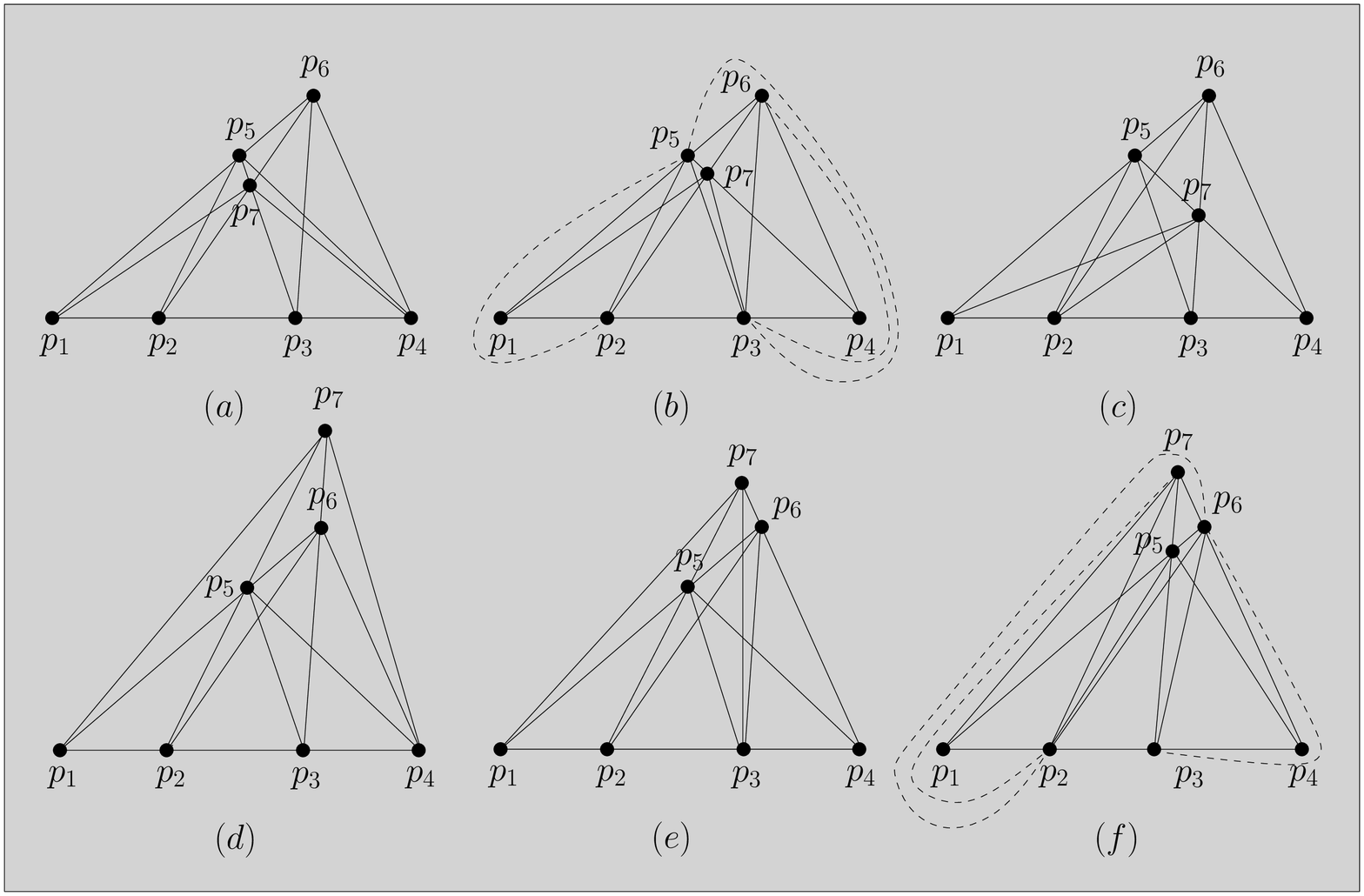,width=0.905\hsize}}}
\caption{Visibility embeddings of seven points after $p_7$ is added to the embedding in Fig. \ref{new13}(a).
Dotted lines show how the edge-crossings in the visibility embedding can be avoided in a planar embedding.}
 \label{new14}
\end{minipage}}
\end{center}
\end{figure}
\begin{figure}   
\begin{center}
\centerline{\hbox{\psfig{figure=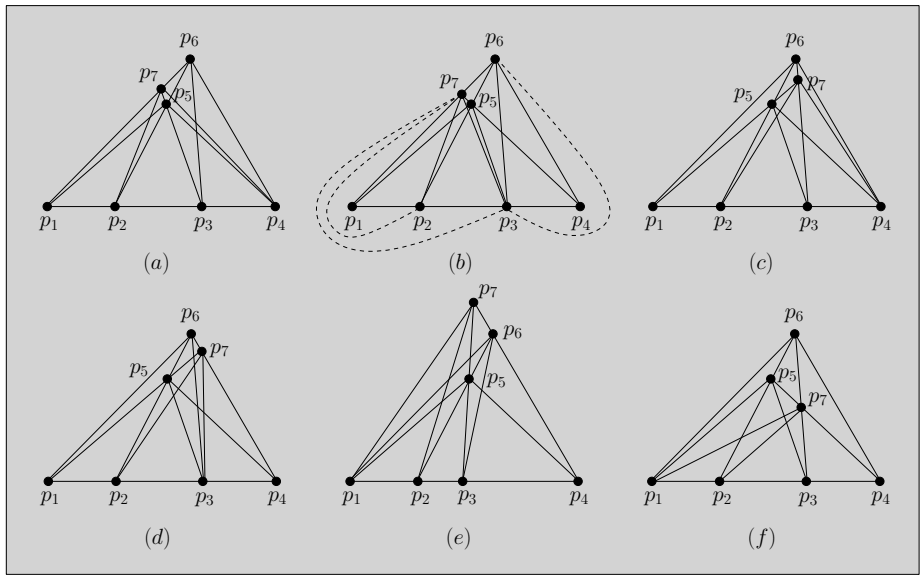,width=0.755\hsize}}}
\caption{Visibility embeddings of seven points after $p_7$ is added to the embedding in Fig. \ref{new13}(b).
Dotted lines show how the edge-crossings in the visibility embedding can be avoided in a planar embedding.}
 \label{new15}
\end{center}
\end{figure}
%
%
%
$\\ \\$
As before, let us add $p_7$ to the embedding shown in Fig. \ref{new13}(b). Removing symmetric embeddings, we have the 
following choices of positioning $p_7$ in the two new 3-GSPs: 
$\overline {p_1 p_7 p_6}$ and $\overline {p_3 p_5 p_7}$ (Fig. \ref{new15}(a)),
$\overline {p_1 p_7 p_6}$ and $\overline {p_4 p_5 p_7}$ (Fig. \ref{new15}(b)), 
$\overline {p_1 p_5 p_7}$ and $\overline {p_3 p_7 p_6}$  (Fig. \ref{new15}(c)),
$\overline {p_1 p_5 p_7}$ and $\overline {p_4 p_7 p_6}$ (Fig. \ref{new15}(d)),
$\overline {p_3 p_5 p_7}$ and $\overline {p_4 p_6 p_7}$ (Fig. \ref{new15}(e)), and
$\overline {p_3 p_7 p_6}$ and $\overline {p_4 p_7 p_5}$ (Fig. \ref{new15}(f)).
It can be seen that embeddings in Figs.  \ref{new15}(a), \ref{new15}(c), \ref{new15}(d),  \ref{new15}(e) 
and \ref{new15}(f) 
correspond to non-planar graphs, and the  
embedding in Fig.  \ref{new15}(b) 
corresponds to a planar graph.
$\\ \\$
\noindent But this embedding is already present in Fig. \ref{new14}.
So, no new planar graph arises after adding $p_7$ to the visibility embedding in Fig. \ref{new13}(b).
Thus,
one particular planar point-visibility graph of seven vertices is identified (see Fig. \ref{new14}(b)).

 \end{document}